\newtheorem{theorem}{Theorem}[section]
\newtheorem{proposition}[theorem]{Proposition}
\newtheorem{remark}[theorem]{Remark}
\newtheorem{lemma}[theorem]{Lemma}
\newtheorem{corollary}[theorem]{Corollary}
\newtheorem{definition}[theorem]{Definition}
\newtheorem{example}[theorem]{Example}
\numberwithin{equation}{section}
\newcommand{\ii}{\infty}
\newcommand\R{{\ensuremath {\mathbb R} }}
\newcommand\C{{\ensuremath {\mathbb C} }}
\newcommand\N{{\ensuremath {\mathbb N} }}
\newcommand\1{{\ensuremath {\mathds 1} }}
\newcommand\dGamma{{\rm d}\Gamma}
\newcommand{\<}{\langle}
\renewcommand{\>}{\rangle}
\newcommand\nn{\nonumber}
\renewcommand\phi{\varphi}
\newcommand{\bH}{\mathbb{H}}
\newcommand{\gH}{\mathfrak{H}}
\newcommand{\gS}{\mathfrak{S}}
\newcommand{\gK}{\mathfrak{K}}
\newcommand{\wto}{\rightharpoonup}
\newcommand{\cM}{\mathcal{M}}
\newcommand{\cB}{\mathcal{B}}
\newcommand{\cE}{\mathcal{E}}
\newcommand{\cF}{\mathcal{F}}
\newcommand{\cN}{\mathcal{N}}
\newcommand{\cH}{\mathcal{H}}
\newcommand{\eps}{\epsilon}
\newcommand{\F}{\mathcal{F}}
\renewcommand{\epsilon}{\varepsilon}
\newcommand\pscal[1]{{\ensuremath{\left\langle #1 \right\rangle}}}
\newcommand{\norm}[1]{ \left| \! \left| #1 \right| \! \right| }
\DeclareMathOperator{\tr}{{\rm Tr}}
\DeclareMathOperator{\Tr}{{\rm Tr}}
\renewcommand{\ge}{\geqslant}
\renewcommand{\le}{\leqslant}
\renewcommand{\geq}{\geqslant}
\renewcommand{\leq}{\leqslant}
\renewcommand{\tilde}{\widetilde}
\newcommand{\cHcl}{\cH_{\rm cl}}
\title[Derivation of nonlinear Gibbs measures]{Derivation of nonlinear Gibbs measures from many-body quantum mechanics}
\author[M. Lewin]{Mathieu LEWIN}
\address{CNRS \& Universit\'e Paris-Dauphine, CEREMADE (UMR 7534), Place de Lattre de Tassigny, F-75775 PARIS Cedex 16, France} 
\email{mathieu.lewin@math.cnrs.fr}
\author[P.~T. Nam]{Phan Th\`anh NAM}
\address{IST Austria, Am Campus 1, 3400 Klosterneuburg, Austria} 
\email{pnam@ist.ac.at}
\author[N. Rougerie]{Nicolas ROUGERIE}
\address{Universit\'e Grenoble 1 \& CNRS,  LPMMC (UMR 5493), B.P. 166, F-38042 Grenoble, France}
\email{nicolas.rougerie@grenoble.cnrs.fr}
\date{May 12, 2015}
\begin{document}

\begin{abstract}
We prove that nonlinear Gibbs measures can be obtained from the corresponding many-body, grand-canonical, quantum Gibbs states, in a mean-field limit where the temperature $T$ diverges and the interaction strength behaves as $1/T$. We proceed by characterizing the interacting Gibbs state as minimizing a functional counting the free-energy relatively to the non-interacting case. We then perform an infinite-dimensional analogue of phase-space semiclassical analysis, using fine properties of the quantum relative entropy, the link between quantum de Finetti measures and upper/lower symbols in a coherent state basis, as well as Berezin-Lieb type inequalities. Our results cover the measure built on the defocusing nonlinear Schr\"odinger functional on a finite interval, as well as smoother interactions in dimensions $d\geq2$. 
\end{abstract}

\maketitle

\setcounter{tocdepth}{2}
\tableofcontents

\section{Introduction}

Nonlinear Gibbs measures have recently become a useful tool to construct solutions to time-dependent nonlinear Schr\"odinger equations with rough initial data, see for instance~\cite{LebRosSpe-88,Bourgain-94,Bourgain-96,Bourgain-97,Bourgain-00,Tzvetkov-08,BurTzv-08,BurThoTzv-10,ThoTzv-10,Suzzoni-11,ColOh-12,OhQua-13}. These are probability measures which are \emph{formally} defined by
 \begin{equation}
d\mu(u)=z^{-1}e^{-\cE(u)}\,du,
\label{eq:mu_intro}
 \end{equation}
where $\cE(u)$ is the Hamiltonian nonlinear energy and $z$ is an infinite normalization factor. The precise definition of $\mu$ will be discussed below. For bosons interacting through a potential $w$, the energy is
\begin{equation}
\cE(u)=\int_\Omega|\nabla u(x)|^2\,dx+\frac{1}{2}\iint_{\Omega\times\Omega} |u(x)|^2\,|u(y)|^2 w(x-y)\,dx\,dy,
\label{eq:def_cE_intro}
\end{equation}
with $\Omega$ a bounded domain in $\R^d$ and with chosen boundary conditions. The cubic nonlinear Schr\"odinger equation corresponds to $w=c\delta_0$, a Dirac delta. For nicer potentials (say, $w\in L^\ii(\R^d)$), the model is often called \emph{Hartree's functional}. The derivation of such functionals from many-body quantum mechanics has a long history, see for example~\cite{BenLie-83,LieYau-87,LieSeiSolYng-05,GreSei-13,LewNamRou-14,LewNamRou-14c} and references therein.

Our purpose here is to prove that the above nonlinear measures $\mu$ in~\eqref{eq:mu_intro} arise naturally from the corresponding \emph{linear} many-particle (bosonic) Gibbs states, in a mean-field limit where the temperature $T\to\ii$ and the interaction intensity is of order $1/T$. We work in the grand-canonical ensemble where the particle number is not fixed, but similar results are expected to hold in the canonical setting, in dimension $d=1$. In this introduction, we will discuss the non-interacting case $w\equiv0$ in any dimension $d\geq1$, but we will consider the interacting case $w\neq0$ (including $w=\delta_0$) mostly in dimension $d=1$, where the measure $\mu$ is better understood. In dimensions $d\geq2$ our result will require strong assumptions on the interaction $w$ which do not include a translation invariant function $w(x-y)$ and we defer the full statement to Section \ref{sec:interacting}. 

In recent papers~\cite{LewNamRou-14,LewNamRou-14c}, we have studied another mean-field regime where the temperature $T$ is fixed\footnote{Equal to $0$ most of the time, but see~\cite[Section 3.2]{LewNamRou-14} for generalization to fixed $T>0$.}. We worked in the canonical ensemble with a given, diverging, number of particles $N\to\ii$ and an interaction of intensity $1/N$. In this case almost all the particles condense on the minimizers of the energy~$\cE$. More precisely, the limit is a measure $\tilde\mu$ which has its support in the set $\cM$ of minimizers for $\cE$ in the unit sphere of $L^2(\Omega)$. The regime we consider here amounts to taking the temperature $T\to\ii$ at the same time as the average number of particles $N\to\ii$, and we end up with the nonlinear Gibbs measure $\mu$. The employed techniques are related to what we have done in~\cite{LewNamRou-14,LewNamRou-14c}, but dealing with the large temperature limit requires several new tools.

In the core of this article, we consider an abstract situation with a nonlinear energy of the form
$$u\mapsto \pscal{u,hu}_\gH+\frac12\pscal{u\otimes u,w\,u\otimes u}_{\gH\otimes_s \gH}$$
on an abstract Hilbert space $\gH$. Here $h>0$ a self-adjoint operator with compact resolvent on $\gH$ and $w\geq 0$ is self-adjoint on the symmetric tensor product $\gH\otimes_s\gH$. In this introduction we describe our results informally, focusing for simplicity on the example of the physical energy $\cE$ in~\eqref{eq:def_cE_intro}, corresponding to $\gH=L^2(\Omega)$, $h=-\Delta$ and $w=w(x-y)$.

\subsubsection*{The nonlinear Gibbs measures} In order to properly define the measure $\mu$, it is customary to start with the non-interacting case $w\equiv0$. Then the formal probability measure
$$d\mu_0(u)=z_0^{-1}e^{-\int_{\Omega}|\nabla u|^2}\,du$$
is an infinite-dimensional gaussian measure. Indeed, since $\Omega$ is a bounded set, let $(\lambda_j)$ and $(u_j)$ be a corresponding set of eigenvalues and normalized eigenfunctions of $-\Delta$ with chosen boundary conditions\footnote{We assume here $\min\sigma(-\Delta)=\min(\lambda_j)>0$ which is for instance the case of the Dirichlet Laplacian. Other boundary conditions (e.g. Neumann or periodic) may be used but then $-\Delta$ must then be replaced everywhere by $-\Delta+C$ with $C>0$.}. Letting $\alpha_j=\pscal{u_j,u}\in\C$ then $\mu_0$ is the infinite tensor product
$$d\mu_0=\bigotimes_{j\geq1}\left( \frac{\lambda_j}{\pi}e^{-\lambda_j|\alpha_j|^2}\,d\alpha_j\right).$$
This measure is well-defined in $L^2(\Omega)$ in dimension $d=1$ only. In higher dimensions, $\mu_0$ lives on negative Sobolev spaces and it is supported outside of $L^2(\Omega)$, which largely complicates the analysis. In this introduction we will most always assume $d=1$ for simplicity (then $\Omega=(a,b)$ is a finite interval), and only discuss the case $d\geq2$ in the end. 

With the non-interacting measure $\mu_0$ at hand, it is possible to define 
\begin{equation}
d\mu(u):=z_r^{-1}e^{-F_{\rm NL}(u)}\,d\mu_0(u) 
 \label{eq:Gibbs_intro}
\end{equation}
where
\begin{equation}
F_{\rm NL}(u)=\frac12\iint_{\Omega\times\Omega} |u(x)|^2\,|u(y)|^2 w(x-y)\,dx\,dy
\label{eq:F_NL_intro}
\end{equation}
is the nonlinear term in the energy, and 
\begin{equation}
z_r=\int_{L^2(\Omega)} e^{-F_{\rm NL}(u)}\,d\mu_0(u)
\label{eq:def_Z_r_intro}
\end{equation}
is the appropriate normalization factor, called the \emph{relative partition function}. In the defocusing case $F_{\rm NL}\geq0$, the number $z_r$ is always well-defined in $[0,1]$. In order to properly define the measure $\mu$, we however need $z_r>0$ and this requires that 
$F_{\rm NL}$ is finite on a set of positive $\mu_0$-measure. This is the case if, for instance, $d=1$ and $0\leq w\in L^\ii(\R^d)$ or $w=\delta_0$.
Note that $z_r$ is formally equal to $z/z_0$, the ratio of the partition functions of the interacting and non-interacting cases. But both $z$ and $z_0$ are infinite, only $z_r$ makes sense.

\subsubsection*{The quantum model and the mean-field limit}
We now quickly describe the quantum mechanical model which is going to converge to the nonlinear Gibbs measure $\mu$. The proper setting is that of Fock spaces and $k$-particle density matrices, but we defer the discussion of these concepts to the next section. We define the $n$-particle Hamiltonian
$$H_{\lambda,n}=\sum_{j=1}^n (-\Delta)_{x_j}+\lambda\sum_{1\leq k<\ell\leq n}w(x_k-x_\ell)$$
which describes a system of $n$ non-relativistic bosons in $\Omega$, interacting via the potential $w$. This operator acts on the bosonic space $\bigotimes_s^nL^2(\Omega)$, that is on $L^2_s (\Omega ^n)$,  the subspace of $L^2(\Omega^n)$ containing the functions which are symmetric with respect to permutations of their variables. The parameter $\lambda$ is used to vary the intensity of the interaction. Here we will take
$$\lambda\sim 1/T$$
with $T$ being the temperature of the sample, which places us in a mean-field regime, as will be explained later.

At temperature $T>0$, in the grand-canonical setting where the number of particles is not fixed but considered as a random variable, the partition function is
$$Z_\lambda(T)=1+\sum_{n\geq1} \tr\left[\exp\left(-\frac{H_{\lambda,n}}{T}\right)\right]$$
where the trace is taken on the space $\bigotimes_s^nL^2(\Omega)$, but we do not emphasize it in our notation. The free-energy of the system is then $ - T \log Z_\lambda(T).$
On the other hand the non-interacting partition function is
$$Z_0(T)=1+\sum_{n\geq1} \tr\left[\exp\left(-\frac{H_{0,n}}{T}\right)\right].$$
When $\Omega$ is bounded, $Z_0(T)$ is finite for every $T>0$. If $0\leq w\in L^\ii(\Omega)$ or if $w=\delta_0$ and $d=1$, then $Z_\lambda(T)$ is also a well-defined number. Both $Z_\lambda(T)$ and $Z_0(T)$ diverge very fast when $T\to\ii$. One of our results (see Theorem~\ref{thm:main} below) says that
\begin{equation}
\boxed{\lim_{\substack{T\to\ii\\ \lambda T\to1}}\frac{Z_\lambda(T)}{Z_0(T)}=z_r}
\label{eq:limit_intro}
\end{equation}
where $z_r$ is the nonlinear relative partition function, defined in~\eqref{eq:def_Z_r_intro} above. The result~\eqref{eq:limit_intro} is in agreement with the intuitive formula $z_r=z/z_0$ explained before. The limit~\eqref{eq:limit_intro} will be valid in a general situation which includes the cases $0\leq w\in L^\ii(\R)$ and $w=\delta_0$ in dimension $d=1$. 

The limit~\eqref{eq:limit_intro} does not characterize the measure $\mu$ uniquely, but in the case $d=1$ we are able to prove convergence of the density matrices of the grand-canonical Gibbs state. Those are obtained by tracing out all variables but a finite number. Namely, for any fixed $k\geq1$, we will show that 
\begin{multline}
\lim_{\substack{T\to\ii\\ \lambda T\to1}}\frac{1}{T^kZ_\lambda(T)}\sum_{n\geq k}\frac{n!}{(n-k)!}\tr_{k+1\to n}\left[\exp\left(-\frac{H_{\lambda,n}}{T}\right)\right]\\
=\int_{L^2(\Omega)} |u^{\otimes k}\rangle\langle u^{\otimes k}|\, d\mu(u),
\label{eq:limit_intro_DM}
\end{multline}
strongly in the trace-class, where $\tr_{k+1\to n}$ is a notation for the partial trace (see~\eqref{eq:def_DM_partial} below).
It will be shown that there can be only one measure $\mu$ for which the limit~\eqref{eq:limit_intro_DM} holds for all $k\geq1$, and our statement is that this measure is the nonlinear Gibbs measure introduced above in~\eqref{eq:Gibbs_intro}.

\subsubsection*{Strategy of proof: the variational formulation}
Our method for proving~\eqref{eq:limit_intro} and~\eqref{eq:limit_intro_DM} is variational, based on Gibbs' principle. The latter states that, for a self-adjoint operator $A$ on a Hilbert space $\gK$,
\begin{equation}\label{eq:intro_Gibbs}
 - \log \Big(\tr_\gK [e^{-A}]\Big) = \inf_{\substack{0\leq M=M^*\\ \tr_\gK M=1}}  \Big\{ \tr [AM] +  \tr \left[ M\log M\right] \Big\} 
\end{equation}
with infimum uniquely achieved by the Gibbs state $M_0=e^{-A}/\tr_\gK e^{-A}$. The reader can simply think of a finite dimensional space $\gK$ in which case $A$ and $M$ are just hermitian matrices.
The two terms in the functional to be minimized are respectively interpreted as the energy and the opposite of the entropy. Using~\eqref{eq:intro_Gibbs} for $A$ and $A+B$ allows us to write
\begin{equation}\label{eq:intro_Gibbs_relative}
 - \log \left(\frac{\tr_\gK [e^{-A-B}]}{\tr_\gK [e^{-A}]}\right) = \inf_{\substack{0\leq M=M^*\\ \tr_\gK M=1}}  \Big\{ \cH(M,M_0) +\tr_{\gK}[BM]\Big\} 
\end{equation}
where
$$\cH(M,M_0)=\tr_\gK\big[M\,(\log M-\log M_0)\big]$$
is the (von Neumann) \emph{quantum relative entropy}. 

In our particular setting,~\eqref{eq:intro_Gibbs} becomes
\begin{equation}\label{eq:intro free ener T}
 -T \log (Z_\lambda(T)) = \inf_{\substack{\Gamma\geq0\\ \tr\Gamma=1}}  \Big\{ \tr [-\Delta \Gamma ^{(1)}] + \lambda \tr [w \Gamma ^{(2)}]+T \tr \left[ \Gamma \log \Gamma \right] \Big\},
\end{equation}
where $\Gamma$ is varied over all trace-class operators on the Fock space and $\Gamma ^{(1)}$, $\Gamma^{(2)}$ are its one-particle and two-particle density matrices respectively (this will be explained in details below). Just as in~\eqref{eq:intro_Gibbs_relative}, this variational principle may be rewritten as
\begin{equation}
-\log\left(\frac{Z_\lambda (T)}{Z_0(T)}\right)=\inf_{\substack{0\leq \Gamma=\Gamma^*\\ \tr\Gamma=1}}\Big\{\cH(\Gamma,\Gamma_0)+\frac{\lambda}{T}\tr \big[\Gamma^{(2)}w\big]\Big\}
\label{eq:variational-Gamma}
\end{equation}
i.e. $-T \log(Z_\lambda(T)/Z_0(T))$ is the free energy of the interacting system, counted relatively to the non-interacting one. The grand-canonical Gibbs state $\Gamma_T$ is the unique state achieving the above infimum .

In a similar manner, we have 
\begin{equation}
-\log(z_r)=\inf_{\substack{\nu\ \text{probability}\\ \text{measure on $L^2(\Omega)$}}}\left\{\cHcl(\nu,\mu_0)+\int_{L^2(\Omega)} F_{\rm NL}(u)\,d\nu(u)\right\}
\label{eq:Gibbs_Hartree}
\end{equation}
where 
$$
\cH_{\rm cl}(\nu,\mu_0):=\int_{L^2(\Omega)} \left(\frac{d\nu}{d\mu_0}\right) \log \left(\frac{d\nu}{d\mu_0}\right) d\mu_0
$$
is the \emph{classical relative entropy} of two measures. Relating the variational principles~\eqref{eq:variational-Gamma} and~\eqref{eq:Gibbs_Hartree} is now a semiclassical problem, reminiscent of situations that are well-studied \emph{in finite dimensional spaces}, see e.g.~\cite{Lieb-73,Simon-80,Gottlieb-05}. The main difficulty here is that we are dealing with the infinite dimensional phase-space $L^2 (\Omega)$. It will be crucial for our method to study the relative free-energy~\eqref{eq:variational-Gamma} and not the original free-energy directly.

Our strategy is based on a measure $\nu$ which can be constructed from the sequence of infinite-dimensional quantum Gibbs states~\cite{Ammari-HDR,AmmNie-08,AmmNie-09,AmmNie-11,Stormer-69,HudMoo-75,LewNamRou-14,Rougerie-cdf}, in the same fashion as semi-classical measures in finite dimension~\cite{ComRob-12}. This so-called \emph{de Finetti measure} has already played a crucial role in our previous works~\cite{LewNamRou-14,LewNamRou-14c} and it will be properly defined in Section~\ref{sec:deFinetti}. The idea is to prove that $\nu$ must solve the variational principle~\eqref{eq:Gibbs_Hartree}. Then, by uniqueness it must be equal to $\mu$ and the result follows.

Roughly speaking, the philosophy is that the grand-canonical quantum Gibbs state behaves as 
$$ \Gamma_T \approx \int \big| \xi (\sqrt{T}u) \big\rangle \big\langle \xi (\sqrt{T} u ) \big|\, d\nu (u)$$
in a suitable weak sense, where $\xi (v)$ is the coherent state in Fock space built on $v$, which has expected particle number $\norm{v} ^2$. In particular, for the reduced density matrices, one should have in mind that
$$ \Gamma_T ^{(k)} \approx  \frac{T ^k}{k!} \int |u  ^{\otimes k} \rangle \langle u ^{\otimes k} |\, d\nu (u).$$

\subsubsection*{Higher dimensions}
As we said the convergence results~\eqref{eq:limit_intro} and~\eqref{eq:limit_intro_DM} will be proved in an abstract situation which includes the defocusing NLS and Hartree cases~\eqref{eq:def_cE_intro} when $d=1$, but not $d\geq2$. When $d\geq2$, several difficulties occur. The first is that the free Gibbs measure $\mu_0$ does not live over $L^2(\Omega)$, but rather over negative Sobolev spaces $H^{-s}(\Omega)$ with $s>d/2-1$. Because the nonlinear interaction term $F_{\rm NL}$ in~\eqref{eq:F_NL_intro} usually does not make any sense in negative Sobolev spaces, the Gibbs measure $\mu$ is also ill-defined and a regularization has to be introduced. 

In this paper we do not consider the problem of renormalizing translation-invariant interactions $w(x-y)$. For simplicity, we assume that $w$ is a smooth enough, say finite rank, operator on $L^2(\Omega)\otimes_s L^2(\Omega)$, which can be thought of as a regularization of more physical potentials. Under this assumption, we are able to prove the same convergence~\eqref{eq:limit_intro} of the relative partition function (see Theorem~\ref{thm:main2} below). However, we are only able to prove the convergence~\eqref{eq:limit_intro_DM} for $k=1$. We in fact prove that the de Finetti measure of the quantum Gibbs state is $\mu$, but the convergence of higher density matrices does not follow easily. The difficulty comes from the divergence of the average particle number which grows much faster than $T$, in contrast to the case $d=1$ where it behaves like $T$. In particular, the density matrices (divided by an appropriate power of $T$) are all unbounded in the trace-class. We believe they are bounded in a higher 
Schatten space, a claim we could prove for any $k$ in the non-interacting case but only for $k=1$ in the interacting case. It is an interesting open problem to extend our result to more physical interactions and to $k\geq2$. 

\subsubsection*{Open problems and the link with QFT}
Nonlinear Gibbs measures have also played an important role in constructive quantum field theory (QFT)~\cite{GliJaf-87,LorHirBet-11,VelWig-73}. By an argument similar to the Feynman-Kac formula, one can write the (formal) grand-canonical partition function of a quantum field in space dimension~$d$, by means of a (classical) nonlinear Gibbs measure in dimension $d+1$, where the additional variable plays the role of time~\cite{Nelson-73,Nelson-73b}. The rigorous construction of quantum fields then sometimes boils down to the proper definition of the corresponding nonlinear measure. This so-called \emph{Euclidean approach to QFT} was very successful for some particular models and the literature on the subject is very vast (see, e.g.,~\cite{Nelson-73,AlbHoe-74,Simon-74,GueRosSim-75} for a few famous examples and~\cite{Summers-12} for a recent review). Like here, the problem becomes more and more difficult when the dimension grows. The main difficulties are to define the measures in the whole space and to 
renormalize the (divergent) physical interactions. We have not yet tried to renormalize physical interactions in our context or to take the thermodynamic limit at the same time as $T\to\ii$. These questions are however important and some tools from constructive QFT could then be useful.

In the present paper, the situation is different from that of QFT, since we derive the $d$-dimensional classical Gibbs measure from the $d$-dimensional quantum problem in a mean-field type limit. Our goal is not to address the delicate construction of the measure for rough interactions, and we always put sufficiently strong assumptions on $w$ in order to avoid any renormalization issue. 

\smallskip

Some specific tools are required in order to put the aforementioned semiclassical intuition on a rigorous basis and we shall need:

 \smallskip

\noindent$\bullet$ To revisit the construction of the de Finetti measures, beyond what has been done in~\cite{AmmNie-08,AmmNie-09,AmmNie-11,LewNamRou-14}. Our approach uses some fine properties of a particular construction of the de Finetti measure, following~\cite{ChrKonMitRen-07,Chiribella-11,Harrow-13,LewNamRou-14b}.

\noindent$\bullet$ To relate de Finetti measures to lower symbols (or Husimi functions) in a coherent state basis, again following~\cite{ChrKonMitRen-07,Chiribella-11,Harrow-13,LewNamRou-14b}. 

\noindent$\bullet$ To prove a version of the semiclassical (first) Berezin-Lieb inequality~\cite{Berezin-72,Lieb-73,Simon-80} adapted to the relative entropy. This uses fundamental properties of the quantum relative entropy. 

 \smallskip

In our paper we always assume that the interaction is repulsive, $w\geq0$, and another open problem is to derive the nonlinear Gibbs measures when $w$ has no particular sign. For instance, in dimension $d=1$ one could obtain the focusing nonlinear Schr\"odinger model of~\cite{CarFroLeb-14,LebRosSpe-88}.

\subsubsection*{Organization of the paper}
In the next section we quickly review the necessary formalism of Fock spaces and density matrices. In Section~\ref{sec:non-interacting}, we consider the non-interacting case $w\equiv0$. After having defined de Finetti measures in Section~\ref{sec:deFinetti}, we turn to the statement of our main result~\eqref{eq:limit_intro_DM} when $w\neq0$ in Section~\ref{sec:interacting}. Sections~\ref{sec:deFinetti_coherent} and~\ref{sec:entropy} are devoted to the practical construction of de Finetti measures using coherent states and a link between the quantum and classical relative entropies. Finally, in Section~\ref{sec:proof}, we provide the proofs of the main theorems.

\subsubsection*{Acknowledgment.} We thank J\"urg Fr\"ohlich, Alessandro Giuliani, Sylvia Serfaty, Anne-Sophie de Suzzoni and Nikolay Tzvetkov for stimulating discussions. Part of this work has been carried out during visits at the \emph{Institut Henri Poincar\'e} (Paris) and the \emph{Institut Mittag-Leffler} (Stockholm). We acknowledge financial support from the \emph{European Research Council} (M.L., ERC Grant Agreement MNIQS 258023) and the \emph{People Programme / Marie Curie Actions} (P.T.N., REA Grant Agreement 291734) under the European Community's Seventh Framework Programme, and from the \emph{French ANR} (Projets NoNAP ANR-10-BLAN-0101 \& Mathosaq ANR-13-JS01-0005-01).

\section{Grand-canonical ensemble I}\label{sec:Fock}

In this section we quickly describe the grand-canonical theory based on Fock spaces, which is necessary to state our result. We follow here the presentation of~\cite[Sec.~1]{Lewin-11}. Some more involved results on Fock spaces which are used in the core of the paper are described in Section~\ref{sec:Fock2}.

\subsubsection*{Fock space} Let $\gH$ be any fixed separable Hilbert space. To deal with systems with a large number of particles, it is often convenient to work in the (bosonic) Fock space
$$ \F(\gH)= \bigoplus_{n=0}^\infty \bigotimes_s^n \gH= \C \oplus \gH \oplus \big(\gH\otimes_s\gH\big) \oplus ...$$ 
where $\bigotimes_s^n \gH$ is the symmetric tensor product of $n$ copies of $\gH$. In the applications $\gH=L^2(\Omega)$ and $\bigotimes_s^n \gH$ is the subspace of $L^2(\Omega^n)$ containing the functions which are symmetric with respect to exchanges of variables. We always consider $\bigotimes_s^n\gH$ as a subspace of $\bigotimes^n\gH$ and, for every $f_1,...,f_n\in\gH$, we define the symmetric tensor product as
$$f_1\otimes_s\cdots\otimes_s f_n:=\frac{1}{\sqrt{n!}}\sum_{\sigma}f_{\sigma(1)}\otimes\cdots \otimes f_{\sigma(n)}.$$
For functions, we have for instance 
$$(f\otimes g)(x,y)=f(x)g(y)$$
and 
$$(f\otimes_s g)(x,y)=\frac{1}{\sqrt{2}}\left(f(x)g(y)+f(y)g(x)\right).$$

We recall that any orthonormal basis $\{u_i\}$ of $\gH$ furnishes an orthogonal basis $\{u_{i_1}\otimes_s\cdots\otimes_s u_{i_n}\}_{i_1\leq\cdots \leq i_n}$ of $\bigotimes_s^n\gH$. The norms of the basis vectors are 
\begin{equation}
\norm{u_{i_1}\otimes_s \cdots \otimes_s u_{i_k}}^2=m_1! m_2!\cdots.
\label{eq:norm_vectors}
\end{equation}
Here $m_1$ is the number of indices which are equal to $i_1$, that is, $i_1=\cdots =i_{m_1}<i_{m_1+1}$, and $m_2$ is the number of indices that are equal to $i_{m_1+1}$ (if $m_1<n$). The other $m_j$'s are defined similarly.

All the operators in this paper will be implicitly restricted to the symmetric tensor product $\bigotimes^n_s\gH$. For instance, if $A$ is a bounded operator on $\gH$, then $A^{\otimes n}$ denotes the restriction of the corresponding operator, initially defined on $\bigotimes^n\gH$, to the bosonic subspace $\bigotimes^n_s\gH$. It acts as
$$A^{\otimes n} f_1\otimes_s\cdots\otimes_s f_n= (Af_1)\otimes_s\cdots \otimes_s (Af_n).$$
We remark that if $A\geq0$ with eigenvalues $a_i\geq0$, then
\begin{equation}
\tr_{\bigotimes_s^n \gH}\big[A^{\otimes n}\big] \leq \tr_{\bigotimes^n \gH}\big[A^{\otimes n}\big]=\big[\tr(A)\big]^n
\label{eq:trace_power}
\end{equation}
with equality if and only if $A$ has rank 1.\footnote{${\rm Rank}(A)=1$ is also a necessary and sufficient condition for $A^{\otimes n}$ to be completely supported on the bosonic subspace. A reformulation is that the only bosonic factorized states over $\bigotimes^n\gH$ are pure~\cite[Sec.~4]{HudMoo-75}.}
Similarly, for $A_k$ an operator on $\bigotimes^k_s\gH$, we define the operator on $\bigotimes^n_s\gH$
\begin{equation}
A_k\otimes_s\1_{n-k} := {n\choose k}^{-1}\sum_{1\leq  i_1< \cdots< i_k\leq n}(A_k)_{i_1...i_k}
 \label{eq:A_k}
\end{equation}
where $(A_k)_{i_1...i_k}$ acts on the $i_1,..,i_k$th variables. If $A_k\geq0$, then 
\begin{align}
 \tr_{\bigotimes_s^n \gH}\big[A_k\otimes_s\1_{n-k}\big]& \leq \tr_{\bigotimes_s^k \gH}\big[A_k\big]\dim \bigotimes_s^{n-k} \gH \nonumber
 \\&={{n-k+d-1}\choose{d-1}}\tr_{\bigotimes_s^k \gH}\big[A_k\big]
 \label{eq:trace_A_k}
\end{align}
with $d=\dim(\gH)$.

It is useful to introduce the \emph{number operator}
\begin{equation}
\cN:=0 \oplus 1 \oplus 2 \oplus... = \bigoplus_{n=1}^\infty n\; \1_{\bigotimes_s^n\gH}
\end{equation}
which counts how many particles are in the system.

\subsubsection*{States and associated density matrices}
A (mixed) quantum state is a self-adjoint operator $\Gamma \ge 0$ on $\F(\gH)$ with $\tr_{\cF(\gH)} \Gamma=1$.  A state is called \emph{pure} when it is a rank-one orthogonal projection, denoted as $\Gamma=|\Psi\>\<\Psi|$ with $\norm{\Psi}_{\cF(\gH)}^2=1$. In particular, the vacuum state $|0\rangle \langle 0|$ with $|0\rangle= 1\oplus 0 \oplus 0 ...$ is the state with no particle at all.
In principle, states do not necessarily commute with the number operator $\cN$, that is, they are not diagonal with respect to the direct sum decomposition of $\cF(\gH)$. In this paper, we will however only deal with states commuting with $\cN$, which can therefore be written as a (infinite) block-diagonal operator
$$\Gamma=G_0\oplus G_1\oplus\cdots$$
where each $G_n\geq0$ acts on $\bigotimes_s^n\gH$.
For such a quantum state $\Gamma$ on $\F(\gH)$ commuting with $\cN$ and for every $k=0,1,2,...$, we then introduce the $k$-particle density matrix $\Gamma^{(k)}$, which is an operator acting on $\bigotimes_s^k\gH$. A definition of $\Gamma^{(k)}$ in terms of creation and annihilation operators is provided in Section~\ref{sec:Fock2}. Here we give an equivalent definition based on partial traces: 
\begin{equation}
\boxed{\Gamma^{(k)}:=\sum_{n\geq k}{n \choose k} \tr_{k+1\to n}(G_n).}
\label{eq:def_DM_partial}
\end{equation}
The notation $\tr_{k+1\to n}$ stands here for the partial trace associated with the $n-k-1$ variables. A different way to write~\eqref{eq:def_DM_partial} is by duality: 
$$\tr_{\bigotimes_s^k\gH}\big(A_k \Gamma^{(k)}\big)=\sum_{n\geq k} {n \choose k}\tr_{\bigotimes_s^n\gH}(A_k \otimes_s \1_{n-k}G_n)$$
for any bounded operator $A_k$ on $\bigotimes_s^k\gH$, where $A_k \otimes_s \1_{n-k}$ was introduced in~\eqref{eq:A_k}. The density matrices $\Gamma^{(k)}$ are only well-defined under suitable assumptions on the state $\Gamma$, because of the divergent factor $n\choose k$ in the series. For instance, under the assumption that $\Gamma$ has a finite moment of order $k$,
$$\tr_{\cF(\gH)}(\cN^k\Gamma)<\ii,$$
then $\Gamma^{(k)}$ is a well-defined non-negative trace-class operator, with  
\begin{equation}
\tr_{\bigotimes_s^k\gH} \Gamma^{(k)}=\sum_{n\geq k}{n \choose k} \tr(G_n) = \Tr_{\cF(\gH)} \left[ {\cN \choose k} \Gamma \right] \leq\frac{\Tr_{\cF(\gH)} \left[ \cN^k \Gamma \right]}{k!}<\ii.
\label{eq:estim_DM_N_k} 
\end{equation}
We remark that the density matrices are defined in the same manner if $\Gamma$ does not commute with $\cN$, that is, by definition $\Gamma^{(k)}$ does not depend on the off-diagonal blocks of $\Gamma$.

\subsubsection*{Observables} An observable is a self-adjoint operator $\bH$ on $\cF(\gH)$ and the corresponding expectation value in a state $\Gamma$ is $\tr(\bH\Gamma)$. For a pure state $\Gamma=|\Psi\>\<\Psi|$ we find $\pscal{\Psi,\bH\Psi}$.

Several natural observables on the Fock space may be constructed from simpler operators on $\gH$ and $\gH\otimes_s\gH$.
For instance, if we are given a self-adjoint operator $h$ on the one-particle space $\gH$, then we usually denote by
$$\sum_{j=1}^nh_j=n\,h\otimes_s \1_{n-1}$$
the operator which acts on $ \bigotimes_s^n\gH$. On the Fock space $\cF(\gH)$, we can gather all these into one operator
\begin{equation}
\bH_0=0\oplus\bigoplus_{n\geq1}\left(\sum_{j=1}^nh_j\right)
\label{eq:one_particle_Hamiltonians}
\end{equation}
which is often written as $\dGamma(h)$ in the literature (this notation should not be confused with our choice of $\Gamma$ for the quantum state).
Since $\bH_0$ is a sum of operators acting on one particle at a time, it is often called a \emph{one-particle operator}.
A simple calculation shows that the expectation value of any state can be expressed in terms of the one-particle density matrix as
$$\tr_{\cF(\gH)}\left(\bH_0 \Gamma\right)=\tr_\gH \big(h\Gamma^{(1)}\big).$$

Similarly, if we are given a self-adjoint operator $w$ on the two-particle space $\gH\otimes_s\gH$, we may introduce its second-quantization
$$\mathbb{W}:=0\oplus0\oplus\bigoplus_{n\geq2}\left(\sum_{1\leq j<k\leq n}w_{ij}\right)=0\oplus0\oplus\bigoplus_{n\geq2}\left(\frac{n(n-1)}{2}w\otimes_s \1_{n-2}\right)$$
where $w_{ij}$ acts on the $i$th and $j$th particles. The corresponding expectation value can then be expressed in terms of the two-particle density matrix as
$$\tr_{\cF(\gH)}\big(\mathbb{W} \Gamma\big)=\tr_{\gH\otimes_s\gH} \big(w\Gamma^{(2)}\big).$$

\subsubsection*{Gibbs states and relative entropy}
The Gibbs state associated with a Hamiltonian $\bH$ on $\cF(\gH)$ at temperature $T>0$ is defined by
$$\Gamma_T=Z^{-1}\exp(-\bH/T),$$
where
$$Z=\tr_{\cF(\gH)}\big[\exp(-\bH/T)\big]$$
is the corresponding \emph{partition function}.
Of course, this requires to have $\tr_{\cF(\gH)}[\exp(-\bH/T)]<\ii$. Gibbs' variational principle states that $\Gamma_T$ is the unique minimizer of the free energy, that is, solves the minimization problem
$$\inf_{\substack{\Gamma\geq 0\\ \tr_{\cF(\gH)}\Gamma=1}}\Big\{\tr_{\cF(\gH)}\big(\bH\Gamma\big)+T\tr_{\cF(\gH)}\big(\Gamma\log\Gamma\big)\Big\}.$$
Indeed, we have, for any state $\Gamma$, 
\begin{multline*}
\tr_{\cF(\gH)}\big(\bH\Gamma\big)+T\tr_{\cF(\gH)}\big(\Gamma\log\Gamma\big)-\tr_{\cF(\gH)}\big(\bH\Gamma_T\big)-T\tr_{\cF(\gH)}\big(\Gamma_T\log\Gamma_T\big)\\
=T\tr_{\cF(\gH)}\big(\Gamma(\log\Gamma-\log\Gamma_T)\big):=T\cH(\Gamma,\Gamma_T)\geq0
\end{multline*}
which is called the \emph{relative entropy} of $\Gamma$ and $\Gamma_T$. In this paper we will use several important properties of the relative entropy, that will be recalled later in Section~\ref{sec:entropy}. The first is that $\cH(A,B)\geq0$ with equality if and only if $A=B$, which corresponds to the property that $\Gamma_T$ is the unique minimizer of the free energy. A proof of this may be found in~\cite[Thm~2.13]{Carlen-10}, for instance.

\subsubsection*{Free Gibbs states} 
An important role is played by Gibbs states associated with one-particle Hamiltonians $\bH_0$ in~\eqref{eq:one_particle_Hamiltonians}. It is possible to compute the partition function and the density matrices of such states, as summarized in the following well-known

\begin{lemma}[\textbf{Free Gibbs states}]\label{lem:quasi-free}\mbox{}\\
Let $T>0$ and $h>0$ be a self-adjoint operator on $\gH$ such that 
$$\tr_\gH [\exp(-h/T)]<\ii.$$
Then the partition function of the Gibbs state associated with $\bH_0$ in~\eqref{eq:one_particle_Hamiltonians} is 
\begin{equation}
-\log\tr_{\cF(\gH)}\Big[\exp(-\dGamma(h)/T)\Big]= \tr_\gH\Big[\log(1-e^{-h/T})\Big]
\label{eq:partition_quasi_free}
\end{equation}
and the corresponding density matrices are given by 
\begin{equation}\label{eq:DM_quasi_free}
\left[\frac{e^{-\dGamma(h)/T}}{\tr_{\cF(\gH)}\big(e^{-\dGamma(h)/T}\big)}\right]^{(k)}=\left(\frac{1}{e^{h/T}-1}\right)^{\otimes k}.
\end{equation}
\end{lemma}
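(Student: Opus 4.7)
My plan is to diagonalize $h$ and reduce everything to a countable tensor product of one-mode thermal states. Let $\{u_j\}_{j\geq 1}$ be an orthonormal eigenbasis of $h$ with $hu_j=\lambda_j u_j$; the hypothesis $\tr_\gH e^{-h/T}<\infty$ gives $\sum_j e^{-\lambda_j/T}<\infty$. The occupation-number vectors $|\mathbf{n}\rangle$, indexed by finitely supported sequences $\mathbf{n}=(n_j)_{j\geq 1}$ with $\sum_j n_j=n$, form an orthogonal basis of $\bigotimes_s^n\gH$ (with squared norms $\prod_j n_j!$ from~\eqref{eq:norm_vectors}) and diagonalize $\dGamma(h)$ with eigenvalue $\sum_j n_j\lambda_j$. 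Equivalently, under the canonical isomorphism $\F(\gH)\cong\bigotimes_j\F(\C u_j)$, $\dGamma(h)$ becomes a sum of commuting single-mode number operators, so the free Gibbs state factorizes as $\Gamma_0=\bigotimes_j\rho_j$, with each $\rho_j$ the standard one-mode thermal state on $\F(\C u_j)$.

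For~\eqref{eq:partition_quasi_free} I just sum Boltzmann weights, applying Tonelli:
$$\tr_{\F(\gH)}\,e^{-\dGamma(h)/T}=\sum_{\mathbf{n}}e^{-\sum_j n_j\lambda_j/T}=\prod_j\sum_{n\geq 0}e^{-n\lambda_j/T}=\prod_j(1-e^{-\lambda_j/T})^{-1},$$
the infinite product being absolutely convergent thanks to $\sum_j e^{-\lambda_j/T}<\infty$. Taking $-\log$ turns the product into a sum, which by the spectral theorem is $\tr_\gH\log(1-e^{-h/T})$.

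For~\eqref{eq:DM_quasi_free} I test against $k$-body operators. At $k=1$, the defining identity~\eqref{eq:def_DM_partial} combined with $n\,A\otimes_s\1_{n-1}=\sum_{i=1}^n A_i$ gives $\tr_\gH(A\,\Gamma_0^{(1)})=\tr_{\F(\gH)}(\dGamma(A)\,\Gamma_0)$, so choosing $A=|u_j\rangle\langle u_i|$ and exploiting the mode factorization forces $\langle u_i,\Gamma_0^{(1)} u_j\rangle$ to vanish unless $i=j$, with diagonal value the one-mode Bose--Einstein mean $(1-e^{-\lambda_i/T})\sum_{n\geq 0}n\,e^{-n\lambda_i/T}=(e^{\lambda_i/T}-1)^{-1}$; hence $\Gamma_0^{(1)}=(e^{h/T}-1)^{-1}$. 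For $k\geq 2$, the analogous pairing in~\eqref{eq:def_DM_partial} against the rank-one operator $A_k=|u_{j_1}\otimes_s\cdots\otimes_s u_{j_k}\rangle\langle u_{i_1}\otimes_s\cdots\otimes_s u_{i_k}|$, combined once more with $\Gamma_0=\bigotimes_\ell\rho_\ell$, reduces to a permanent of single-mode moments, which is precisely the matrix element of $\gamma^{\otimes k}$ for $\gamma=(e^{h/T}-1)^{-1}$ on $\bigotimes_s^k\gH$ (the tensor power preserves the symmetric subspace because $\gamma$ is diagonal in $\{u_j\}$). The one piece of care needed is to reconcile the combinatorial prefactor $\binom{n}{k}$ appearing in~\eqref{eq:def_DM_partial} with the factorials produced by the occupation-number norms~\eqref{eq:norm_vectors}; the remaining steps---Fubini, convergence of the infinite tensor product---are routine once one first truncates to finitely many modes and passes to the limit using $\tr_\gH e^{-h/T}<\infty$.
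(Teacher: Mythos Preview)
Your approach is correct and essentially the same as the paper's: both exploit the mode factorization $\cF(\gH)\simeq\bigotimes_j\cF(\C u_j)$ to reduce the partition function and the density matrices to single-mode computations. The only cosmetic difference is that the paper works with the creation/annihilation characterization~\eqref{eq:def_DM} of $\Gamma^{(k)}$ and computes $\tr_{\cF(\C u_i)}\big[(a_i^\dagger)^m(a_i)^{m}e^{-\lambda_i a_i^\dagger a_i/T}\big]=m!\,(e^{\lambda_i/T}-1)^{-m}(1-e^{-\lambda_i/T})^{-1}$ directly, which is exactly the single-mode moment underlying your ``permanent'' description; this makes the $k\geq 2$ bookkeeping you left implicit fully explicit.
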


We recall that $A^{\otimes k}$ is by definition restricted to the symmetric subspace $\otimes_s^k\gH$. The proof uses some algebraic properties of Fock spaces and it is recalled in Appendix~\ref{sec:proof_quasi_free}. Note that the free Gibbs states belong to the class of \emph{quasi-free states} and that a generalization of Lemma~\ref{lem:quasi-free}, Wick's theorem, holds for this larger class of states~\cite{BacLieSol-94}.

\section{Derivation of the free Gibbs measures}\label{sec:non-interacting}

After these preparations, we start by studying in this section the limit $T\to\ii$ in the non-interacting case $w\equiv0$. The argument is based on a calculation using the properties of quasi-free states in Fock space, which we have recalled in Lemma~\ref{lem:quasi-free}. 

\subsection{Non-interacting Gaussian measure}
Here we review the construction of Gaussian measures. Most of the material from this section is well-known (see for instance~\cite{Skorokhod-74,Bogachev}, \cite{GliJaf-87,VelWig-73,Simon-74} in Constructive Quantum Field Theory, or~\cite{Tzvetkov-08} for the nonlinear Schrödinger equation), but we give some details for the convenience of the reader.

Let $h>0$ be a self-adjoint operator on a separable Hilbert space $\gH$ such that $h$ has compact resolvent. Let $\{\lambda_i\}_i$ be the eigenvalues of $h$ and let $\{u_i\}_i$ be the corresponding eigenvectors. 
We introduce a scale of spaces of Sobolev type defined by 
\begin{equation}
\gH^s:=D(h^{s/2})=\bigg\{u=\sum_{j\geq1} \alpha_j\,u_j\ :\ \|u\|_{\gH^{s}}^2:=\sum_{j\geq1}\lambda_j^s|\alpha_j|^2<\ii\bigg\}\subset \gH,
\end{equation}
for $s\geq0$, and by $\gH^{-s}=(\gH^s)'$ otherwise. The Hilbert space $\gH$($=\gH^0$) can be identified with $\ell^2(\N, \C)$ through the unitary mapping
$u \mapsto \{\alpha_j\}_{j\geq1}$, with $\alpha_j=\langle u_j, u \rangle$. The spaces $\gH^s$ are then isomorphic to weighted $\ell^2$ spaces,
$$\gH^s\simeq \bigg\{\{\alpha_j\}_{j\geq1}\subset\C\ :\ \sum_{j\geq1}\lambda_j^s|\alpha_j|^2<\ii\bigg\},\quad s\in\R,$$
and we will always make this identification for simplicity. Here we implicitly assumed that $\dim\gH=\ii$; in principle $\gH$ could also be finite-dimensional, which is even simpler.

We can now associate with $h$ a Gaussian probability measure $\mu_0$ defined by
\begin{align}\label{eq:def_mu0}
\boxed{d\mu _0(u) := \mathop  \bigotimes \limits_{i = 1}^\infty  \left( \frac{\lambda_i}{\pi}e^{ - {\lambda} _i|\alpha_i|^2}\,d\alpha_i \right).}
\end{align}
Here $d\alpha=d\Re(\alpha)\,d\Im(\alpha)$ is the Lebesgue measure on $\C\simeq\R^2$. 
The formula~\eqref{eq:def_mu0} must be interpreted in the sense that the cylindrical projection $\mu_{0,K}$ of $\mu_0$ on the finite-dimensional space $V_K$ spanned by $u_1,...,u_K$ is the probability measure on $V_K$
$$d\mu_{0,K}(u)=\prod_{i = 1}^K  \left( \frac{\lambda_i}{\pi}e^{ - {\lambda} _i|\alpha_i|^2}\,d\alpha_i \right),$$
for every $K\geq1$. By~\cite[Lemma 1]{Skorokhod-74}, this defines a unique measure $\mu_0$ if and only if the $\mu_{0,K}$'s satisfy the tightness condition
\begin{equation}
\lim_{R\to\ii}\sup_{K}\mu_{0,K}\big(\{u\in V_K\ :\ \|u\|\geq R\}\big)=0.
\label{eq:tightness_mu_0}
\end{equation}
This condition is not always verified in the Hilbert space $\gH$, hence the need to change the norm used in~\eqref{eq:tightness_mu_0}. A simple calculation shows that
$$\mu_{0,K}\big(\{u\in V_K\ :\ \|u\|_{ \gH^{1-p}}\geq R\}\big)\leq R^{-2}\int_{V_K}\norm{u}_{\gH^{1-p}}^2\,d\mu_{0,K}(u)=R^{-2}\sum_{j=1}^K \lambda_j^{-p}.$$
Therefore, if we assume that 
$$\sum_{j\geq1} \frac{1}{\lambda_j^{p}}=\tr_\gH\left(\frac{1}{h^{p}}\right)<\ii,$$
for some $p>0$, the tightness condition~\eqref{eq:tightness_mu_0} is verified in $\gH^{1-p}$ and $\mu_0$ is well defined in this space.

The so-defined measure $\mu_0$ satisfies a zero-one law, in the sense that, for a subspace $\gH^{1-q}\subset\gH^{1-p}$ with $q<p$, we have either $\mu_0(\gH^{1-q})=1$ or $\norm{u}_{\gH^{1-q}}=+\ii$ $\mu_0$-almost surely. Indeed, 
\begin{align}
&\|u\|_{\gH^{1-q}}=+\ii\ \text{$\mu_0$-almost surely}\nn \\
& \qquad\qquad\qquad\Longleftrightarrow \int_{\gH^{1-p}}e^{\epsilon\|u\|^2_{\gH^{1-q}}}d\mu_0(u)=+\ii \text{ for some $\epsilon>0$}\label{eq:Fernique}\\
& \qquad\qquad\qquad\Longleftrightarrow \int_{\gH^{1-p}}\|u\|^2_{\gH^{1-q}}d\mu_0(u)=\tr_\gH (h^{-q})=+\ii\nn
\end{align}
which is called Fernique's theorem. 

In particular, taking $q=0$, we see that the energy is always infinite: $\pscal{u,hu}\equiv+\ii$ $\mu_0$-almost surely. On the other hand, taking $q=1$, we see that the mass $\|u\|_{\gH}=+\ii$, $\mu_0$-almost surely, if and only if $\tr(h^{-1})=+\ii$.

\begin{example}[Laplacian in a bounded domain $\Omega\subset\R^d$~\cite{LebRosSpe-88,Tzvetkov-08}]\label{ex:lap bound dom}\mbox{}\\
Let $\gH=L^2(\Omega)$ with $\Omega$ a bounded open subset of $\R^d$ and $h=-\Delta+C$ with chosen boundary conditions and $C\geq0$ such that $h>0$. Then we have $\tr(h^{-p})<\ii$ for all $p>d/2$, and therefore $\mu_0$ is well-defined on the Sobolev spaces of order $<1-d/2$. The kinetic energy is always infinite: $\int_\Omega|\nabla u|^2=+\ii$ $\mu_0$-almost surely. For $d\geq2$, the mass is also infinite: $\norm{u}_{L^2(\Omega)}=+\ii$, $\mu_0$-almost surely.
\end{example}

\begin{example}[(An)Harmonic oscillator]\label{ex:schro op}\mbox{}\\
For the harmonic-type oscillators, $\gH=L^2(\R^d)$ and $h=-\Delta+|x|^s$, we have $\tr (h^{-p})<\ii$ for all $p>d/2+d/s$. This follows from the Lieb-Thirring inequality of~\cite[Theorem 1]{DolFelLosPat-06} which gives us
$$\tr h^{-p}\leq 2^p\tr(h+\lambda_1)^{-p}\leq \frac{2^p}{(2\pi)^d}\int_{\R^d}\int_{\R^d} \frac{dx\,dk}{\big(|k|^2+|x|^s+\lambda_1\big)^p}<\ii$$
where $\lambda_1>0$ is the first eigenvalue of $h$.
In particular, for the harmonic oscillator $s=2$ we have $\tr (h^{-p})<\ii$ for all $p>d$. When $d=1$, the trace-class case $p=1$, which we will often refer to below, requires $s>2$, which just fails to include the 1D harmonic oscillator.
\end{example}

\subsection{Non-interacting $k$-particle density matrices}
In this section we consider the $k$-particle density matrices of the Gibbs state described by the measure $\mu_0$. These are formally defined by
\begin{equation}
\boxed{\gamma_0^{(k)}:=\int_{\gH^{1-p}} |u^{\otimes k}\rangle\langle u^{\otimes k}|\;d\mu_0(u).}
\label{eq:def_DM_free}
\end{equation}
Note that each $|u^{\otimes k}\rangle\langle u^{\otimes k}|$ is bounded from $\bigotimes_s^k\gH^{p-1}$ to $\bigotimes_s^k\gH^{1-p}$, with corresponding norm $\|u\|_{\gH^{1-p}}^{2k}$. For a $p\geq1$ such that $\tr(h^{-p})<\ii$, by Fernique's theorem~\eqref{eq:Fernique}, the measure $\mu_0$ has an exponential decay in the Hilbert space $\gH^{1-p}$ and therefore the integral~\eqref{eq:def_DM_free} is convergent in those spaces. Note that each $|u^{\otimes k}\rangle\langle u^{\otimes k}|$ can be unbounded $\mu_0$-almost surely on $\bigotimes_s^k\gH$ (if $p>1$ and $\tr(h^{-1})=+\ii$). The following says that, after averaging with the measure $\mu_0$, the resulting operator $\gamma^{(k)}_0$ is actually always compact (hence bounded) on the original Hilbert space $\bigotimes_s^k\gH$. 

\begin{lemma}[\textbf{Density matrices of the non-interacting Gibbs state}]\label{lem:DM_free}
Let $h\geq0$ be a self-adjoint operator on $\gH$ such that $\tr_{\gH}\big(h^{-p}\big)<\ii$ for some $1\leq p<\ii$. 
For every $k\geq1$, we have
\begin{equation}
\boxed{\gamma_0^{(k)}=k!\,(h^{-1})^{\otimes k}.}
\label{eq:DM_free_lemma}
\end{equation}
In particular, $\gamma^{(k)}_0$ extends to a unique compact operator on $\bigotimes_s^k\gH$, with
$$\tr_{\bigotimes_s^k\gH}\big(\gamma_0^{(k)}\big)^p\leq (k!)^{p}\big[\tr_\gH(h^{-p})\big]^k<\ii.$$
\end{lemma}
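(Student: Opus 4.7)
The plan is to compute the matrix elements of $\gamma_0^{(k)}$ in the eigenbasis $\{u_j\}$ of $h$ using the explicit Gaussian structure of $\mu_0$, then compare with those of $k!(h^{-1})^{\otimes k}$. The starting point is the observation that if we identify $u=\sum_j \alpha_j u_j$ with its sequence of coordinates, then, for any multi-indices $\vec i=(i_1,\dots,i_k)$ and $\vec j=(j_1,\dots,j_k)$,
\[
\bigl\langle u_{j_1}\otimes\cdots\otimes u_{j_k},\, |u^{\otimes k}\rangle\langle u^{\otimes k}|\, u_{i_1}\otimes\cdots\otimes u_{i_k}\bigr\rangle
= \prod_{\ell=1}^k \alpha_{j_\ell}\,\overline{\alpha_{i_\ell}}.
\]
By Fernique's theorem~\eqref{eq:Fernique}, $\mu_0$ has Gaussian tails on $\gH^{1-p}$, so the integrand is $\mu_0$-integrable and all such matrix elements of $\gamma_0^{(k)}$ are finite.

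Next, I would evaluate the moments
\[
\int \prod_{\ell=1}^k \alpha_{j_\ell}\,\overline{\alpha_{i_\ell}}\, d\mu_0(u)
\]
using Wick's theorem for a complex Gaussian measure (which follows directly from the tensor product structure of $\mu_0$, the independence of the modes, and the elementary identity $\int|\alpha|^{2m}(\lambda/\pi)e^{-\lambda|\alpha|^2}d\alpha = m!/\lambda^m$, combined with the phase invariance $\alpha_j\mapsto e^{i\theta_j}\alpha_j$). This gives
\[
\int \prod_{\ell=1}^k \alpha_{j_\ell}\,\overline{\alpha_{i_\ell}}\, d\mu_0
= \sum_{\sigma\in S_k}\prod_{\ell=1}^k \frac{\delta_{j_\ell,\,i_{\sigma(\ell)}}}{\lambda_{j_\ell}}.
\]
Now for any two symmetric tensors $\phi=\sum_{\vec j}d_{\vec j}u_{j_1}\otimes\cdots\otimes u_{j_k}$ and $\phi'=\sum_{\vec i}c_{\vec i}u_{i_1}\otimes\cdots\otimes u_{i_k}$ (with symmetric coefficients $c,d$), summing against these coefficients gives
\[
\langle \phi,\gamma_0^{(k)}\phi'\rangle
=\sum_{\sigma\in S_k}\sum_{\vec j}\overline{d_{\vec j}}\, c_{\sigma^{-1}\vec j}\prod_\ell \lambda_{j_\ell}^{-1}
= k!\sum_{\vec j}\overline{d_{\vec j}}\,c_{\vec j}\prod_\ell \lambda_{j_\ell}^{-1}
= k!\,\langle \phi,(h^{-1})^{\otimes k}\phi'\rangle,
\]
where the crucial step is that $c_{\sigma^{-1}\vec j}=c_{\vec j}$ by symmetry, producing the factor $k!$ from the sum over $S_k$. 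This establishes~\eqref{eq:DM_free_lemma} as an identity on $\bigotimes_s^k\gH$ (both sides annihilate the orthogonal complement, the left by construction and the right by convention).

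For the Schatten bound, it then suffices to observe that $(\gamma_0^{(k)})^p = (k!)^p\bigl((h^{-1})^{\otimes k}\bigr)^p = (k!)^p (h^{-p})^{\otimes k}$ on the bosonic subspace, and apply the elementary inequality~\eqref{eq:trace_power} to get
\[
\tr_{\bigotimes_s^k\gH}(\gamma_0^{(k)})^p \le (k!)^p\,\tr_{\bigotimes^k\gH}(h^{-p})^{\otimes k} = (k!)^p\bigl[\tr_\gH(h^{-p})\bigr]^k<\infty,
\]
so in particular $\gamma_0^{(k)}$ extends to a compact operator on $\bigotimes_s^k\gH$. I expect the main technical care to be in the Wick computation and in the clean bookkeeping between tensors and symmetric tensors (keeping track of where the $k!$ comes from); the rest is essentially a bound via~\eqref{eq:trace_power}.
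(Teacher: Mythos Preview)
Your proof is correct and follows essentially the same approach as the paper: both compute matrix elements of $\gamma_0^{(k)}$ in the eigenbasis of $h$ via Gaussian moments (the paper uses the identity $\int|\alpha|^{2m}(\lambda/\pi)e^{-\lambda|\alpha|^2}d\alpha=m!/\lambda^m$ directly on diagonal elements in the symmetric basis, while you package the same computation as Wick's theorem on the full tensor basis before restricting), and both finish with the Schatten bound via~\eqref{eq:trace_power}. The only cosmetic difference is where the factor $k!$ appears---in the paper it comes from $|\langle u_{i_1}\otimes_s\cdots\otimes_s u_{i_k},u^{\otimes k}\rangle|^2=k!\prod_\ell|\alpha_{i_\ell}|^2$, in your argument from the sum over $S_k$ in Wick's formula combined with the symmetry of the coefficients.
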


This lemma tells us that the $k$-particle density matrices of the Gibbs state $\mu_0$ are all compact operators on (tensor products of) the original Hilbert space $\gH$, even if the measure $\mu_0$ itself lives over the bigger space $\gH^{1-p}$. 

\begin{proof} Assume that $h$ has eigenvalues $\lambda_1,\lambda_2,...$ with the corresponding eigenfunctions $u_1,u_2,...$ We claim that 
\begin{align}
&\int |u^{\otimes k}\rangle\langle u^{\otimes k}|\;d\mu_0(u)\nn\\
&\qquad =k!\sum_{i_1\leq i_2\leq\cdots\leq i_k}\left(\prod_{\ell=1}^k\frac{1}{\lambda_{i_\ell}}\right)\frac{|u_{i_1}\otimes_s \cdots \otimes_s u_{i_k}\rangle\langle u_{i_1}\otimes_s \cdots \otimes_s u_{i_k}|}{\norm{u_{i_1}\otimes_s \cdots \otimes_s u_{i_k}}^2}\nn\\
&\qquad =k!\,(h^{-1})^{\otimes k}.
\label{eq:formula_DM_free}
\end{align}
We recall that $(u_{i_1}\otimes_s \cdots \otimes_s u_{i_k})_{i_1\leq i_2\leq\cdots \leq i_k}$ forms an orthogonal basis of $\bigotimes_s^k\gH$. The proof that the second line in~\eqref{eq:formula_DM_free} equals the third can be done by simply applying $h^{\otimes k}$ on the right, which gives $k!$ times the identity. The bound on the trace then follows from~\eqref{eq:trace_power}.

Let us now derive the first line in~\eqref{eq:formula_DM_free}. Each $u_{i_1}\otimes_s \cdots \otimes_s u_{i_k}$ belongs to $\bigotimes_s^k\gH^p$ for every $p$, and therefore we can compute its expectation with the $k$-particle density matrix $\gamma_0^{(k)}$. Using then that 
$$\frac{\int_{\C}|\alpha|^{2m}e^{-\lambda|\alpha|^2}d\alpha}{\int_{\C}e^{-\lambda|\alpha|^2}d\alpha}=\frac{\int_0^\ii r^me^{-\lambda r}dr}{\int_0^\ii e^{-\lambda r}dr}=\frac{m!}{\lambda^m}$$
and introducing the same integers $m_1,m_2,...$ as in~\eqref{eq:norm_vectors}, we find
\begin{align*}
&\pscal{u_{i_1}\otimes_s\cdots \otimes_s u_{i_k},\left(\int_{\gH}|u^{\otimes k}\rangle\langle u^{\otimes k}|\, d\mu_0(u)\right)u_{i_1}\otimes_s\cdots \otimes_s u_{i_k}}\\
&\qquad =k!\int\prod_{\ell=1}^k|\pscal{u,u_{i_\ell}}|^2 d\mu_0(u)=\frac{k!\,m_1!\cdots}{\prod_{\ell=1}^k\lambda_{i_\ell}}=\frac{k!\,\norm{u_{i_1}\otimes_s\cdots \otimes_s u_{i_k}}^2}{\prod_{\ell=1}^k\lambda_{i_\ell}}.
\end{align*}
If we put different indices on the right and on the left, then we get 0. Using the orthogonality of the $u_{i_1}\otimes_s\cdots \otimes_s u_{i_k}$, this ends the proof of~\eqref{eq:formula_DM_free}. 
\end{proof}

\subsection{High-temperature limit}

In this section we explain how the Gibbs measure $\mu_0$ and the density matrices $\gamma_0^{(k)}$ defined above in~\eqref{eq:def_DM_free} arise in the high-temperature limit of the grand-canonical many-body quantum system. We define the non-interacting Hamiltonian in Fock space $\cF(\gH)$ by 
$$ \bH_0=0 \oplus \bigoplus_{n=1}^\infty \left(\sum_{i=1}^n h_i\right).$$
The corresponding Gibbs state is
\begin{equation}
\label{eq:Gibb-non-int}
\Gamma_{0,T}=Z_{0}(T)^{-1}\,\exp\left(-\frac{\bH_0}{T}\right)
\end{equation}
where
\begin{equation}\label{eq:Z-non-int}
Z_{0}(T)=\tr_{\cF(\gH)} \left[\exp\left(-\frac{\bH_0}{T}\right)\right]=1+\sum_{n\geq1}\tr_{\bigotimes_s^n\gH} \;\exp\left(-\frac{\sum_{j=1}^nh_j}{T}\right)
\end{equation}
is the associated partition function. The state $\Gamma_{0,T}$ is quasi-free and Lemma~\ref{lem:quasi-free} tells us that
\begin{equation}
 -\log Z_{0}(T)= \tr_\gH \log(1-e^{-h/T}),
 \label{eq:formula_Z_0}
\end{equation}
which is finite if and only if $\tr_\gH e^{-h/T} <\infty$. We always assume that 
\begin{equation}
\tr_{\gH}\big(h^{-p}\big)<\ii
\label{eq:assumption_h}
\end{equation}
for some $1\leq p<\ii$ which of course implies $\tr_\gH e^{-h/T} <\infty$ for all $T>0$. Lemma~\ref{lem:quasi-free} also gives us the formula
\begin{equation} \label{eq:formula_DM_free_T}
\Gamma_{0,T}^{(k)}=\left(\frac{1}{e^{h/T}-1}\right)^{\otimes k}
\end{equation}
for the $k$-particle density matrix of $\Gamma_{0,T}$. 

We recall that the Schatten space $\gS^p(\gK)$ of a Hilbert space $\gK$ is
$$\gS^p(\gK):=\big\{A:\gK\to\gK\ : \ \norm{A}_{\gS^{p}(\gK)}^p:=\tr |A|^{p}=\tr (A^*A)^{p/2}<\ii\big\}.$$

\begin{lemma}[\textbf{Convergence to Gaussian measures}]\label{lem:CV_free}\mbox{}\\
Let $h\geq0$ be a self-adjoint operator on $\gH$ such that $\tr_{\gH}\big(h^{-p}\big)<\ii$ for some $p\geq1$.
Let $\mu_0$ be the Gaussian measure defined in~\eqref{eq:def_mu0} with $k$-particle density matrix $\gamma^{(k)}_0$ defined in~\eqref{eq:def_DM_free}. Then, we have
$$\boxed{\frac{k!}{ T^k }\Gamma_{0,T}^{(k)}\underset{T\to\ii}\longrightarrow \gamma_0^{(k)}=\int_{\gH^{1-p}} |u^{\otimes k}\rangle\langle u^{\otimes k}|\,d\mu_0(u)=k!\,(h^{-1})^{\otimes k}}$$
strongly in the Schatten space $\gS^{p}(\bigotimes_s^k\gH)$, for every fixed $k\geq1$. Moreover, the number of particles in the system behaves as
\begin{equation} \label{eq:cv-number-particle}
\lim_{T\to\ii}\frac{\tr_{\cF(\gH)}\left(\cN\Gamma_{0,T}\right)}{T}=\tr_\gH\, h^{-1}\leq \ii.
\end{equation}
\end{lemma}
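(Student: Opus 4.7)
The plan is to exploit the explicit formulas $\Gamma_{0,T}^{(k)}=(e^{h/T}-1)^{-\otimes k}$ from Lemma~\ref{lem:quasi-free} and $\gamma_0^{(k)}=k!\,(h^{-1})^{\otimes k}$ from Lemma~\ref{lem:DM_free}. Both operators are functions of $h^{\otimes k}$ and hence simultaneously diagonal in the eigenbasis of $h$, so everything reduces to the scalar statement about
$$f_T(x):=\frac{1}{T(e^{x/T}-1)},\qquad x>0.$$
I would first check that $f_T(x)\nearrow 1/x$ as $T\to\infty$: pointwise convergence is clear from $e^{x/T}-1=x/T+O(1/T^2)$, the upper bound $f_T(x)\leq 1/x$ follows at once from the elementary inequality $e^u\geq 1+u$, and monotonicity in $T$ is checked by writing $g(T)=T(e^{x/T}-1)$, substituting $u=x/T$, and noting that $g'(T)$ has the same sign as $e^u(1-u)-1$, a function which vanishes at $u=0$ and has derivative $-ue^u<0$ for $u>0$.

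With this in hand the one-particle convergence is immediate: the operator $\tfrac{1}{T}(e^{h/T}-1)^{-1}-h^{-1}$ is diagonal in the eigenbasis of $h$, so
$$\left\|\tfrac{1}{T}(e^{h/T}-1)^{-1}-h^{-1}\right\|_{\gS^p(\gH)}^p=\sum_j\bigl|f_T(\lambda_j)-\lambda_j^{-1}\bigr|^p,$$
and the right-hand side tends to $0$ by dominated convergence, each summand being pointwise convergent and bounded by $(2\lambda_j^{-1})^p$ with $\sum_j\lambda_j^{-p}=\tr h^{-p}<\infty$. The extension to arbitrary $k\geq1$ uses the multiplicativity $\|A_1\otimes\cdots\otimes A_k\|_{\gS^p}=\prod_i\|A_i\|_{\gS^p}$ together with the telescoping identity
$$A_T^{\otimes k}-A^{\otimes k}=\sum_{j=0}^{k-1}A_T^{\otimes j}\otimes(A_T-A)\otimes A^{\otimes(k-1-j)}$$
applied to $A_T=T^{-1}(e^{h/T}-1)^{-1}$ and $A=h^{-1}$. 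Since $0\leq A_T\leq A$ are commuting positive operators, $\|A_T\|_{\gS^p}\leq\|A\|_{\gS^p}$, and each term on the right has $\gS^p$-norm bounded by $\|h^{-1}\|_{\gS^p}^{k-1}\|A_T-A\|_{\gS^p}\to 0$. Restricting from $\bigotimes^k\gH$ to $\bigotimes_s^k\gH$ costs nothing because the symmetrization projection commutes with each $B^{\otimes k}$ and is a Schatten-norm contraction; multiplying by $k!$ yields the stated limit.

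For the particle number, the identity~\eqref{eq:estim_DM_N_k} combined with~\eqref{eq:formula_DM_free_T} gives
$$T^{-1}\tr_{\cF(\gH)}(\cN\Gamma_{0,T})=T^{-1}\tr_\gH\Gamma_{0,T}^{(1)}=\tr_\gH f_T(h),$$
and monotone convergence for the trace now yields the limit $\tr h^{-1}\in[0,\infty]$ uniformly whether or not $h^{-1}$ is trace-class. The only mildly technical input is the one-particle Schatten convergence, but the explicit monotone domination $f_T(x)\leq 1/x$ together with $\tr h^{-p}<\infty$ reduces it to scalar dominated convergence, so no real obstacle is anticipated.
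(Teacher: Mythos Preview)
Your proof is correct and follows essentially the same approach as the paper: both rely on the domination $\tfrac{1}{T(e^{h/T}-1)}\leq h^{-1}$ together with dominated convergence. The only difference is that the paper invokes the Schatten-space dominated convergence theorem \cite[Thm~2.16]{Simon-79} directly, whereas you unpack this into an explicit eigenvalue computation for $k=1$ and a telescoping identity for the tensor powers; this is more elementary but not a genuinely different route.
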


\begin{proof} 
We have
$$\frac{1}{T(e^{h/T}-1)}\leq h^{-1}$$
and the proof follows immediately from the dominated convergence theorem in Schatten spaces~\cite[Thm 2.16]{Simon-79}.
\end{proof}

Due to the definition~\eqref{eq:def_DM_partial} of the density matrix $\Gamma^{(k)}_{0,T}$, the convergence is the result~\eqref{eq:limit_intro_DM} that was claimed in the introduction, in the non-interacting case. In concrete models, all physical quantities may be expressed in terms of the $\Gamma^{(k)}$s. For instance, when $\gH=L^2(I)$ with $I$ a bounded interval and $h=-d^2/dx^2$ with Dirichlet boundary conditions, then we obtain for the free Bose gas in $\Omega$ convergence of the  density  
$$\frac{\Gamma^{(1)}_{0,T}(x,x)}{T}\underset{T\to\ii}\longrightarrow \int_{L^2(I)} |u(x)|^2\,d\mu_0(u),$$
and of the density of kinetic energy
$$\frac{\widehat{\Gamma^{(1)}_{0,T}}(p,p)}{T}\underset{T\to\ii}\longrightarrow \int_{L^2(I)} |\widehat{u}(p)|^2\,d\mu_0(u),$$
strongly in $L^1(I)$. These quantities can be measured in experiments. The claimed convergence follows from the continuity of the linear map $\gamma\in\gS^1(L^2(I))\mapsto \gamma(x,x)\in L^1(I)$.

\section{De Finetti measures}\label{sec:deFinetti}

It is possible to reformulate the previous result in terms of \emph{de Finetti measures}. We give a definition here, as we will need it to state our main results for the interacting model in the next section. All the technical details will be provided in Section~\ref{sec:deFinetti_coherent} below.

As usual for problems settled in Fock space, the argument is based on \emph{coherent states} which are defined for every $u\in \gH$ by
\begin{equation}
\xi(u):=e^{-\frac{|u|^2}{2}} \bigoplus_{j \ge 0} \frac{1}{\sqrt{j!}} u^{\otimes j}\in\cF(\gH).
\label{eq:def_coherent}
\end{equation}
Here $u\in\gH$ is not necessarily normalized in $\gH$, but the exponential factor makes $\xi(u)$ a state on $\cF(\gH)$. An important tool is the resolution of the identity on any finite-dimensional subspace $V\subset\gH$:
\begin{equation}
\int_{V} |\xi(u)\rangle  \langle \xi(u)| du = \left( \int_{V} e^{-|u|^2} du \right) \1_{\F(V)} = \pi^{\dim(V)}  \1_{\F(V)}.
\label{eq:resolution_coherent}
\end{equation}
which follows from rotational invariance of the normalized uniform measure $du$ on the sphere of $V$ (Schur's lemma). Here $\cF(V)$ is identified to a subspace of $\cF(\gH)$ and $\1_{\cF(V)}$ is the associated orthogonal projection. This formula is the starting point of all the following arguments. For any given sequence $0<\eps_n\to0$ (playing the role of a semi-classical parameter), we define the \emph{anti-Wick quantization} of a function $b\in C^0_b(V)$ at scale $\eps_n$, with $V$ an arbitrary finite-dimensional subspace of $\gH$, by
\begin{equation}
\mathbb{B}_{\eps_n}:=(\eps_n\pi)^{-\dim(V)}\int_{V} b(u)\;|\xi(u/\sqrt{\eps_n})\>\<\xi(u/\sqrt{\eps_n})|\, du.
\label{eq:def_quantization_b}
\end{equation}

The de Finetti measure of a sequence of states is then obtained by looking at the weak limits against the anti-Wick quantization of any function $b$, similarly to the semi-classical measures in finite-dimensional semi-classical analysis (see, e.g.,~\cite[Sec.~2.6.2]{ComRob-12}).

\begin{definition}[de Finetti measures~\cite{AmmNie-08}]\mbox{}\\
Let $h>0$ be a self-adjoint operator with compact resolvent and let $\gH^s$ be the scale of Sobolev spaces defined in Section~\ref{sec:non-interacting}.
Let $\{\Gamma_n\}$ be a sequence of states on the Fock space $\cF(\gH)$ (that is, $\Gamma_n\geq0$ and $\tr[\Gamma_n]=1$) and $0<\epsilon_n\to0$. We say that a measure $\nu$ on $\gH^{1-p}$ with $p\geq1$ is the de Finetti measure of the sequence $\{\Gamma_n\}$ at scale $\eps_n$, if we have
\begin{equation}
\lim_{n\to\ii}\tr\big[\mathbb{B}_{\eps_n}\Gamma_n\big]=\int_{\gH^{1-p}}b(u)\,d\nu(u)
\label{eq:limit_Anti-Wick}
\end{equation}
for every finite-dimensional  subspace $V\subset\gH$ and every $b\in C^0_b(V)$.
\end{definition}

When $\nu$ exists, then it is \emph{unique}, since a measure in a Hilbert space is characterized by its expectation against bounded continuous functions living over an arbitrary finite-dimensional subspace~\cite{Skorokhod-74}. Also, due to the uniform bound
\begin{equation}
\norm{\mathbb{B}_{\eps_n}}\leq \norm{b}_{L^\ii(V)},
\label{eq:unif_bound_Anti_Wick}
\end{equation}
a density argument shows that it suffices to require~\eqref{eq:limit_Anti-Wick} for all $b\in C^0_b(V_J)$ with $V_J={\rm span}(u_1,...,u_J)$ and $\{u_j\}$ a chosen orthonormal basis of $\gH$ (which, in our case, will always be a basis of eigenvectors of the one-particle operator~$h$).

The above definition of $\nu$ based on coherent states is taken from~\cite{AmmNie-08}. The anti-Wick quantization is not the only possible choice and we refer to~\cite{AmmNie-08} for a discussion of the \emph{Weyl quantization}. Another definition related to the \emph{Wick quantization} relies on density matrices and requires that 
\begin{equation}\label{eq:deF Wick}
k!\,(\eps_n)^k\,\Gamma_n^{(k)}\wto \int_{\gH^{1-p}}|u^{\otimes k}\>\<u^{\otimes k}|\,d\nu(u) 
\end{equation}
weakly in $\otimes_s^k\gH^{p-1}$ for all $k\geq1$, where the weak convergence means that 
$$k!(\epsilon_{n_j})^k\<\Psi,\Gamma_{n_j}^{(k)}\Psi\>\longrightarrow \int_{\gH^{1-p}} |\<u^{\otimes k},\Psi\>|^2 \,d\nu(u),\qquad\forall \Psi\in \otimes_s^k\gH^{p-1}.$$ 
The latter approach was used in~\cite{LewNamRou-14} and in the previous section. We remark that the convergence \eqref{eq:deF Wick} of {\em all} density matrices is sufficient to characterize the limiting measure $\nu$ uniquely since, by an argument of~\cite[Section 2]{LewNamRou-14}, a measure $\nu$ is determined completely by {\em all} of its moments $\int_{\gH^{1-p}} |u^{\otimes k}\rangle \langle u^{\otimes k}| \,d\nu(u)$.

When it applies, the Wick approach~\eqref{eq:deF Wick} is useful to understand the basic principle at work. In particular, in the trace class case $p=1$, it suggests that the expectation value of $\cN ^k$ in $\Gamma_n$ behaves as $(\eps_n)^{-k} \to \infty$ when $n\to \infty$, so that we are dealing with states essentially living on sectors containing $O(\eps_n ^{-1})\to \infty$ bosonic particles. In the general case $p\ge 1$, the expectation value of $\cN ^k$ in $\Gamma_n$ may grow much faster than $(\eps_n)^{-k}$, but the existence of the de Finetti measure is nevertheless a fundamental consequence of the large size of the system under consideration~\cite{ChrKonMitRen-07,HudMoo-75,Rougerie-cdf,Stormer-69} and it is therefore reasonable to expect it to hold for our states in the limit $n\to \infty$.

\medskip

Because the density matrices are not always bounded for an arbitrary sequence of states, an appropriate control is then needed. The Wick quantization requires some bounds on {\em all} density matrices. When $p>1$ these bounds are difficult to prove and it will therefore be more convenient to base our arguments on the anti-Wick quantization. In the same spirit as in~\cite[Thm~6.2]{AmmNie-08} and~\cite[Thm~2.2]{LewNamRou-14}, we are able to prove that any sequence $\{\Gamma_n\}$ satisfying suitable estimates has a de Finetti measure $\nu$ after extraction of a subsequence.

\begin{theorem}[\textbf{Existence of de Finetti measures}]\label{thm:deFinetti}\mbox{}\\
Let $h>0$ be a self-adjoint operator with compact resolvent and let $\gH^s$ be the scale of Sobolev spaces defined in Section~\ref{sec:non-interacting}.
Let $0\leq \Gamma_n$ be a sequence of states on the Fock space $\cF(\gH)$, with $\tr_{\cF(\gH)}\Gamma_n=1$. Assume that there exists a real number $1\leq \kappa\leq\ii$ and a sequence $0<\epsilon_n\to0$ such that
\begin{equation}
\tr\Big[\big(\epsilon_n \dGamma(h^{1-p})\big)^{s}\,\Gamma_n\Big]\leq C_s\ <\infty
\label{eq:bound_DM}
\end{equation}
for some $1\leq p<\ii$, and for $s=\kappa$ if $\kappa<\ii$ (for all $1\leq s<\kappa$ if $\kappa=+\ii$). 

Then there exists a Borel probability measure $\nu$ on $\gH^{1-p}$ (invariant under multiplication by a phase factor) which is the de Finetti measure of a subsequence $\Gamma_{n_j}$ at scale $\eps_{n_j}$. More precisely, 
\begin{equation}
\lim_{n_j\to\ii}\tr_{\cF(\gH)}\big(\Gamma_{n_j}\mathbb{B}_{\epsilon_{n_j}}\big)=\int_{\gH^{1-p}} b(u)\,d\nu(u)
\label{eq:weak_limit_Hartree_quantization}
\end{equation}
for all $b\in C^0_b(V)$ with $V$ an arbitrary finite-dimensional subspace of $\gH$ (here $\mathbb{B}_{\epsilon_n}$ is the anti-Wick quantization of $b$ that was introduced above in~\eqref{eq:def_quantization_b}). Moreover,  
\begin{equation}
k!(\epsilon_{n_j})^k\,\Gamma_{n_j}^{(k)}\wto \int_{\gH^{1-p}} |u^{\otimes k}\rangle \langle u^{\otimes k}| \,d\nu(u)
\label{eq:weak_limit_DM}
\end{equation}
weakly in $\bigotimes_s^k\gH^{p-1}$ for every integer $1\leq k< \kappa$.
\end{theorem}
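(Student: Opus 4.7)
\smallskip
\noindent\textbf{Proof plan.} For each $J\geq 1$, let $V_J:=\mathrm{span}(u_1,\ldots,u_J)$ and introduce the \emph{Husimi measure} of $\Gamma_n$ on $V_J$ at scale $\epsilon_n$,
\[
d\nu_{n,J}(u):=(\epsilon_n\pi)^{-J}\,\langle\xi(u/\sqrt{\epsilon_n}),\Gamma_n\,\xi(u/\sqrt{\epsilon_n})\rangle\,du,\qquad u\in V_J.
\]
The resolution of identity~\eqref{eq:resolution_coherent} shows that $\nu_{n,J}$ is a nonnegative Borel measure on $V_J$ of total mass $\tr(\1_{\cF(V_J)}\Gamma_n)\leq 1$, and the definition~\eqref{eq:def_quantization_b} of the anti-Wick quantization gives the duality
\[
\tr\big[\mathbb{B}_{\epsilon_n}\Gamma_n\big]=\int_{V_J}b(u)\,d\nu_{n,J}(u),\qquad b\in C^0_b(V_J).
\]
Hence exhibiting a de Finetti measure of $\{\Gamma_n\}$ reduces to controlling the family $\{\nu_{n,J}\}_n$ for each fixed $J$ and passing to a projective limit.

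The hypothesis~\eqref{eq:bound_DM} provides uniform $\gH^{1-p}$-moment bounds on these measures. A direct coherent-state computation (using $\langle\xi(v),a_i^*a_j\xi(v)\rangle=\langle u_i,v\rangle\,\overline{\langle u_j,v\rangle}$ and its normally ordered generalizations) followed by the rescaling $u=\sqrt{\epsilon_n}v$ yields
\[
\int_{V_J}\|u\|_{\gH^{1-p}}^{2s}\,d\nu_{n,J}(u)\leq C\,\tr\big[(\epsilon_n\,\dGamma(h^{1-p}))^{s}\,\Gamma_n\big]+o(1)\leq 2C_s
\]
uniformly in $n$ and $J$. A diagonal extraction via Prokhorov's theorem then produces a subsequence $\{n_j\}$ and sub-probability measures $\nu^{(J)}$ on $V_J$ with $\nu_{n_j,J}\rightharpoonup\nu^{(J)}$ weakly for every $J$. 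The family $\{\nu^{(J)}\}_J$ is a compatible projective system: the factorization $\xi(u+w)=\xi(u)\otimes\xi(w)$ for $u\perp w$, combined with~\eqref{eq:resolution_coherent} applied on $V_{J+1}\ominus V_J$, shows that the marginal of $\nu_{n,J+1}$ on $V_J$ equals $\nu_{n,J}$, a property preserved in the weak limit. By Skorokhod's extension theorem~\cite[Lem.~1]{Skorokhod-74} together with the uniform moment bound, $\{\nu^{(J)}\}$ arises as the finite-dimensional projections of a unique Borel probability measure $\nu$ on $\gH^{1-p}$ with $\int\|u\|_{\gH^{1-p}}^{2s}d\nu\leq 2C_s$. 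Phase invariance is inherited from the $U(1)$-symmetry of $|\xi(u)\rangle\langle\xi(u)|$, and~\eqref{eq:weak_limit_Hartree_quantization} follows from the duality above combined with a density argument using the uniform bound~\eqref{eq:unif_bound_Anti_Wick}.

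For the density matrix convergence~\eqref{eq:weak_limit_DM}, I would test against $\Psi\in\bigotimes_s^k V_J$ and let $\mathbb{A}_\Psi$ denote the anti-Wick quantization at scale $\epsilon_n$ of the polynomial $a(u)=|\langle\Psi,u^{\otimes k}\rangle|^2$. Expanding the coherent-state projector $|\xi(u/\sqrt{\epsilon_n})\rangle\langle\xi(u/\sqrt{\epsilon_n})|$ in the number basis and carrying out the normal-ordering yields an identity
\[
\mathbb{A}_\Psi=k!\,\epsilon_n^{k}\sum_{n\geq k}|\Psi\rangle\langle\Psi|\otimes_s\1_{n-k}+R_{n,k,J},
\]
where $R_{n,k,J}$ is a finite combination of anti-Wick quantizations of polynomials of degree $<2k$ in $u$, each carrying at least one extra factor of $\epsilon_n$. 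Pairing with $\Gamma_n$ and invoking~\eqref{eq:def_DM_partial} one gets
\[
k!\,\epsilon_n^{k}\,\langle\Psi,\Gamma_n^{(k)}\Psi\rangle=\tr[\mathbb{A}_\Psi\Gamma_n]-\tr[R_{n,k,J}\Gamma_n].
\]
The main term on the right converges to $\int|\langle\Psi,u^{\otimes k}\rangle|^2\,d\nu(u)$ by the second paragraph, and a density argument in $\bigotimes_s^k\gH^{p-1}$ then extends the conclusion to arbitrary test vectors. The main obstacle, and the technical heart of the theorem, is the control $\tr[R_{n,k,J}\Gamma_n]=o(1)$: one needs the Wick--anti-Wick normal-ordering formula with quantitative $\epsilon_n$-scaling, and each term in the remainder must be dominated using~\eqref{eq:bound_DM} together with uniform integrability of $\|u\|_{\gH^{1-p}}^{2k}$ along $\{\nu_{n_j,J}\}$. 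This is precisely where the restriction $k<\kappa$ enters, as one ultimately needs a $k$-th (nearly $\kappa$-th) moment to absorb the worst polynomial appearing in $R_{n,k,J}$.
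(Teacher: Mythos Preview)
Your overall strategy matches the paper's, but there is a genuine gap in the compatibility step that breaks the construction of $\nu$.

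\medskip

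\textbf{The compatibility claim is false as stated.} With your definition
\[
d\nu_{n,J}(u)=(\epsilon_n\pi)^{-J}\,\langle\xi(u/\sqrt{\epsilon_n}),\Gamma_n\,\xi(u/\sqrt{\epsilon_n})\rangle_{\cF(\gH)}\,du,
\]
the coherent state $\xi(u)$ for $u\in V_J$ factors in $\cF(\gH)\simeq\cF(V_J)\otimes\cF(V_J^\perp)$ as $\xi_{V_J}(u)\otimes|0_{V_J^\perp}\rangle$. When you integrate out $w\in V_{J+1}\ominus V_J$, the resolution of the identity~\eqref{eq:resolution_coherent} produces $\1_{\cF(V_{J+1}\ominus V_J)}$, so the marginal of $\nu_{n,J+1}$ on $V_J$ has density
\[
(\epsilon_n\pi)^{-J}\,\tr_{\cF(\gH)}\Big[|\xi_{V_J}(u/\sqrt{\epsilon_n})\rangle\langle\cdots|\otimes \1_{\cF(V_{J+1}\ominus V_J)}\otimes |0_{V_{J+1}^\perp}\rangle\langle 0_{V_{J+1}^\perp}|\;\Gamma_n\Big],
\]
whereas $\nu_{n,J}$ itself carries the vacuum projector $|0_{V_{J+1}\ominus V_J}\rangle\langle 0_{V_{J+1}\ominus V_J}|$ in that slot (since $|0_{V_J^\perp}\rangle=|0_{V_{J+1}\ominus V_J}\rangle\otimes|0_{V_{J+1}^\perp}\rangle$). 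These differ whenever $\Gamma_n$ has particles in the mode $u_{J+1}$, which is the generic situation. A related symptom: your total mass $\tr(\1_{\cF(V_J)}\Gamma_n)$ is the probability that $\Gamma_n$ has \emph{no particles} outside $V_J$, which need not stay bounded away from zero as $n\to\infty$; tightness in $V_J$ does not prevent this loss.

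\medskip

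\textbf{The fix the paper uses.} One should build the Husimi function from the \emph{localized} state $(\Gamma_n)_{V_J}:=\tr_{\cF(V_J^\perp)}\Gamma_n$ (partial trace), i.e.
\[
\mu_{V_J,\Gamma_n}^{\epsilon_n}(u)=(\epsilon_n\pi)^{-J}\,\big\langle\xi(u/\sqrt{\epsilon_n}),(\Gamma_n)_{V_J}\,\xi(u/\sqrt{\epsilon_n})\big\rangle_{\cF(V_J)}.
\]
This is always a \emph{probability} measure, and now the marginalization works: integrating out $V_{J+1}\ominus V_J$ gives $\1_{\cF(V_{J+1}\ominus V_J)}$, and tracing the localized state $(\Gamma_n)_{V_{J+1}}$ against this identity is precisely a further partial trace, yielding $(\Gamma_n)_{V_J}$ (this is Lemma~\ref{lem:cylindrical}). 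With this correction your moment estimate, diagonal extraction, and Skorokhod step go through, and the link with density matrices is the explicit identity of Lemma~\ref{lem:link_PDM_Husimi},
\[
\int_{V_J}|u^{\otimes k}\rangle\langle u^{\otimes k}|\,d\mu_{V_J,\Gamma_n}^{\epsilon_n}(u)=k!\,\epsilon_n^k\sum_{\ell=0}^{k}{k\choose\ell}\,(P_J)^{\otimes\ell}\Gamma_n^{(\ell)}(P_J)^{\otimes\ell}\otimes_s\1_{\otimes_s^{k-\ell}V_J},
\]
which makes the remainder analysis you sketch precise and shows directly why $k<\kappa$ suffices (the error terms involve at most $(k-1)$-th moments). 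The paper then identifies the weak limit of $k!\epsilon_n^k\Gamma_n^{(k)}$ by combining this formula with the already-established convergence $\mu_{V_J,\Gamma_n}^{\epsilon_n}\rightharpoonup\nu_{V_J}$, rather than quantizing the polynomial $|\langle\Psi,u^{\otimes k}\rangle|^2$ as you propose; both routes are equivalent once the Husimi function is set up correctly.
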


The weak convergence~\eqref{eq:weak_limit_DM} in the statement means that 
\begin{equation}\label{eq:weak CV deF}
k!(\epsilon_{n_j})^k\<\Psi,\Gamma_{n_j}^{(k)}\Psi\>\longrightarrow \int_{\gH^{1-p}} |\<u^{\otimes k},\Psi\>|^2 \,d\nu(u),\qquad\forall \Psi\in \otimes_s^k\gH^{p-1}. 
\end{equation}

This theorem generalizes existing results in several directions. First, we emphasize that the moment bound~\eqref{eq:bound_DM} is only assumed to hold for $1\leq s\leq \kappa$ (here $\kappa$ does not have to be an integer) and that the limit for the density matrices is then only valid \emph{a priori} for $1\leq k<\kappa$. Note that if $\kappa<\infty$, then the convergence \eqref{eq:weak_limit_DM} does not characterize the measure $\nu$ uniquely, but the anti-Wick quantization \eqref{eq:weak_limit_Hartree_quantization} always does. When $p=1$ the operator $h$ plays no role in the statement and the result is~\cite[Thm. 6.2]{AmmNie-08} when $\kappa=+\ii$. When $p>1$, the bound involves $\dGamma(h^{1-p})$ instead of the number operator $\cN$ and the final measure lives over $\gH^{1-p}$ instead of $\gH$. The proof nevertheless goes along the lines of~\cite{AmmNie-08,LewNamRou-14} and it will be provided in Section~\ref{sec:proof_de_Finetti} below.

In particular, in the non-interacting case we are able to reformulate the results of the previous section, using Lemma~\ref{lem:CV_free} and Theorem~\ref{thm:deFinetti}:

\begin{corollary}[\textbf{de Finetti measure in the non-interacting case}]\label{cor:deF non int}\mbox{}\\
Let $h\geq0$ be a self-adjoint operator on $\gH$ such that $\tr_{\gH}\big(h^{-p}\big)<\ii$ for some $p\geq1$.
Then $\mu_0$ (the Gaussian measure defined in~\eqref{eq:def_mu0}) is the (unique) de Finetti measure of the sequence of quantum Gibbs states $(\Gamma_{0,T})$ at scale $1/T$.
\end{corollary}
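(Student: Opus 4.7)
The plan is to combine the strong density-matrix convergence of Lemma~\ref{lem:CV_free} with the existence statement of Theorem~\ref{thm:deFinetti} and the fact that a Borel probability measure on a separable Hilbert space is uniquely characterized by its moments $\int |u^{\otimes k}\rangle\langle u^{\otimes k}|\,d\nu(u)$.

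The first step is to verify the moment hypothesis~\eqref{eq:bound_DM} of Theorem~\ref{thm:deFinetti} for $\Gamma_n = \Gamma_{0,T_n}$ and $\epsilon_n = 1/T_n$ along any sequence $T_n \to \ii$, with $\kappa = +\ii$. Diagonalizing $h$ with eigenvalues $\lambda_i$ and eigenvectors $u_i$, one writes $\dGamma(h^{1-p}) = \sum_i \lambda_i^{1-p}\, a_i^* a_i$, and under the quasi-free state $\Gamma_{0,T}$ the mode occupations $n_i := a_i^* a_i$ are independent geometric random variables with means $\gamma_{T,i} = (e^{\lambda_i/T}-1)^{-1} \leq T/\lambda_i$. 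An elementary computation gives $\tr[n_i^s \Gamma_{0,T}] \leq C_s(\gamma_{T,i} + \gamma_{T,i}^s)$, so that expanding $(T^{-1}\dGamma(h^{1-p}))^s$ by the multinomial formula and exploiting independence of the modes yields, uniformly in $T$, a bound by sums of the form $\sum_i \lambda_i^{-pj}$ for $1 \leq j \leq s$. All of these are finite since $\tr(h^{-p}) < \ii$ forces $\tr(h^{-q}) < \ii$ for every $q \geq p$. This quasi-free moment estimate is the only genuinely technical step of the proof, and I expect it to be the main obstacle.

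Theorem~\ref{thm:deFinetti} applied with $\kappa = +\ii$ then provides a subsequence $T_{n_j} \to \ii$ and a phase-invariant probability measure $\nu$ on $\gH^{1-p}$ realizing the anti-Wick convergence~\eqref{eq:weak_limit_Hartree_quantization}, and in particular
\[
k!\,T_{n_j}^{-k}\,\Gamma_{0,T_{n_j}}^{(k)} \wto \int_{\gH^{1-p}} |u^{\otimes k}\rangle\langle u^{\otimes k}|\,d\nu(u)
\]
weakly in $\bigotimes_s^k\gH^{p-1}$ for every $k \geq 1$. But Lemma~\ref{lem:CV_free} gives \emph{strong} convergence of the same left-hand side in $\gS^p(\bigotimes_s^k\gH)$ to $k!\,(h^{-1})^{\otimes k}$, which by Lemma~\ref{lem:DM_free} is exactly $\int |u^{\otimes k}\rangle\langle u^{\otimes k}|\,d\mu_0(u)$. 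Hence $\nu$ and $\mu_0$ share the same $k$th moment for every $k \geq 1$, and the moment-uniqueness criterion recalled around~\eqref{eq:deF Wick} forces $\nu = \mu_0$.

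Since the limit $\mu_0$ does not depend on the extracted subsequence, a standard subsequence argument upgrades the anti-Wick convergence~\eqref{eq:limit_Anti-Wick} from a subsequence to the full family $(\Gamma_{0,T})_{T>0}$ at scale $1/T$, establishing that $\mu_0$ is the unique de Finetti measure of $(\Gamma_{0,T})$ in the sense of the definition above.
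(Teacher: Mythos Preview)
Your proposal is correct and follows essentially the same route as the paper: both rely on the strong convergence of all density matrices from Lemma~\ref{lem:CV_free} together with the moment-uniqueness criterion recalled after~\eqref{eq:deF Wick}. Your argument is more explicit than the paper's two-sentence proof in that you first verify the moment hypothesis~\eqref{eq:bound_DM} and invoke Theorem~\ref{thm:deFinetti} to establish existence of a de Finetti measure before identifying it, whereas the paper simply asserts that the density-matrix convergence already determines the measure uniquely.
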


\begin{proof}
We have proved the convergence of {\em all} density matrices in Lemma~\ref{lem:CV_free}. As we remarked before, this determines the de Finetti measure uniquely.
\end{proof}

\section{Derivation of the nonlinear Gibbs measures: statements}\label{sec:interacting}

In this section we state our main result concerning the high temperature limit of \emph{interacting} quantum particles and the occurrence of the nonlinear Gibbs measure. 

As before we fix a self-adjoint operator $h>0$ on a separable Hilbert space $\gH$, such that $\tr(h^{-p})<\ii$ for some $1\leq p<\ii$. We use the same notation $\gH^s$ as in the previous section for the Sobolev-like spaces based on $h$. In particular, we consider the non-interacting Gaussian measure $\mu_0$ on $\gH^{1-p}$ and the corresponding $k$-particle density matrices $\gamma^{(k)}_0$ which are given by~\eqref{eq:DM_free_lemma} and belong to the Schatten space $\gS^p(\bigotimes_s^k\gH)$, by Lemma~\ref{lem:DM_free}. 

Now we state our main theorems and, for clarity, we first discuss the simpler case $p=1$ which, for concrete models involving the Laplacian, corresponds to space dimension $d=1$.

\subsection{The trace-class case}

In this section we assume that 
$$\tr(h^{-1})<\ii,$$
which implies in particular that the measure $\mu_0$ lives over the original Hilbert space $\gH$. We then consider a non-negative self-adjoint operator $w$ on $\gH\otimes_s\gH$ such that 
\begin{equation}
\int_{\gH} \pscal{u\otimes u,w \, u\otimes u}\,d\mu_0(u)=\tr_{\bigotimes_s^2\gH}\left[w\,h^{-1}\otimes h^{-1}\right]<\ii.
\label{eq:assumption_w}
\end{equation}
It is for example sufficient to assume that $w$ is bounded.

\begin{example}[Laplacian in 1D]\label{ex:1D Lap}\mbox{}\\
When $\Omega=(0,\pi)$ is a bounded interval in $\R$ and $h=-d^2/dx^2$ with Dirichlet boundary conditions (recall Example~\ref{ex:lap bound dom}), the assumption~\eqref{eq:assumption_w} is satisfied for 
$$[w\psi](x,y)=W(x-y)\psi(x,y)$$ 
with $W\in L^p((-\pi,\pi),\R_+)$ and $1\leq p\leq\ii$ or, more generally, $W=W_1+W_2$ with $W_1$ a positive measure with finite mass on $(-\pi,\pi)$ and $W_2\in L^\infty((-\pi,\pi),\R_+)$. Indeed,
\begin{multline}
\tr_{L^2(\Omega\times\Omega)}\left[W(x-y)(-d^2/dx^2)^{-1}(-d^2/dy^2)^{-1}\right]\\=\int_0^\pi\int_0^\pi W(x-y) G(x,x)G(y,y)\,dx\,dy
\label{eq:test_assumption_w}
\end{multline}
where 
$$G(x,y)=\frac{2}{\pi}\sum_{n\geq1}\frac{1}{n^2}\sin(nx)\sin(ny)$$
is the integral kernel of $(-d^2/dx^2)^{-1}$. From this formula it is clear that $x\mapsto G(x,x)$ is in $L^1(0,\pi)\cap L^\ii(0,\pi)$, and~\eqref{eq:test_assumption_w} is finite under the previous assumptions on $W$. The delta function $W=\delta_0$ is allowed, which leads to the Gibbs measure built on the defocusing non-linear Schr\"odinger functional. The situation is of course exactly the same in an arbitray bounded interval $\Omega=(a,b)$ in $\R$, with any other boundary condition, but $-d^2/dx^2$ then has to be replaced by $-d^2/dx^2+C$ in order to ensure $h>0$. 
\end{example}

\begin{example}[(An)Harmonic oscillator in 1D]\label{ex:1D Harm}\mbox{}\\
We have mentioned in Example~\ref{ex:schro op} that $h = -d^2/dx^2+ |x| ^s$  on $L^2(\R)$ satisfies 
$$\tr(h^{-1})<\ii \mbox{ for } s>2.$$
As in the previous example, the assumption~\eqref{eq:assumption_w} is then satisfied for $W\in L^p(\R,\R_+)$ with $1\leq p\leq\ii$ or, more generally, $W=W_1+W_2$ with $W_1$ a positive measure with finite mass on $\R$ and $W_2\in L^\infty(\R,\R_+)$. The argument is the same as in Example~\eqref{ex:1D Lap} with, this time,
$$G(x,y):=\big(-d^2/dx^2+ |\cdot|^s\big)^{-1}(x,y).$$
We have 
$$\int_\R G(x,x)\,dx=\tr h^{-1}<\ii$$
and, using that 
$$-\frac{d^2}{dx^2}+ |x| ^s\geq \epsilon\left(-\frac{d^2}{dx^2}+1\right)$$
for all $\epsilon\leq\lambda_1/(1+\lambda_1)$ where $\lambda_1$ is the first eigenvalue of $h$, we deduce 
$$G(x,x)\leq \frac{1}{2\pi\epsilon}\int_\R \frac{dk}{k^2+1}<\ii.$$
Therefore $x\mapsto G(x,x)$ is in $L^1(\R)\cap L^\ii(\R)$ and the conclusion follows.
\end{example}

We remark that~\eqref{eq:assumption_w} implies that $\pscal{u\otimes u,w u\otimes u}\geq0$ is finite $\mu_0$-almost surely, and therefore it makes sense to define the relative partition function
\begin{equation}
z_r:=\int_{\gH} \exp\Big(-\pscal{u\otimes u,w u\otimes u}\Big)\,d\mu_0(u),
\label{eq:def_Z_r}
\end{equation}
which is a number in $(0,1]$. The nonlinear Gibbs measure is then 
\begin{equation}
\boxed{d\mu(u)=z_r^{-1}\exp\Big(-\pscal{u\otimes u,w u\otimes u}\Big)\,d\mu_0(u).}
\label{eq:def_Gibbs}
\end{equation}
Like for $\mu_0$, the corresponding $k$-particle density matrices are in the trace-class since we have the operator inequality
\begin{equation}
\int_{\gH} |u^{\otimes k}\rangle\langle u^{\otimes k}|\,d\mu(u)\leq (z_r)^{-1}\int_{\gH} |u^{\otimes k}\rangle\langle u^{\otimes k}|\,d\mu_0(u)=(z_r)^{-1}k!\,(h^{-1})^{\otimes k}.
\label{eq:estim_DM_0}
\end{equation}

The many-particle interacting quantum system is described by the Hamiltonian
\begin{equation} \label{eq:bH-lambda}
\bH_\lambda= \bH_0 +\lambda \mathbb{W} = 0 \oplus h \oplus \bigoplus_{m=2}^\infty \left( \sum_{i=1}^m h_i + \lambda \sum_{1\le i<j\le m} w_{ij} \right)
\end{equation}
on the Fock space $\cF(\gH)$. Similarly to the non-interacting case, we consider the Gibbs state  
\begin{align} \label{eq:Gibb-int}
\Gamma_{\lambda,T}=Z_{\lambda}(T)^{-1}\,\exp\left(-\frac{\bH_\lambda}{T}\right)\quad\text{with}\quad Z_{\lambda}(T)=\tr_{\cF(\gH)} \left\{\exp\left(-\frac{\bH_\lambda}{T}\right)\right\}.
\end{align}
That $Z_{\lambda}(T)$ is finite follows from the fact that $\bH_\lambda\geq \bH_0$ since $\mathbb{W}\geq0$, which gives
$$Z_{\lambda}(T)\leq\tr_{\cF(\gH)} \left\{\exp\left(-\frac{\bH_0}{T}\right)\right\}=Z_{0}(T)<\ii.$$

Our main theorem states that the density matrices of the interacting quantum state converge to that given by the nonlinear Gibbs measure, provided that the coupling constant scales as
$$\lambda\sim1/T$$
which, as will be explained below, places us in the mean-field regime.

\begin{theorem}[\textbf{Convergence to nonlinear Gibbs measures, $p=1$}]\label{thm:main}\mbox{}\\
Let $h\geq0$ and $w\geq0$ be two self-adjoint operators on $\gH$ and $\gH\otimes_s\gH$ respectively, such that 
\begin{equation}
\tr_{\gH}\big(h^{-1}\big)+\tr_{\gH\otimes_s\gH}\big(w\,h^{-1}\otimes h^{-1}\big)<\ii.
\label{eq:assumptions_h_w}
\end{equation}
Let $\Gamma_{\lambda,T}$ be the grand-canonical quantum Gibbs state defined in \eqref{eq:Gibb-int} and let $Z_\lambda(T)$ be the corresponding partition function. Then we have
\begin{equation}
\boxed{\lim_{\substack{T\to\ii\\ T\lambda\to1}}\frac{Z_\lambda(T)}{Z_0(T)}=z_r}
\label{eq:CV_partition_fn}
\end{equation}
where $z_r$ is the nonlinear relative partition function defined in~\eqref{eq:def_Z_r}.
Furthermore, the nonlinear Gibbs measure $\mu$ defined in~\eqref{eq:def_Gibbs} is the (unique) de Finetti measure of $(\Gamma_{\lambda,T})$ at scale $1/T$: we have the convergence
\begin{equation}
\boxed{\tr_{\cF(\gH)}\big(\Gamma_{\lambda,T}\mathbb{B}_{1/T}\big)\underset{\substack{T\to\ii\\ T\lambda\to1}}{\longrightarrow}\int_{\gH} b(u)\,d\mu(u)}
\label{eq:CV_anti-Wick}
\end{equation}
for every $b\in C^0_b(V)$ with $V$ an arbitrary finite-dimensional subspace of $\gH$ and $\mathbb{B}_{1/T}$ its anti-Wick quantization~\eqref{eq:def_quantization_b}, as well as 
\begin{equation}
\boxed{\frac{k!}{T^k}\Gamma_{\lambda,T}^{(k)} \underset{\substack{T\to\ii\\ T\lambda\to1}}{\longrightarrow}\int_\gH |u^{\otimes k}\rangle\langle u^{\otimes k}|\,d\mu(u)}
\label{eq:CV_DM}
\end{equation}
strongly in the trace-class for every fixed $k\geq1$. 
\end{theorem}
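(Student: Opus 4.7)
The plan follows the variational road-map sketched in the introduction around \eqref{eq:variational-Gamma}--\eqref{eq:Gibbs_Hartree}. I would first derive the a priori moment bound
\[
\tr\Bigl[\bigl(T^{-1}\dGamma(h)\bigr)^{s}\,\Gamma_{\lambda,T}\Bigr]\leq C_{s}\qquad\text{for every }s\geq 1,
\]
by comparison with the free Gibbs state: a simple coherent-state trial state plugged into \eqref{eq:variational-Gamma} yields $Z_\lambda(T)/Z_0(T)\geq z_r-o(1)$, and coupled with $\bH_\lambda\geq\bH_0$ this reduces the moment estimate to the explicit Schatten bounds for $\Gamma_{0,T}$ given by Lemma~\ref{lem:CV_free}. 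Theorem~\ref{thm:deFinetti} then applies with $p=1$, $\eps_n=1/T$, $\kappa=+\infty$ and produces, along any subsequence $T_j\to\infty$ with $\lambda_j T_j\to 1$, a Borel probability measure $\nu$ on $\gH$ for which the anti-Wick limit \eqref{eq:weak_limit_Hartree_quantization} and the weak density-matrix limit \eqref{eq:weak_limit_DM} both hold.

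The heart of the proof is to match the quantum and classical variational principles \eqref{eq:variational-Gamma} and \eqref{eq:Gibbs_Hartree}. For the upper bound on $-\log(Z_\lambda/Z_0)$ I would insert into \eqref{eq:variational-Gamma} a coherent-state trial state quantizing $\mu$ on a finite-dimensional subspace $V\subset\gH$, tensored with $\Gamma_{0,T}|_{V^\perp}$; the Berezin--Lieb \emph{upper} inequality controls its relative entropy, and sending first $T\to\infty$ and then $V\uparrow\gH$ gives $\limsup\bigl(-\log(Z_\lambda(T)/Z_0(T))\bigr)\leq -\log z_r$. For the matching lower bound applied directly to $\Gamma_{\lambda,T}$, the interaction term passes to the limit thanks to \eqref{eq:weak_limit_DM} and $\lambda T\to 1$, while the relative entropy is handled by the semiclassical \emph{lower} Berezin--Lieb inequality adapted to relative entropy that is the key new tool announced in the introduction. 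Applied on a truncation $V_K=\mathrm{span}(u_1,\dots,u_K)$, it should yield $\liminf_{T\to\infty}\cH(\Gamma_{\lambda,T},\Gamma_{0,T})\geq \cHcl(\nu_K,\mu_{0,K})$, and joint lower semicontinuity of $\cHcl$ then allows the passage $K\to\infty$ to recover $\cHcl(\nu,\mu_0)$. Proving this relative-entropy Berezin--Lieb lower bound in infinite dimension is the main obstacle I anticipate.

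Combining the two bounds with \eqref{eq:variational-Gamma} and \eqref{eq:Gibbs_Hartree} gives
\[
-\log z_r\leq \cHcl(\nu,\mu_0)+\int_{\gH}\pscal{u\otimes u,w\,u\otimes u}\,d\nu(u)\leq -\log z_r,
\]
so $\nu$ achieves the infimum in \eqref{eq:Gibbs_Hartree}; strict convexity of $\cHcl(\cdot,\mu_0)$ forces $\nu=\mu$, and the independence of the extracted subsequence promotes this to convergence along the full family. This proves \eqref{eq:CV_anti-Wick} and, via the same chain of inequalities, the partition-function asymptotics \eqref{eq:CV_partition_fn}.

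Finally, to upgrade the weak trace-class convergence \eqref{eq:weak_limit_DM} to the strong convergence \eqref{eq:CV_DM} it suffices, for each fixed $k$, to establish
\[
\tr\bigl[k!\,T^{-k}\,\Gamma_{\lambda,T}^{(k)}\bigr]\;\longrightarrow\;\int_{\gH}\|u\|^{2k}\,d\mu(u),
\]
since for positive trace-class operators weak convergence together with convergence of the trace implies trace-norm convergence. This scalar limit in turn follows from \eqref{eq:CV_anti-Wick} applied to the truncated functions $b_R(u)=\min(\|u\|^{2k},R)$, the tail $R\to\infty$ being controlled by the a priori moment bound at order $k+1$ obtained at the very first step.
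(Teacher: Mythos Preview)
Your overall strategy matches the paper's: variational characterization of both sides, a priori moment bounds to extract a de Finetti measure via Theorem~\ref{thm:deFinetti}, then matching upper and lower bounds on $-\log(Z_\lambda/Z_0)$ to force $\nu=\mu$. Two points deserve comment.

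First, a small slip: for Theorem~\ref{thm:deFinetti} with $p=1$ the required moment is on $\cN=\dGamma(\1)$, not on $\dGamma(h)$. Your comparison argument (``$\bH_\lambda\geq\bH_0$ reduces to $\Gamma_{0,T}$'') works for $\cN$ because $\cN$ is constant on each $n$-particle sector, giving $\tr[\cN^s e^{-\bH_\lambda/T}]=\sum_n n^s\tr_n e^{-H_{\lambda,n}/T}\leq\sum_n n^s\tr_n e^{-H_{0,n}/T}$; it does \emph{not} work for $\dGamma(h)$, which fails to commute with $\mathbb W$. For the upper bound the paper also chooses a different trial state from yours: it takes the exact finite-dimensional interacting Gibbs state $\Gamma_{\lambda,T}^{\leq J}\otimes\Gamma_{0,T}^{>J}$, so that the localized relative free energy is \emph{exactly} $-\log(Z_\lambda^{\leq J}/Z_0^{\leq J})$ and a single Peierls--Bogoliubov inequality finishes the job. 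Your coherent-state mixture can be pushed through, but the ``Berezin--Lieb upper inequality for the relative entropy'' you invoke is not the standard one and would itself need to be proved.

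The genuine gap is in your strong-convergence step for $k\geq2$. The anti-Wick convergence \eqref{eq:CV_anti-Wick} tests only $b\in C^0_b(V)$ with $V$ \emph{finite-dimensional}, so $b_R(u)=\min(\|u\|_\gH^{2k},R)$ is not admissible. Replacing $\|u\|$ by $\|P_Ju\|$ and sending $R\to\infty$, what Lemma~\ref{lem:link_PDM_Husimi} recovers on the quantum side is only $k!\,T^{-k}\tr\bigl[(P_J)^{\otimes k}\Gamma_{\lambda,T}^{(k)}(P_J)^{\otimes k}\bigr]$, not the full trace. The passage $J\to\infty$ then requires $T^{-k}\tr\bigl[(1-(P_J)^{\otimes k})\Gamma_{\lambda,T}^{(k)}\bigr]\to0$ \emph{uniformly in $T$}, and this tail control does not follow from the number-moment bounds alone: one needs in addition the one-body estimate $\Gamma_{\lambda,T}^{(1)}\leq CT\,h^{-1}$ of Lemma~\ref{lem:1PDM_bounded} (a Feynman--Hellmann argument you do not mention), combined with a H\"older interpolation between $\tr[\cN_J^\perp T^{-1}\Gamma_{\lambda,T}]\leq C\tr[Q_Jh^{-1}]$ and the higher $\cN$-moments. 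The paper avoids this altogether by a different device: it re-runs the entire variational lower bound with the tilted states $\widetilde\Gamma_{\lambda,T}\propto(\cN/T)^k e^{-\bH_\lambda/T}$ against the tilted free reference $\widetilde\Gamma_{0,T}\propto(\cN/T)^k e^{-\bH_0/T}$, whose de Finetti measure is $d\widetilde\mu_0\propto\|u\|^{2k}\,d\mu_0$ (Lemma~\ref{lem:deFinetti_higher_moments}); this directly yields the required $\limsup$ on $\tr[(\cN/T)^k\Gamma_{\lambda,T}]$.
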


When $\gH$ is a finite-dimensional space, a version of this theorem was proved in~\cite{Gottlieb-05}, in a canonical setting (see also~\cite{GotSch-09,JulGotMarPol-13}). A grand-canonical analogue is treated in~\cite[Chapter 3]{Knowles-thesis}, where also the time-dependent correlation functions of the Gibbs states are considered. We refer to~\cite[Appendix~B]{Rougerie-cdf} for a discussion of the finite dimensional setting more related to that we shall do here. Before turning to the more general case $p>1$, we discuss the scaling of the coupling constant $\lambda$. 

\subsubsection*{Intuitive picture: the semi-classical regime}
For simplicity of notation, let us assume that $\gH=L^2(I)$ with $I$ a bounded interval in $\R$, that $h=-d^2/dx^2$ and that $w$ is the multiplication operator by $W(x-y)$. In physics, the operator $\bH$ is then often written using the creation and annihilation operators $a(x)$ and $a(x)^\dagger$ of a particle at $x\in I$. These are operator-valued distributions which are such that
$$a(f)^\dagger=\int_I f(x)\,a(x)^\dagger\,dx$$
where $a(f)^\dagger$ is the usual creation operator recalled in Section~\ref{sec:Fock2} below. They satisfy the canonical commutation relations
$$a(y)\,a(x)^\dagger-a(x)^\dagger a(y)=\delta(x-y).$$
Then one may write
\begin{equation}\label{eq:sec quant hamil}
\bH_\lambda = \int_I\nabla a(x)^\dagger\cdot\nabla a(x)dx+ \frac\lambda2 \iint_{I\times I} W(x-y) a(x)^\dagger a(y)^\dagger a(y)a(x)\,dx\,dy. 
\end{equation}
The convergence of the non-interacting density matrices in Lemma~\ref{lem:CV_free} can be reformulated by saying that 
\begin{equation}\label{eq:heuristic picture}
\frac{\pscal{a^\dagger(f_1)\cdots a^\dagger(f_k)a(g_k)\cdots a(g_1)}_{\Gamma_{0,T}}}{T^k}
\end{equation}
admits a limit as $T\to\ii$ for every $k\geq1$ and every functions $f_i,g_i$. We expect the same to be true for the interacting Gibbs state. Therefore, we want to think of $1/\sqrt{T}$ as a semi-classical parameter and we introduce new creation and annihilation operators 
$$b:=a/\sqrt{T}, \quad b^\dagger:=a^\dagger/\sqrt{T}.$$
We expect that the operators $b$ and $b ^{\dagger}$ are bounded independently of $T$ when tested against our Gibbs state $\Gamma_{\lambda,T}$, similarly as in~\eqref{eq:heuristic picture}. The choice $\lambda\sim 1/T$ is then natural to make the two terms of the same order in~\eqref{eq:sec quant hamil}. When $\tr(h^{-1})<\ii$ the interacting Gibbs state will be proved to have $O(T)$ particles in average, which then confirms the previous intuition. As we will explain in the next section, the situation is more involved if $\tr(h^{-1})=\ii$.

The commutator between the new operators tends to zero in the limit:
$$b(y)\,b(x)^\dagger-b(x)^\dagger b(y)=\frac{\delta(x-y)}{T}\to0.$$
This will lead to the effective nonlinear classical model. The latter is obtained by replacing $b(x)$ by a function $u(x)$, $b(x)^\dagger$ by its adjoint $\overline{u(x)}$, leading to the nonlinear Hartree energy
$$\cE(u)=\int_I|\nabla u(x)|^2\,dx+\frac12\iint_{I\times I} W(x-y) |u(x)|^2|u(y)|^2\,dx\,dy.$$
The formal semi-classical limit is obtained by replacing the trace by an integral over the phase space (here $\gH=L^2(I)$):
\begin{equation}
\tr_{\cF(\gH)}\left\{\exp\left(-\frac{\bH_{1/T}}T\right)\right\}\underset{T\to\ii}\sim
\left(\frac{T}{\pi}\right)^{\dim\gH}\int_{\gH}\exp\left(-\cE(u)\right)\,du.
\label{eq:semi-classics}
\end{equation}
Usually, $\dim\gH$ is infinite and the previous expression does not make sense. Fortunately, the infinite constant $(T/\pi)^{\dim\gH}$ cancels when we consider the relative partition function $Z_\lambda(T)/Z_0(T)$ and this is how the nonlinear relative partition function $z_r$ arises. 

In a finite dimensional space $\gH$, the limit~\eqref{eq:semi-classics} can be justified by well-known semi-classical techniques, and this provides an alternative proof of the main result in~\cite{Gottlieb-05}, details of which may be found in~\cite[Appendix B]{Rougerie-cdf}. In that case $Z_\lambda(T)$ and $Z_0(T)$ may be studied separately and there is no need to consider the ratio $Z_\lambda(T)/Z_0(T)$. Our goal in the next section will be to adapt these methods to the case of an infinite-dimensional space~$\gH$. The main idea is of course to always deal with relative quantities instead of dangerous expressions like in~\eqref{eq:semi-classics}.

\subsection{The general case}
When 
$$\tr(h^{-1})=+\ii\quad\text{but}\quad  \tr(h^{-p})<\ii\quad \text{for some $p>1$,}$$ 
we are forced to consider the negative Sobolev type space $\gH^{1-p}$, where our final nonlinear Gibbs measure lives. The intuitive picture discussed above is more involved: even if we still expect that the new creation and annihilation operators $b$ and $b^\dagger$ are of order one (when tested against $\Gamma_{\lambda, T}$), the average particle number itself grows faster than $T$,
$$\frac{\tr_{\cF(\gH)}(\cN\Gamma_{\lambda,T})}{T}=\int \langle b^\dagger(x)\,b(x)\rangle_{\Gamma_{\lambda,T}}\,dx\underset{\substack{T\to\ii\\ \lambda T\to 1}}\longrightarrow+\ii,$$
leading to a state with infinitely many particles in the limit. The main difficulty is then the lack of control on the density matrices, which are all unbounded in the trace class.

In order to deal with this pathological case, we will put strong assumptions on the interaction operator $w$ which do not cover anymore functions of the form $W(x-y)$ when $\gH=L^2(\Omega)$. We would like to think of $w$ as a regularized interaction and we do not discuss here how the regularization could be removed by using renormalization techniques.
We therefore assume that $w$ is a bounded non-negative self-adjoint operator on $\gH\otimes_s\gH$ which satisfies an estimate of the form
\begin{equation}
0\leq w\leq  Ch^{1-p'}\otimes h^{1-p'}
\label{eq:assumption_w_p}
\end{equation}
for some $p'> p$. Under this assumption, $w$ naturally extends to an operator from $\gH^{1-p}\otimes_s\gH^{1-p}$ to its predual $\gH^{p-1}\otimes_s\gH^{p-1}$ and we have, similarly as before,
\begin{equation*}
\int_{\gH^{1-p}} \pscal{u\otimes u,w u\otimes u}\,d\mu_0(u)=\tr_{\bigotimes_s^2\gH}\left(w\,h^{-1}\otimes h^{-1}\right)\leq C[\tr(h^{-{p'}})]^2<\ii.
\end{equation*}
In particular, $\pscal{u\otimes u,w u\otimes u}$ (interpreted as a duality product on $\gH^{p-1}\otimes_s\gH^{p-1}$) is finite $\mu_0$ almost surely and this allows us to define the interacting measure $\mu$ on $\gH^{1-p}$ by the same formula~\eqref{eq:def_Gibbs} as in the previous section. The typical example we have in mind is that of $w$ a finite rank operator, with smooth eigenvectors in $\gH^{p'-1}\otimes_s\gH^{p'-1}$.

Our result in the case $p>1$ is similar but weaker than Theorem~\ref{thm:main}. 

\begin{theorem}[\textbf{Convergence to nonlinear Gibbs measures, $p>1$}]\label{thm:main2}\mbox{}\\
Let $h\geq0$ and $w\geq0$ be two self-adjoint operators on $\gH$ and $\gH\otimes_s\gH$ respectively. We assume that 
\begin{itemize}
\item $\tr_{\gH}\big(h^{-p}\big)<\ii$ for some $1<p<\ii$,
\item $0\leq w\leq  Ch^{1-p'}\otimes h^{1-p'}$ for some $p'>p$.
\end{itemize}
Let $\Gamma_{\lambda,T}$ be the grand-canonical quantum Gibbs state defined in \eqref{eq:Gibb-int} and let $Z_\lambda(T)$ be the corresponding partition function. Then $\mu$ is the unique de Finetti measure of $\Gamma_{\lambda,T}$ at scale $1/T$ and we have the same convergence results~\eqref{eq:CV_partition_fn} and~\eqref{eq:CV_anti-Wick} as in Theorem~\ref{thm:main} (with $\gH$ replaced by $\gH^{1-p}$).
Furthermore, 
\begin{equation}
\boxed{\frac{\Gamma_{\lambda,T}^{(1)}}{T} \underset{\substack{T\to\ii\\ T\lambda\to1}}{\longrightarrow}\int_{\gH^{1-p}} |u\rangle\langle u|\,d\mu(u)}
\label{eq:limit_DM_S_p}
\end{equation}
strongly in the Schatten space $\gS^p(\gH)$.
\end{theorem}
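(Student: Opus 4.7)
\medskip

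\noindent\textbf{Proof strategy.} The plan is to adapt the variational scheme outlined in the introduction to the case $p>1$, the main structural change being that reduced density matrices of order $k\geq 2$ are no longer controlled in trace class, so the analysis must be routed entirely through the first-quantized anti-Wick functional calculus and the de Finetti measure $\nu$ built in Theorem~\ref{thm:deFinetti}. Rewriting
\[
-\log\!\Big(\frac{Z_\lambda(T)}{Z_0(T)}\Big)
=\inf_{\Gamma}\Big\{\cH(\Gamma,\Gamma_{0,T})+\tfrac{\lambda}{T}\tr\big[\Gamma^{(2)}w\big]\cdot T\Big\}\Big/T,
\]
the proof consists in matching this ratio with its classical counterpart $-\log z_r = \inf_{\nu}\{\cH_{\rm cl}(\nu,\mu_0)+\int\langle u^{\otimes 2},wu^{\otimes 2}\rangle\,d\nu\}$.

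\medskip

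\noindent\textbf{Step 1: a priori bounds and de Finetti extraction.} The operator inequality $\bH_\lambda\geq \bH_0$ together with $w\geq 0$ yields $Z_\lambda(T)\leq Z_0(T)$, hence $\cH(\Gamma_{\lambda,T},\Gamma_{0,T})=O(1)$. Combining this with the explicit form of $\Gamma_{0,T}$ (Lemma~\ref{lem:quasi-free}) and Klein/Peierls-type convexity inequalities, I expect to derive the moment bound
\[
\tr\!\Big[\big(T^{-1}\dGamma(h^{1-p})\big)^{s}\Gamma_{\lambda,T}\Big]\leq C_s
\]
for every $1\leq s<\infty$ (the free quasi-free analogue is explicit since $(T^{-1}\dGamma(h^{1-p}))$ has moments controlled by $\tr(h^{-p})<\infty$). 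Theorem~\ref{thm:deFinetti} then produces a Borel probability measure $\nu$ on $\gH^{1-p}$ that is a de Finetti measure of a subsequence of $\Gamma_{\lambda,T}$ at scale $1/T$, together with the weak convergences~\eqref{eq:weak_limit_Hartree_quantization}--\eqref{eq:weak_limit_DM} for every $k\geq 1$.

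\medskip

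\noindent\textbf{Step 2: upper bound via a coherent-state trial state.} To get $\limsup -\log(Z_\lambda/Z_0)\leq -\log z_r$, I take the trial state
\[
\Gamma_{\rm tr}=\Big(\tfrac{T}{\pi}\Big)^{\!\dim V}\int_{V}|\xi(\sqrt{T}u)\rangle\langle\xi(\sqrt{T}u)|\,d\mu_V(u)
\]
on a finite-dimensional spectral subspace $V$ of $h$, where $\mu_V$ is a tempered projection of $\mu$. The coherent-state computations of Section~\ref{sec:deFinetti_coherent} (the Berezin-Lieb upper bound on the entropy and the explicit $(k)$-particle structure $|u^{\otimes k}\rangle\langle u^{\otimes k}|$ arising after normal ordering) then produce the bound $\cH(\Gamma_{\rm tr},\Gamma_{0,T})+(\lambda/T)\tr[\Gamma_{\rm tr}^{(2)}w]/T\leq -\log z_r+o_V(1)$, after which letting $V\uparrow\gH$ and using monotone convergence for $\mu$ (allowed by the hypothesis $w\leq C h^{1-p'}\otimes h^{1-p'}$) closes the upper bound.

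\medskip

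\noindent\textbf{Step 3: lower bound and identification $\nu=\mu$.} The crux is to show
\[
\liminf\Big(\cH(\Gamma_{\lambda,T},\Gamma_{0,T})+\tfrac{\lambda}{T}\tr[\Gamma_{\lambda,T}^{(2)}w]\Big)
\geq \cH_{\rm cl}(\nu,\mu_0)+\int \langle u^{\otimes 2},wu^{\otimes 2}\rangle\,d\nu(u).
\]
For the entropy term I use the Berezin-Lieb-type inequality for the quantum relative entropy announced in Section~\ref{sec:entropy}: projecting on a finite-dimensional subspace and applying Donsker-Varadhan type duality in the anti-Wick calculus bounds the quantum relative entropy below by the classical relative entropy of the Husimi function with respect to the Husimi function of $\Gamma_{0,T}$, which by Corollary~\ref{cor:deF non int} and the de Finetti convergence converges to $\cH_{\rm cl}(\nu,\mu_0)$. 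For the interaction term, the crucial assumption $w\leq Ch^{1-p'}\otimes h^{1-p'}$ with $p'>p$ makes $w$ compact as an operator on $\gH^{p-1}\otimes_s\gH^{p-1}$, so the weak convergence~\eqref{eq:weak_limit_DM} for $k=2$ upgrades to $\tr[w\Gamma_{\lambda,T}^{(2)}]/T^2\to\int\langle u^{\otimes 2},wu^{\otimes 2}\rangle\,d\nu(u)$. Combining Steps 2 and 3 with the strict convexity of the classical variational problem forces $\nu=\mu$, yields the convergence~\eqref{eq:CV_partition_fn}, and hence also~\eqref{eq:CV_anti-Wick}.

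\medskip

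\noindent\textbf{Step 4: strong $\gS^p$ convergence of $\Gamma_{\lambda,T}^{(1)}/T$.} The weak convergence in $\gS^p$ follows from Step~1 (since the de Finetti limit for $k=1$ holds and $\int|u\rangle\langle u|\,d\mu(u)\in\gS^p$ by~\eqref{eq:estim_DM_0} combined with $\tr(h^{-p})<\infty$). An a priori bound $\|\Gamma_{\lambda,T}^{(1)}/T\|_{\gS^p}\leq C$ comes from the operator comparison with $\Gamma_{0,T}^{(1)}/T=(T(e^{h/T}-1))^{-1}\leq h^{-1}$, deduced from the free-energy bound together with a correlation inequality for $w\geq 0$ (or alternatively from the moment estimate of Step 1 applied with $s=p$). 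To upgrade weak to strong convergence in $\gS^p$, it suffices to establish convergence of the $\gS^p$ norms; this I obtain from the matching of the variational principles in Steps 2--3 together with the identity $\tr[h\Gamma_{\lambda,T}^{(1)}]/T=\tr[\bH_0\Gamma_{\lambda,T}]/T$, which controls the tail of the spectrum of $\Gamma_{\lambda,T}^{(1)}/T$ and thus gives equi-integrability of the eigenvalues.

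\medskip

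\noindent\textbf{Main obstacle.} The chief difficulty is the Berezin-Lieb lower bound for the relative entropy in infinite dimensions when the number operator $\cN$ has expectation much larger than $T$: one cannot work on the full Fock space directly and must carefully localize on finite-dimensional spectral subspaces of $h$, controlling the localization errors by the tight moment bounds of Step 1. The second serious issue is that higher density matrices are unbounded in trace class, which is precisely the reason only the first density matrix convergence is asserted and why the whole argument has to be routed through the anti-Wick calculus rather than through~\eqref{eq:deF Wick}.
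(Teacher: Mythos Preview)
Your outline gets the variational architecture right but misses the two places where $p>1$ is genuinely hard, and both of your proposed shortcuts fail.

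\textbf{Step 1 (moment bounds).} You claim $\tr\big[(T^{-1}\dGamma(h^{1-p}))^{s}\Gamma_{\lambda,T}\big]\leq C_s$ for every $s\geq 1$. This is \emph{not} known for $s>1$; the paper proves it only for $s=1$ (via the operator bound $\Gamma_{\lambda,T}^{(1)}\leq CTh^{-1}$ obtained from a Feynman--Hellmann / Klein argument), and explicitly states the $s=2$ case as an open conjecture. The only available higher-moment information is $\tr[\cN^s\Gamma_{\lambda,T}]\leq C_sT^{ps}$ and the mixed bound $\tr[\cN\,\dGamma(h^{1-p})\Gamma_{\lambda,T}]\leq C_\epsilon T^{p+1+\epsilon}$. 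Consequently, Theorem~\ref{thm:deFinetti} only delivers the de~Finetti measure $\nu$ through the anti-Wick side, with no weak convergence of $\Gamma_{\lambda,T}^{(k)}/T^k$ for any $k\geq 1$ a priori.

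\textbf{Step 3 (interaction lower bound).} Because of the above, your argument ``weak convergence of $\Gamma_{\lambda,T}^{(2)}/T^2$ plus compactness of $w$'' is unavailable: there is no weak limit of $\Gamma^{(2)}/T^2$ to speak of. The paper's actual mechanism is quite different. One localizes $w$ to $P_J^{\otimes 2}wP_J^{\otimes 2}$ with $J=J(T)\to\infty$ at a rate $\lambda_{J+1}\sim T^{\alpha}$, estimates the localization error using the two moment bounds above together with $w\leq Ch^{1-p'}\otimes h^{1-p'}$ (the exponent $p'>p$ is exactly what makes a window of admissible $\alpha$ exist), and then on $V_{J}$ uses the explicit relation~\eqref{eq:relate_PDM_Husimi} between localized density matrices and Husimi measures to pass to Fatou. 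The $\epsilon$ in the mixed bound and the competition between $T^{p-1+\epsilon}$ and $\lambda_{J+1}^{p'-1}$ is the whole technical point.

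\textbf{Step 4 (identifying the limit of $\Gamma^{(1)}/T$).} Dominated convergence with $\Gamma_{\lambda,T}^{(1)}\leq CTh^{-1}$ does give strong $\gS^p$ convergence of $\Gamma_{\lambda,T}^{(1)}/T$ to \emph{some} $\gamma^{(1)}$, but identifying $\gamma^{(1)}=\int|u\rangle\langle u|\,d\mu(u)$ cannot be read off the de~Finetti measure (you would need $\kappa>1$ in Theorem~\ref{thm:deFinetti}). Your equi-integrability argument via $\tr[h\Gamma^{(1)}]/T$ also fails: this quantity is not bounded (already for the free state it equals $\tr[h(e^{h/T}-1)^{-1}]/T$, which diverges). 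The paper identifies $\gamma^{(1)}$ by a second Feynman--Hellmann argument: perturb $h\mapsto h+\eta|x\rangle\langle x|$, redo the partition function limit, and differentiate in $\eta$.

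\textbf{Minor point (Step 2).} Your coherent-state trial state lives on $\cF(V)$ only; tensored with the vacuum on $V^\perp$ it gives a non-vanishing relative-entropy contribution on $V^\perp$. The paper's trial state $\Gamma_{\lambda,T}^{\leq J}\otimes\Gamma_{0,T}^{>J}$ kills this contribution exactly; your variant would need an extra argument to control it.
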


The limit for the partition functions and for the anti-Wick observables is the same as in Theorem~\ref{thm:main}, but we are only able to deal with the one-particle density matrix, in the Schatten space $\gS^p$. Unfortunately, although the limiting density matrices
$$\gamma^{(k)}:=\int_{\gH^{1-p}}|u^{\otimes k}\rangle\langle u^{\otimes k}|\,d\mu(u)$$
belong to $\gS^{p}(\bigotimes_s^k\gH)$ for all $k\geq2$ by the same argument as in~\eqref{eq:estim_DM_0}, we are not able to derive the appropriate estimates on $\Gamma_{\lambda,T}^{(k)}$ in $\gS^p$, when $k\geq2$, and obtain the convergence of higher density matrices. We hope to come back to this problem in the future.

\medskip

The rest of the paper is devoted to the proof of our results. Our approach is based on de Finetti measures (which we have already defined before), in the same spirit as in~\cite{AmmNie-08,LewNamRou-14}, and on a Berezin-Lieb type inequality for the relative entropy. These tools are introduced in the next two sections, before we turn to the actual proof of the main results.

\section{Construction of de Finetti measures via coherent states} \label{sec:deFinetti_coherent}

In this section we provide the proof of Theorem~\ref{thm:deFinetti} and we discuss in details the construction of de Finetti measures.
We follow here ideas from~\cite{AmmNie-08,LewNamRou-14} and we start with a complement to Section~\ref{sec:Fock} on the classical  properties of Fock spaces, which include those of coherent states.

\subsection{Grand canonical ensemble II}\label{sec:Fock2}

We start by defining creation and annihilation operators, which are useful when dealing with coherent states.

\subsubsection*{Creation and annihilation operators}
In the Fock space $\cF(\gH)$, it is useful to introduce operators relating the different $n$-particle sectors. The creation operator $a^\dagger(f)$ on $\cF(\gH)$ is defined for every $f\in \gH$ by
$$a^\dagger(f)\big(\psi_0\oplus\psi_1\oplus\cdots\big)=0\oplus (\psi_0 f)\oplus (f\otimes_s\psi_1)\oplus\cdots.$$
In particular, it maps an $n$-particle state to an $(n+1)$-particle state. Its (formal) adjoint is denoted by $a(f)$ and it is anti-linear with respect to $f$. Its action on a $n$-particle vector is
\begin{multline*}
a(f)\big(g_1\otimes_s\cdots \otimes_s g_n\big)=\pscal{f,g_1}g_2\otimes_s\cdots \otimes_s g_n+\cdots + \pscal{f,g_n}g_1\otimes_s\cdots \otimes_s g_{n-1}.
\end{multline*}
These operators satisfy the canonical commutation relations
$$\begin{cases}
a(g)a^\dagger(f)- a^\dagger(f)a(g)=\pscal{g,f}\1_{\cF(\gH)}\\
a(g)a(f)- a(f)a(g)=0.
  \end{cases}
$$

The creation and annihilation operators are useful to express some relevant physical quantities. For instance, the $k$-particle density matrix of a state $\Gamma$, defined in~\eqref{eq:def_DM_partial}, is characterized by the property that 
\begin{equation}
\langle f_{1}\otimes_s\cdots\otimes_sf_k,\Gamma^{(k)}g_1\otimes_s\cdots \otimes_sg_k\rangle = \tr_{\F(\gH)} \Big( a^\dagger(g_k) ...a^\dagger(g_1) a(f_1)...a(f_k)\Gamma\Big) 
\label{eq:def_DM}
\end{equation}  
for all $f_1,...,f_k,g_1,...,g_k\in \gH$.

\subsubsection*{Localization in Fock space}
If the one-particle Hilbert space $\gH$ is a direct sum of two subspaces, $\gH=\gH_1\oplus\gH_2$, then the corresponding Fock spaces satisfy the factorization property
\begin{equation}
\cF(\gH)\simeq \cF(\gH_1)\otimes\cF(\gH_2)
\label{eq:factorization_Fock_space}
\end{equation}
in the sense of unitary equivalence. For basis sets $(f_i)$ and $(g_i)$ of $\gH_1$ and $\gH_2$ respectively, the unitary map is simply
$$f_1\otimes_s\cdots\otimes_s f_n\otimes_sg_1\otimes_s\cdots \otimes_s g_m\mapsto (f_1\otimes_s\cdots\otimes_s f_n)\otimes (g_1\otimes_s\cdots \otimes_s g_m).$$
The two corresponding annihilation operators on $\cF(\gH_1)$ and $\cF(\gH_2)$ are just 
$a_1(f)\simeq a(P_1f)$ and $a_2(g)\simeq a(P_2g)$
where $1=P_1+P_2$ are the corresponding orthogonal projections. Then, for any state $\Gamma$ on $\cF(\gH)$, we define its \emph{localization $\Gamma_{P_1}$ in $P_1$} (which we also sometimes denote by $\Gamma_{\gH_1}$) as the partial trace
$$\Gamma_{P_1}:=\tr_{\gH_2}[\Gamma]$$
or, equivalently,
$$\tr_{\cF(\gH_1)}[A\Gamma_{P_1}]=\tr_{\cF(\gH)}[A\otimes \1_{\cF(\gH_2)}\Gamma]$$
for every bounded operator $A$ on $\cF(\gH_1)$. The localized state is always a state, that is, it satisfies $\Gamma_{P_1}\geq0$ and $\tr_{\cF(\gH_1)}\Gamma_{P_1}=1$.
For a state commuting with $\cN$ and having finite moments to any order\footnote{I.e. $\tr_{\cF (\gH)} [\cN ^\alpha \Gamma] < C_{\alpha}$ for any $\alpha>0$.}, $\Gamma_P$ is characterized by the property that the density matrices are localized in a usual sense:
$$(\Gamma_P)^{(k)}=P^{\otimes k}\Gamma^{(k)}P^{\otimes k},\qquad \forall k\geq1.$$
This can be used to provide an explicit formula for $\Gamma_P$, see~\cite[Rmk 13 \& Ex. 10]{Lewin-11}. 
Localization is a fundamental concept for many-particle quantum systems. See for instance~\cite[Appendix A]{HaiLewSol_2-09} for its link with entropy and~\cite{Lewin-11} for the associated weak topology.

\subsubsection*{Coherent states}
For every vector $u$ in $\gH$ (which is not necessarily normalized), we denote the Weyl unitary  operator on  $\cF(\gH)$ by $$W(u):=\exp(a^\dagger(u)-a(u)).$$
A coherent state is a Weyl-rotation of the vacuum:
$$
\xi(u):=W(u) |0\rangle := \exp(a^\dagger(u)-a(u))  |0\rangle = e^{-|u|^2/2} \bigoplus_{n \ge 0} \frac{1}{\sqrt{n!}} u^{\otimes n} .
$$
We have already mentioned the resolution of the identity on any finite-dimensional subspace $V\subset\gH$:
\begin{equation}
\int_{V} W(u)|0\rangle  \langle 0| W(u)^* du = \left( \int_{V} e^{-|u|^2} du \right) \1_{\F(V)} = \pi^{\dim(V)}  \1_{\F(V)}.
\label{eq:resolution_coherent2}
\end{equation}
With the identification $\cF(\gH)=\cF(V)\otimes\cF(V^\perp)$, $\cF(V)$ is identified to $\cF(V)\otimes|0_{V^\perp}\rangle$ where $|0_{V^\perp}\rangle$ is the vacuum of $\cF(V^\perp)$.
Coherent states satisfy many interesting algebraic properties. For instance, creation and annihilation operators are translated by a constant when rotated by a Weyl unitary:
\begin{equation}
 \label{eq:Weyl-action}
W(f)^* a^\dagger(g) W(f)=a^\dagger(g) + \langle f,g \rangle, \quad W(f)^* a(g) W(f)=a(g) + \langle g,f \rangle.
\end{equation}
From this we conclude that coherent states are eigenfunctions of the annihilation operator:
\begin{equation}
a(g) \xi(f)=\langle g,f \rangle \xi(f),\qquad\forall f,g\in\gH.
\label{eq:coherent_eigenfn}
\end{equation}
Similarly, the $k$-particle density matrix of $\xi(u)$ is
\begin{equation}
\big[|\xi(u)\>\<\xi(u)|\big]^{(k)}=|u^{\otimes k}\>\<u^{\otimes k}|.
\label{eq:DM_coherent}
\end{equation}

\subsection{Finite-dimensional lower symbols}\label{sec:Husimi}
For any state $\Gamma$ on $\cF(\gH)$ and any finite-dimensional subspace $V\subset\gH$, we define the \emph{lower symbol} (or Husimi function) on $V$ by
\begin{equation} \label{eq:Husimi}
\mu_{V,\Gamma}^{\eps}(u):=(\eps\pi)^{-\dim(V)}\pscal{\xi(u/\sqrt{\eps}),\Gamma_V\xi(u/\sqrt{\eps})}_{\cF(\gH)},
\end{equation}
where $\Gamma_V=\tr_{\cF(V^\perp)}(\Gamma)$ is the associated localized state defined using the isomorphism of Fock spaces $\cF(\gH)\simeq\cF(V)\otimes \cF(V^\perp)$ recalled above.

\begin{lemma}[\textbf{Cylindrical projections}]\label{lem:cylindrical}\mbox{}\\
The cylindrical projection of $\mu^\eps_{V_1,\Gamma}$ onto a subspace $V_2\subset V_1$ is $\mu^\eps_{V_2,\Gamma}$.
\end{lemma}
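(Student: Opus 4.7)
The plan is to exploit the factorization of Fock spaces \eqref{eq:factorization_Fock_space} together with the resolution of identity \eqref{eq:resolution_coherent2}. Let $V_2\subset V_1$ and decompose $V_1=V_2\oplus W$ with $W:=V_1\cap V_2^{\perp}$, which gives the unitary identification $\cF(V_1)\simeq \cF(V_2)\otimes \cF(W)$. For $u=v+w$ with $v\in V_2$ and $w\in W$, the coherent state factorizes under this isomorphism as $\xi(u/\sqrt{\eps})=\xi(v/\sqrt{\eps})\otimes \xi(w/\sqrt{\eps})$. This is the central algebraic input on which everything else hangs.

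Next I would interpret ``cylindrical projection'' in terms of densities against Lebesgue measure: the projection of $\mu^{\eps}_{V_1,\Gamma}$ onto $V_2$ is the density obtained by integrating out the $W$-variables. The claim then reduces to the pointwise identity
\[
\int_W \mu^{\eps}_{V_1,\Gamma}(v+w)\,dw \;=\; \mu^{\eps}_{V_2,\Gamma}(v) \qquad \text{for every } v\in V_2.
\]
Inserting the definition \eqref{eq:Husimi} and the factorization of the coherent state, the $w$-integration reduces to computing
\[
\int_W |\xi(w/\sqrt{\eps})\rangle\langle \xi(w/\sqrt{\eps})|\,dw,
\]
which, after the change of variable $w\mapsto w/\sqrt{\eps}$ (Jacobian $\eps^{\dim W}$), equals $(\eps\pi)^{\dim W}\,\1_{\cF(W)}$ by \eqref{eq:resolution_coherent2}.

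Finally, I would invoke the transitivity of partial traces to identify $\tr_{\cF(W)}[\Gamma_{V_1}]=\Gamma_{V_2}$. This is built in because $V_2^{\perp}=W\oplus V_1^{\perp}$ orthogonally, so $\cF(V_2^{\perp})\simeq \cF(W)\otimes \cF(V_1^{\perp})$ and the partial trace defining $\Gamma_{V_2}$ factors through the partial trace defining $\Gamma_{V_1}$. Combined with the prefactor identity $(\eps\pi)^{-\dim V_1}(\eps\pi)^{\dim W}=(\eps\pi)^{-\dim V_2}$, this delivers exactly $\mu^{\eps}_{V_2,\Gamma}(v)$. The only real difficulty is bookkeeping: keeping the scaling factors in $\eps$ and the nested Fock-space identifications consistent. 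There is no analytic subtlety, since all the spaces involved are finite-dimensional.
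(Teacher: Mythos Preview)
Your proof is correct and follows essentially the same approach as the paper: factorize the coherent state along the orthogonal decomposition $V_1=V_2\oplus(V_1\ominus V_2)$, then integrate out the complementary variables using the resolution of the identity~\eqref{eq:resolution_coherent2}. You are slightly more explicit than the paper in spelling out the transitivity of partial traces $\tr_{\cF(W)}[\Gamma_{V_1}]=\Gamma_{V_2}$ and in tracking the $\eps$-prefactors, but the argument is the same.
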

\begin{proof}
We denote $u=v_2+v_2^\perp$ with $v_2\in V_2$ and $v_2^\perp\in V_2^\perp=V_1\ominus V_2$. Then we remark that
$W(u)=W(v_2)W(v_2^\perp)$
due to the fact that $a(v_2)$ and $a(v_2^\perp)$ commute and, therefore,
$$|\xi(u)\>\<\xi(u)|\simeq |\xi(v_2)\>\<\xi(v_2)|\otimes |\xi(v_2^\perp)\>\<\xi(v_2^\perp)|.$$
In particular,
$$(\pi)^{-\dim(V_1)}\int_{V_2^\perp}|\xi(v_2+v_2^\perp)\>\<\xi(v_2+v_2^\perp)|\, dv_2^\perp= (\eps\pi)^{-\dim(V_2)}|\xi(v_2)\>\<\xi(v_2)|$$
and the result follows.
\end{proof}

\subsubsection*{Link with density matrices}
With the Husimi function $\mu^\eps_{V,\Gamma}$ at hand, it is natural to consider the state in the Fock space $\cF(V)$
$$\int_{V}|\xi(u/\sqrt{\eps})\>\<\xi(u/\sqrt{\eps})|\;d\mu^\eps_{V,\Gamma}(u)$$
and its $k$-particle density matrix
$$k! \eps^{-k}\int_{V}|u^{\otimes k}\>\<u^{\otimes k}|\;d\mu^\eps_{V,\Gamma}(u).$$
A natural question is to ask whether these density matrices approximate the density matrices $\Gamma^{(k)}$ of the original state $\Gamma$, in the limit $\eps\to0$. We have the following explicit formula.

\begin{lemma}[\textbf{Lower symbols and density matrices}]\label{lem:link_PDM_Husimi}\mbox{}\\
We have on $\bigotimes_s^k V$
\begin{equation}
\boxed{\int_{V}|u^{\otimes k}\>\<u^{\otimes k}|\;d\mu^\eps_{V,\Gamma}(u)=k!\eps^k\sum_{\ell=0}^{k} {k\choose \ell}\Gamma_V^{(\ell)} \otimes _s \1_{\otimes^{k-\ell}_sV}}
\label{eq:relate_PDM_Husimi}
\end{equation}
where we recall the convention~\eqref{eq:A_k} and where $\Gamma_V^{(\ell)}=(P_V)^{\otimes \ell}\Gamma^{(\ell)}(P_V)^{\otimes \ell}$ is the $\ell$-particle density matrix of the localized state $\Gamma_V$ on $V$.
In particular, we have the upper bound
\begin{equation}
{k!}\eps^k\,(P_V)^{\otimes k}\Gamma^{(k)}(P_V)^{\otimes k}\leq \int_{V}|u^{\otimes k}\>\<u^{\otimes k}|\;d\mu^\eps_{V,\Gamma}(u)
\label{eq:estim_PDM_Husimi}
\end{equation}
as operators on $\bigotimes_s^kV$. 
\end{lemma}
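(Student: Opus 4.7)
The proof is a direct computation. The change of variables $v = u/\sqrt{\eps}$ in the defining integral gives
$$\int_V |u^{\otimes k}\rangle\langle u^{\otimes k}|\,d\mu^{\eps}_{V,\Gamma}(u) = \eps^k\, \pi^{-\dim V}\int_V |v^{\otimes k}\rangle\langle v^{\otimes k}|\,\langle \xi(v),\Gamma_V\xi(v)\rangle\, dv,$$
so \eqref{eq:relate_PDM_Husimi} reduces to showing that this rescaled $v$-integral equals $k!\sum_{\ell=0}^k\binom{k}{\ell}\,\Gamma_V^{(\ell)}\otimes_s \1_{\otimes^{k-\ell}_s V}$. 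I will compare the two sides as sesquilinear forms on the factorized symmetric tensors $f_1\otimes_s\cdots\otimes_s f_k$ and $g_1\otimes_s\cdots\otimes_s g_k$ for arbitrary $f_i,g_j\in V$.

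Expanding $\langle f_1\otimes_s\cdots\otimes_s f_k, v^{\otimes k}\rangle = \sqrt{k!}\prod_i\langle f_i,v\rangle$ turns the integrand on the left into $k!\prod_i\langle f_i,v\rangle\overline{\langle g_i,v\rangle}\,\langle\xi(v),\Gamma_V\xi(v)\rangle$. Iterating the coherent-state eigenrelation $a(f)\xi(v) = \langle f,v\rangle\xi(v)$ rewrites this as $k!\,\langle \xi(v),\,a^\dagger(g_k)\cdots a^\dagger(g_1)\,\Gamma_V\,a(f_1)\cdots a(f_k)\,\xi(v)\rangle$, and the resolution of identity \eqref{eq:resolution_coherent2} for $\cF(V)$ then converts the $v$-integral into a trace. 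Using cyclicity, the matrix element of the left-hand side equals
$$k!\,\tr_{\cF(V)}\Bigl(a(f_1)\cdots a(f_k)\,a^\dagger(g_k)\cdots a^\dagger(g_1)\,\Gamma_V\Bigr).$$

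It remains to perform a Wick normal-ordering. Iterating the CCR yields the partial-pairing expansion
$$a(f_1)\cdots a(f_k)\,a^\dagger(g_k)\cdots a^\dagger(g_1) = \sum_{m=0}^k\sum_{\substack{S,T\subset\{1,\ldots,k\}\\ |S|=|T|=m}}\sum_{\phi:S\to T\text{ bij.}}\prod_{i\in S}\langle f_i,g_{\phi(i)}\rangle\prod_{j\notin T}a^\dagger(g_j)\prod_{i\notin S}a(f_i),$$
where the internal ordering of the two remaining products is irrelevant by bosonic commutativity. Taking the trace against $\Gamma_V$ and applying \eqref{eq:def_DM}, each normal-ordered summand is identified with a matrix element of $\Gamma_V^{(k-m)}$ between the symmetric tensor products of $\{f_i\}_{i\notin S}$ and $\{g_j\}_{j\notin T}$. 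Relabeling $\ell = k-m$ and unfolding the symmetrization convention \eqref{eq:A_k} for $\Gamma_V^{(\ell)}\otimes_s\1_{\otimes_s^{k-\ell}V}$ on the right-hand side of \eqref{eq:relate_PDM_Husimi}, a term-by-term matching establishes \eqref{eq:relate_PDM_Husimi}. The operator inequality \eqref{eq:estim_PDM_Husimi} is then immediate: every summand on the right-hand side is nonnegative (since $\Gamma_V^{(\ell)}\geq 0$), and retaining only the $\ell = k$ term yields the lower bound $k!\eps^k\,\Gamma_V^{(k)} = k!\eps^k\,(P_V)^{\otimes k}\Gamma^{(k)}(P_V)^{\otimes k}$. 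The main obstacle is the combinatorial bookkeeping in this Wick step: the multinomial factor $\binom{k}{m}^2 m!$ produced by the expansion, combined with the $k!$ prefactor from the preceding paragraph, must be shown to reproduce exactly the factor $k!\binom{k}{\ell}$ emerging from \eqref{eq:A_k} after symmetrization.
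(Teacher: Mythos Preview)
Your approach is correct and shares the same skeleton as the paper's proof: rescale, use the coherent-state eigenrelation $a(f)\xi(v)=\langle f,v\rangle\xi(v)$ together with the resolution of the identity~\eqref{eq:resolution_coherent2} to convert the integral into a Fock-space trace, and then normal-order. The difference is one of economy. You test the identity against arbitrary symmetric tensors $f_1\otimes_s\cdots\otimes_s f_k$ and $g_1\otimes_s\cdots\otimes_s g_k$, which forces the full multi-mode Wick expansion with partial pairings and leaves you with the combinatorial matching you flag as the ``main obstacle'' (and do not actually carry out). The paper instead invokes the fact that a bounded operator on $\bigotimes_s^k V$ is determined by its expectations $\langle v^{\otimes k},\cdot\, v^{\otimes k}\rangle$ against Hartree products (this is \cite[Lem.~4.1]{LewNamRou-14b}), so it suffices to take all $f_i=g_i=v$. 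The Wick step then collapses to the single-mode identity
\[
a(v)^k a^\dagger(v)^k=\sum_{\ell=0}^k\binom{k}{\ell}\frac{k!}{\ell!}\,a^\dagger(v)^\ell a(v)^\ell,
\]
which is \cite[Lem.~4.2]{LewNamRou-14b} and makes the coefficient bookkeeping trivial. Your route would close if you completed the combinatorics, but the paper's shortcut avoids it entirely.
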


The lemma is similar to~ \cite[eq. (6)]{Chiribella-11} and \cite[Thm.~2.2]{LewNamRou-14b}, see also~\cite{Harrow-13}.

\begin{proof} By~\cite[Lem.~4.1]{LewNamRou-14b}, expectations against Hartree products determine the state and it therefore suffices to prove that  
$$\int_{V}|\pscal{u,v}|^{2k}\;d\mu^\eps_{V,\Gamma}(u)={k!}\eps^k\sum_{\ell=0}^{k} {k\choose\ell}\pscal{v^{\otimes \ell},\Gamma_V^{(\ell)}v^{\otimes \ell}}.$$ 
From the definition of the $\ell$-particle density matrix, we have 
$$\pscal{v^{\otimes \ell},\Gamma_V^{(\ell)}v^{\otimes \ell}}=\frac{\pscal{v^{\otimes_s \ell},\Gamma_V^{(\ell)}v^{\otimes_s \ell}}}{\ell!}=\frac{\tr\big[a^\dagger(v)^\ell a(v)^\ell\Gamma\big]}{\ell!}.$$
On the other hand, using~\eqref{eq:coherent_eigenfn} we can write 
\begin{align*}
&\eps^{-k}\int_{V}|\pscal{u,v}|^{2k}\;d\mu^\eps_{V,\Gamma}(u)\\
&\qquad\qquad=(\eps\pi)^{-\dim(V)}\eps^{-k}\int_{V}|\pscal{u,v}|^{2k}\pscal{\xi(u/\sqrt{\eps}),\Gamma\xi(u/\sqrt{\eps})}\,du\\
&\qquad\qquad=(\eps\pi)^{-\dim(V)}\int_{V}\pscal{a(v)^k\xi(u/\sqrt{\eps}),\Gamma a(v)^k\xi(u/\sqrt{\eps})}\,du\\
&\qquad\qquad=\tr \big[a(v)^ka^\dagger(v)^k\Gamma\big] .
\end{align*}
Therefore the result follows from the formula
\begin{equation}\label{eq:Wick A Wick}
a(v) ^k a ^\dagger(v) ^k = \sum_{\ell=0} ^k {k\choose\ell} \frac{k!}{\ell!} a ^\dagger(v) ^\ell a (v) ^\ell,
\end{equation}
see for instance~\cite[Lem.~4.2]{LewNamRou-14b}.
\end{proof}

\begin{remark}[Moments estimates]\mbox{}\\
It is clear from formula~\eqref{eq:relate_PDM_Husimi} that in any finite-dimensional space $V$, we have
\begin{equation}
\int_V \|u\|^{2s}d\mu_{V,\Gamma}^{\eps}(u)\leq C_{s,V}\tr_{\cF(V)}\big[(\eps\cN)^s\Gamma_V\big]
\label{eq:moment_estimates}
\end{equation}
for any state $\Gamma$ and all integers $s\geq1$ (the constant $C_{s,V}$ only depends on $s$ and $\dim(V)$). By complex interpolation, we deduce that the same holds for any real number $s\geq1$.
\end{remark}

\begin{remark}[A quantitative bound]\mbox{}\\
In the same spirit as~\cite{Chiribella-11,LewNamRou-14b}, from \eqref{eq:relate_PDM_Husimi} and~\eqref{eq:trace_A_k} we deduce the following quantitative estimate 
\begin{align}
&\norm{k!\eps^k\Gamma^{(k)}_V-\int_{V}|u^{\otimes k}\>\<u^{\otimes k}|\;d\mu^\eps_{V,\Gamma}(u)}_{\gS^1(\otimes_s^kV)}\nn\\
& \qquad\qquad=k!\eps^k\sum_{\ell=0}^{k-1}{k\choose \ell} \tr \Big( \Gamma_V^{(\ell)} \otimes _s \1_{\otimes^{k-\ell}_sV} \Big)   \nn\\
&\qquad\qquad\leq k!\eps^k\sum_{\ell=0}^{k-1}{k\choose \ell} {{k-\ell+d-1}\choose{d-1}} \tr\big[\Gamma_V^{(\ell)}\big]\nn\\
&\qquad\qquad\leq \eps^k\,\sum_{\ell=0}^{k-1}{k\choose \ell}^2\frac{(k-\ell+d-1)!}{(d-1)!}\tr\big[\cN^{\ell}\Gamma_V\big]
\label{eq:quantitative}
\end{align}
with $d=\dim(V)$.
We remark that if $\Gamma$ is an $N$-particle state and $\eps=1/N$, then the error term is
$$\frac1N\sum_{j=0}^{k-1}{k\choose {j+1}}^2\frac{(j+d)!}{(d-1)!}\frac{1}{N^{j}}.$$
For fixed $k,d\geq1$ the latter behaves as $k^2d/N$ in the limit $N\to\ii$. This is similar to the bound $4kd/N$ proved in~\cite{ChrKonMitRen-07} and reviewed in~\cite{LewNamRou-14b}, which was based on a resolution of the identity involving Hartree states instead of coherent states.
\end{remark}

We are now able to provide the proof of Theorem~\ref{thm:deFinetti}.

\subsection{Construction of de Finetti measures: proof of Theorem~\ref{thm:deFinetti}}\label{sec:proof_de_Finetti}

We have 
\begin{equation}
\tr\big((\epsilon_n h^{1-p})^{\otimes k}\Gamma_n^{(k)}\big)\leq C_k
\label{eq:bound_DM_proof}
\end{equation}
for all $1\leq k\leq \kappa$ (all $1\leq k<\kappa$ if $\kappa=+\ii$), due to our assumption~\eqref{eq:bound_DM}. 
Let $\{u_j\}$ be a basis of eigenvectors of $h$. First we fix the finite-dimensional space $V=V_J:={\rm span}(u_1,...,u_J)$ with corresponding orthogonal projection $P=P_J$. Let $\{\Gamma_n\}_P$ be the $P$-localized state in the Fock space $\cF(V)$, whose density matrices are $(\Gamma_n)_P^{(k)}=P^{\otimes k}\Gamma_n^{(k)}P^{\otimes k}$ and we denote by $\mu_{V,\Gamma_n}^{\eps_n}$ the Husimi function as defined in Section~\ref{sec:Husimi}. 
Inserting Formula~\eqref{eq:relate_PDM_Husimi} and arguing as in~\eqref{eq:quantitative}, we find that
\begin{equation}
\Big|\int_V \norm{u}^{2k}_{\gH^{1-p}}d\mu_{V,\Gamma_n}^{\eps_n}(u)-k!\eps_n^k\tr\big[(h^{1-p})^{\otimes k}(P_V)^{\otimes k}\Gamma_V^{(k)}(P_V)^{\otimes k}\big]\Big|\leq C_{k,V}\eps_n
\label{eq:test_h_p}
\end{equation}
with a constant that depends on $V$ and on the constants in ~\eqref{eq:bound_DM_proof}. In particular, we have
\begin{equation}
\int_V \norm{u}^{2k}_{\gH^{1-p}}d\mu_{V,\Gamma_n}^{\eps_n}(u)\leq C_k+C_{k,V}\eps_n,\qquad \forall 1\leq k\leq \kappa.
\label{eq:estim_moments_mu_Husimi}
\end{equation}
This is for integer $k$ but, as in~\eqref{eq:moment_estimates}, we have a similar estimate
\begin{equation}
\int_{V}\|u\|_{\gH^{1-p}}^{2s}d\mu_{V,\Gamma_n}^{\eps_n}(u)\leq C_{s,V}
\label{eq:tight_moments}
\end{equation}
for all $1\leq s\leq\kappa$ ($1\leq s<\kappa$ if $\kappa=+\ii$). Therefore the sequence $\mu_{V,\Gamma_n}^{\eps_n}$ is tight and we may assume, after extraction of a subsequence, that $\mu_{V,\Gamma_n}^{\eps_n}$ converges to a probability measure $\nu_{V}$ on $V$. From the moment estimates~\eqref{eq:tight_moments}, we also have 
\begin{equation}
\int_V |u^{\otimes k}\>\<u^{\otimes k}|d\mu_{V,\Gamma_n}^{\eps_n}(u)\to\int_V |u^{\otimes k}\>\<u^{\otimes k}|d\nu_{V}(u)
\label{eq:CV_PDM_measure}
\end{equation}
(the convergence is in a finite-dimensional space, hence strong). By a diagonal argument, we can assume convergence to a measure $\nu_{V_J}$ for every $J\geq1$.

From the bound~\eqref{eq:estim_moments_mu_Husimi} we get
$$\int_V \norm{u}^{2k}_{\gH^{1-p}}d\nu_{V}(u)\leq C_k$$
with a constant depending on $k$ but not on $V$. Using this for $k=1$, we deduce immediately from~\cite[Lemma 1]{Skorokhod-74} that there exists a measure $\nu$ on $\gH^{1-p}$ whose cylindrical projections are $\nu_V$, since 
\begin{equation*}
R^2\nu_V\Big(\{u\in V\ :\ \norm{u}_{\gH^{1-p}}\ge R\}\Big)\leq \int_V\norm{u}_{\gH^{1-p}}^2\,d\nu_V(u)\leq C.
\label{eq:tightness_verification}
\end{equation*}

Finally, recalling~\eqref{eq:def_quantization_b}, we can write for every $b\in C^0_b(V_J)$
$$\int_{V_J}b(u)\,d\mu_{V,\Gamma_n}^{\eps_n}=\tr\big[\mathbb{B}_{\eps_n}\Gamma_n]$$
and therefore the convergence against anti-Wick observables as in~\eqref{eq:limit_Anti-Wick} follows immediately from the convergence of the Husimi measures,  if $V=V_J$ for some $J$. As we have already mentioned in Section~\ref{sec:deFinetti}, the case of a general $V$ follows from~\eqref{eq:unif_bound_Anti_Wick} by density.

As for~\eqref{eq:weak_limit_DM}, we note that the bound~\eqref{eq:bound_DM} implies that, after possibly a further extraction,
$$k!(\epsilon_{n_j})^k\<\Psi,\Gamma_{n_j}^{(k)}\Psi\>$$
has a limit for all $1\leq k < \kappa$ and $\Psi\in \otimes_s^k\gH^{p-1}$. The identification of the limit as given by the right-hand side of~\eqref{eq:weak CV deF} proceeds by arguments similar as above.

\qed

\section{A lower bound on the relative entropy}\label{sec:entropy}

We recall that the relative entropy of two states $\Gamma$ and $\Gamma'$ on $\cF(\gH)$ is defined by
$$\boxed{\cH(\Gamma,\Gamma')=\tr_{\cF(\gH)}\left[\Gamma(\log\Gamma-\log\Gamma')\right].}$$
It is a positive number which can in principle be equal to $+\ii$. An important property of the relative entropy is its monotonicity under 
two-positive trace-preserving maps. That is, if we have a linear map $\Phi:\cB(\gK_1)\to\cB(\gK_2)$ which satisfies 
$$\begin{pmatrix}
A_{11} & A_{12}\\
A_{21} & A_{22}\\
\end{pmatrix}\geq0 \Longrightarrow \begin{pmatrix}
\Phi(A_{11}) & \Phi(A_{12})\\
\Phi(A_{21}) & \Phi(A_{22})\\
\end{pmatrix}\geq0$$ 
for all operators $A_{ij}$ on $\gK_1$ and $\tr(\Phi(A))=\tr(A)$ for all $A$, then 
$$\cH(\Phi(A),\Phi(B))\leq \cH(A,B),$$ 
see~\cite{OhyPet-93,Petz-03}. 

An important two-positive trace-preserving map is the localization $\Gamma\mapsto\Gamma_{V}$ to a subspace $V\subset \gH$, which is defined by using the isomorphism $\cF(\gH)\simeq \cF(V)\otimes\cF(V^\perp)$ and by taking the partial trace with respect to the second Fock space, as recalled in Section~\ref{sec:Fock2}. In particular, we deduce that
\begin{equation}
\cH(\Gamma,\Gamma')\geq \cH\big(\Gamma_{V},\Gamma'_V\big).
\label{eq:monotone}
\end{equation}
It can be shown that 
$$\cH(\Gamma,\Gamma') = \lim_{k\to\ii}\cH\big(\Gamma_{V_k},\Gamma'_{V_k}\big)$$
if $V_k$ is any increasing sequence of subspaces of $\gH$ such that the corresponding orthogonal projections $P_k\to1$ strongly (see~\cite[Cor. 5.12]{OhyPet-93}, as well as~\cite[Thm. 2]{LewSab-14} for a similar result).

In a similar fashion, the relative entropy of two probability measures on a Hilbert space $\gK$ is defined by
$$\boxed{\cHcl(\mu,\mu'):=\int_{\gK}\frac{d\mu}{d\mu'}(u)\log\left(\frac{d\mu}{d\mu'}(u)\right)\,d\mu'(u)}$$
where  $d\mu/d\mu'$ is the density of $\mu$ relatively to $\mu'$. If $\mu$ is not absolutely continuous with respect to $\mu'$, then $\cHcl(\mu,\mu')=+\ii$. The classical relative entropy is also monotone in the sense that 
\begin{equation}
\cHcl(\mu,\mu')\geq \cHcl(\mu_{V},\mu'_{V})
\label{eq:monotone_cl}
\end{equation}
for any subspace $V\subset\gK$, where $\mu_{V}$ and $\mu_{V}'$ are the associated cylindrical projections of $\mu$ and $\mu'$. Similarly as in the quantum case, one has
\begin{equation}
\cHcl(\mu,\mu')= \lim_{k\to\ii}\cHcl(\mu_{V_k},\mu'_{V_k}).
\label{eq:CV_classical_V_k}
\end{equation}

Our next result gives a lower bound on the relative entropy of two quantum states in terms of the corresponding de Finetti measures constructed in Theorem~\ref{thm:deFinetti}.

\begin{theorem}[\textbf{Relative entropy: quantum to classical}] \label{thm:rel-entropy}\mbox{}\\
Let $\Gamma$ and $\Gamma'$ be two states on $\cF(\gH)$ and let $V\subset \gH$ be a finite dimensional-subspace. Then we have
\begin{equation}
\boxed{\cH(\Gamma,\Gamma')\geq \cH(\Gamma_V,\Gamma'_V)\geq \cHcl(\mu_{V,\Gamma}^\eps,\mu_{V,\Gamma'}^\eps)}
\label{eq:Berezin-Lieb}
\end{equation}
where $\mu_{V,\Gamma}^\eps$ and $\mu_{V,\Gamma'}^\eps$ are the Husimi measures defined in Section~\ref{sec:Husimi}.

In particular, if we have $\epsilon_n\to0$ and two sequences of states $\{\Gamma_n\}$ and $\{\Gamma_n'\}$ satisfying the assumptions of Theorem~\ref{thm:deFinetti}, then
\begin{equation}
\boxed{\liminf_{n\to \infty} \cH (\Gamma_n,\Gamma_n') \ge \cH_{\rm cl} (\mu,\mu')}
\label{eq:lower_bound_relative_entropy}
\end{equation}
where $\mu$ and $\mu'$ are the de Finetti probability measures of $\{\Gamma_n\}$ and $(\Gamma'_n)$ respectively, on $\gH^{1-p}$ (after extraction of a subsequence).
\end{theorem}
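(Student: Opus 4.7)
My plan is to separate the two inequalities in \eqref{eq:Berezin-Lieb} and then combine them with a double approximation argument to obtain \eqref{eq:lower_bound_relative_entropy}.

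The first inequality $\cH(\Gamma,\Gamma') \ge \cH(\Gamma_V,\Gamma'_V)$ is already recorded as \eqref{eq:monotone}: the partial trace associated with $\cF(\gH)\simeq \cF(V)\otimes\cF(V^\perp)$ is completely positive and trace-preserving, so the quantum relative entropy cannot increase under it. The heart of the matter is the Berezin--Lieb type second inequality. The key observation is that the Husimi assignment $\rho\mapsto \mu^\eps_{V,\rho}$ is a quantum-to-classical channel: I would introduce the map
\[
\Phi_\eps(\rho) := (\eps\pi)^{-\dim V}\,\langle \xi(\cdot/\sqrt\eps),\rho\,\xi(\cdot/\sqrt\eps)\rangle,
\]
sending trace-class operators on $\cF(V)$ to integrable densities on $V$. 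The rescaled coherent states form a POVM on $\cF(V)$ thanks to the resolution of identity \eqref{eq:resolution_coherent2}, so $\Phi_\eps$ is completely positive and trace-preserving when viewed as a channel with classical output (that is, with image lying in the commutative multiplication algebra on $V$). Lindblad--Uhlmann monotonicity of the quantum relative entropy under CPTP maps, itself a consequence of Lieb's joint convexity of $(A,B)\mapsto \tr(A\log A-A\log B)$, then yields
\[
\cH(\Gamma_V,\Gamma'_V)\;\ge\;\cHcl\!\bigl(\Phi_\eps(\Gamma_V),\Phi_\eps(\Gamma'_V)\bigr)=\cHcl\!\bigl(\mu^\eps_{V,\Gamma},\mu^\eps_{V,\Gamma'}\bigr),
\]
which is \eqref{eq:Berezin-Lieb}. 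Since $V$ is finite-dimensional, there is no subtle domain issue: $\Phi_\eps$ is well-defined on all of $\gS^1(\cF(V))$, and the inequality can alternatively be read as an instance of the Davis--Choi operator Jensen inequality applied to $t\mapsto t\log t$ against the coherent-state POVM.

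For \eqref{eq:lower_bound_relative_entropy} I would fix $J\ge 1$, set $V=V_J:=\mathrm{span}(u_1,\dots,u_J)$ and apply the inequality above to the two sequences $\{\Gamma_n\},\{\Gamma'_n\}$. By the construction in the proof of Theorem~\ref{thm:deFinetti} (a diagonal extraction), one can pass along a common subsequence so that $\mu^{\eps_n}_{V_J,\Gamma_n}\to \mu_{V_J}$ and $\mu^{\eps_n}_{V_J,\Gamma'_n}\to \mu'_{V_J}$ weakly, where $\mu_{V_J},\mu'_{V_J}$ denote the cylindrical projections of the de~Finetti measures $\mu,\mu'$. The Donsker--Varadhan variational formula
\[
\cHcl(\nu,\nu')=\sup_{\phi\in C_b(V_J)}\Bigl\{\int_{V_J}\phi\,d\nu-\log\int_{V_J}e^{\phi}\,d\nu'\Bigr\}
\]
exhibits $\cHcl$ on the finite-dimensional space $V_J$ as a supremum of weakly continuous functionals, hence as jointly weakly lower semicontinuous, so
\[
\liminf_{n\to\infty}\cH(\Gamma_n,\Gamma'_n)\;\ge\;\cHcl(\mu_{V_J},\mu'_{V_J}).
\]
The left-hand side does not depend on $J$, and letting $J\to\infty$ on the right, the monotone approximation property \eqref{eq:CV_classical_V_k} yields $\cHcl(\mu_{V_J},\mu'_{V_J})\uparrow \cHcl(\mu,\mu')$, which concludes the argument.

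The main obstacle I anticipate is the rigorous verification that the continuous coherent-state POVM really does deliver a CPTP quantum-to-classical channel enjoying Lindblad--Uhlmann monotonicity in this functional-analytic setting, i.e.\ that relative entropy decreases under ``continuous measurements''. Once this is in place, the remaining ingredients (monotonicity of $\cH$ under the partial trace, weak lower semicontinuity of $\cHcl$ via Donsker--Varadhan, and the monotone convergence of $\cHcl$ along cylindrical projections) are standard and combine cleanly in the two-step limit $n\to\infty$ followed by $J\to\infty$.
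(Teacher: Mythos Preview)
Your proposal is correct and follows essentially the same route as the paper: monotonicity of $\cH$ under the localization partial trace, a Berezin--Lieb inequality for the relative entropy, and then the two-step limit $n\to\infty$ followed by $J\to\infty$ using weak lower semicontinuity of $\cHcl$ and \eqref{eq:CV_classical_V_k}. The only notable difference concerns the obstacle you flag: rather than appealing directly to Lindblad--Uhlmann monotonicity for the continuous coherent-state POVM, the paper replaces the resolution of identity by a discrete one $\sum_k m_k|x_k\rangle\langle x_k|=\1$, observes that $A\mapsto \mathrm{diag}(m_k\langle x_k,Ax_k\rangle)$ is two-positive and trace-preserving (Stinespring), applies monotonicity there, and then passes to the limit---this sidesteps the functional-analytic verification you were worried about. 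For the lower semicontinuity step the paper invokes joint convexity of $\cHcl$ rather than Donsker--Varadhan, but this is a cosmetic difference.
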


\begin{proof}
We have already explained that $\cH(\Gamma,\Gamma')\geq \cH(\Gamma_V,\Gamma'_V)$, since localization is a 2-positive trace-preserving map. To prove the second inequality in \eqref{eq:Berezin-Lieb}, we can work in the finite-dimensional space $V$. We are going to use a Berezin-Lieb type inequality for the relative entropy, which is the equivalent of well known techniques for the entropy~\cite{Berezin-72,Lieb-73b,Simon-80}. 

\begin{lemma}[\textbf{Berezin-Lieb inequality for the relative entropy}]\label{lem:Berezin-Lieb}\mbox{}\\
Assume that we have a resolution of the identity in a Hilbert space $\gK$, of the form:
\begin{equation}
\int_{M} |n_x\rangle\langle n_x|\, d\zeta(x)=1_{\gK},
\label{eq:resolution_identity}
\end{equation}
where $\zeta$ is a positive Borel measure on $M\subset \R^N$ and $x\in M\mapsto n_x\in S\gK = \left\lbrace u\in \gK, \norm{u} = 1 \right\rbrace$ is continuous.
For a trace-class operator $A\geq0$ on $\gK$, we define the corresponding Husimi Borel measure (or lower symbol) on $M$ by $$dm_A(x)=\pscal{x,Ax}\,d\zeta(x).$$
Then we have 
\begin{equation}
\boxed{\cH(A,B)\geq \cHcl(m_A,m_B)}
\end{equation}
for any states $A,B\geq0$ on $\gK$.
\end{lemma}

We postpone the proof of this result and apply it directly to our situation. In the Fock space $\cF(V)$ we have the resolution of the identity~\eqref{eq:resolution_coherent} and the Husimi function $\mu_{\Gamma,V}^\eps$ is exactly defined as in the lemma. Therefore we immediately conclude that
$\cH(\Gamma_V,\Gamma'_V)\geq \cHcl(\mu_{V,\Gamma}^\eps,\mu_{V,\Gamma'}^\eps).$

In order to prove~\eqref{eq:lower_bound_relative_entropy}, we first localize to the finite dimensional space $V\subset\gH\subset \gH^{1-p}$ spanned by the $J$ first eigenfunctions of $h$. By monotonicity of the relative entropy and~\eqref{eq:Berezin-Lieb}, we find that 
$$\liminf_{n\to \infty} \cH (\Gamma_n,\Gamma_n')\geq \liminf_{n\to \infty} \cH \big(\{\Gamma_n\}_V,\{\Gamma_n'\}_V\big) \geq \liminf_{n\to\ii}\cHcl(\mu_{V,\Gamma_n}^{\eps_n},\mu_{V,\Gamma'_n}^{\eps_n}).$$
By definition of the de Finetti measures in the proof of Theorem~\ref{thm:deFinetti}, we have $\mu_{V,\Gamma_n}^{\eps_n}\wto\mu_V$ and $\mu_{V,\Gamma'_n}^{\eps_n}\wto\mu'_V$. The relative entropy is jointly convex, hence lower semi-continuous and we get
$$\liminf_{n\to \infty} \cH (\Gamma_n,\Gamma_n')\geq \cHcl(\mu_{V},\mu'_{V}).$$
To conclude \eqref{eq:lower_bound_relative_entropy} we remove the localization by passing to the limit $J\to\ii$ using~\eqref{eq:CV_classical_V_k}. 
\end{proof}

It remains to provide the
\begin{proof}[Proof of Lemma~\ref{lem:Berezin-Lieb}]
We write the proof for a discrete resolution of the identity in a finite-dimensional space $\gK$:
$$\sum_{k=1}^Km_k |x_k\rangle\langle x_k|=1,\qquad m_k\geq0.$$
The general case can be proved by passing to the limit. 
The map
$$A\mapsto \begin{pmatrix}
m_1\,\pscal{x_1,Ax_1}& \\
 & m_2\,\pscal{x_2,Ax_2} & \\
 & & \ddots\\
 & & & m_K\,\pscal{x_K,Ax_K}
\end{pmatrix}$$
is two-positive~\cite{Stinespring-55} and trace-preserving, from $\cB(\gK)$ to $\cB(\C^K)$. So we get immediately from the monotonicity that
\begin{align*}
\cH(A,B) &\geq \sum_{k=1}^K\cH(m_k\pscal{x_k,Ax_k},m_k\pscal{x_k,Bx_k})\\
&=\sum_{k=1}^Km_k\pscal{x_k,Ax_k}\log\left(\frac{\pscal{x_k,Ax_k}}{\pscal{x_k,Bx_k}}\right),
\end{align*}
which is what we wanted.
\end{proof}

In this section we have used that the relative entropy is both monotone with respect to two-positive trace-preserving maps, and jointly convex. These are two (equivalent) properties which play an important role in statistical physics and quantum information theory~\cite{Wehrl-78,OhyPet-93,Carlen-10}. They are indeed also equivalent to the \emph{strong subadditivity of the entropy}, which was proved by Lieb and Ruskai in ~\cite{LieRus-73a,LieRus-73b}.

\section{Derivation of the nonlinear Gibbs measures: proofs}\label{sec:proof}

In this final section we provide the proof of Theorems~\ref{thm:main} and~\ref{thm:main2}. Some of our arguments are common to the two results and they will be written in the general case $p\geq1$. Some other parts are much simper in the case $p=1$ and will then be singled out.
We will prove that 
\begin{equation} 
\lim_{\substack{T\to \infty\\ \lambda T\to1}} \frac{Z_\lambda(T)}{Z_0(T)} = z_r=\int_{\gH^{1-p}}\exp\left(-\pscal{u^{\otimes 2},wu^{\otimes 2}}\right)\,d\mu_0(u)
\end{equation}
by showing 
\begin{equation} 
-\lim_{\substack{T\to \infty\\ \lambda T\to1}}\log\left(\frac{Z_\lambda(T)}{Z_0(T)}\right) = -\log(z_r).
\label{eq:to_be_shown}
\end{equation}
We recall Gibbs' variational principle which states that 
\begin{align*}
-T\log(\tr(e^{-A/T}))&=\tr(A\Gamma_A)+T\tr\Gamma_A\log(\Gamma_A)\\
&=\min_{\substack{\Gamma\geq0\\ \tr\Gamma=1}}\Big\{\tr(A\Gamma)+T\tr\Gamma\log(\Gamma)\Big\}
\end{align*}
with $\Gamma_A:=e^{-A/T}/\tr(e^{-A/T})$. Also, we recall that 
\begin{align}
-\log\frac{Z_\lambda(T)}{Z_0(T)}&=\cH(\Gamma_{\lambda,T},\Gamma_{0,T})+\frac\lambda{T}\tr \big(w\Gamma_{\lambda,T}^{(2)}\big)\label{eq:formulation_Gibbs}\\
&=\min_{\substack{\Gamma\geq0\\ \tr\Gamma=1}}\left\{\cH(\Gamma,\Gamma_{0,T})+\frac\lambda{T}\tr \big(w\Gamma^{(2)}\big)\right\}\label{eq:min_Gibbs},
\end{align}
where $\cH$ is the relative entropy. In a similar manner, we have
\begin{equation}
-\log(z_r)=\min_{\substack{\nu\ \text{proba.}\\ \text{measure on $\gH^{1-p}$}}}\left\{\cHcl(\nu,\mu_0)+\frac{1}{2}\int_{\gH^{1-p}}\pscal{u\otimes u,wu\otimes u}d\nu(u)\right\}
\label{eq:min_Gibbs_nonlinear}
\end{equation}
with $\mu$ being the unique minimizer.
Our proof goes as follows:

\smallskip

\noindent \textbf{Step 1:} We derive some estimates on the one-particle density matrix of the Gibbs state $\Gamma_{\lambda,T}$, which allow to define the limiting de Finetti measure $\nu$ (after extraction of a subsequence), via Theorem~\ref{thm:deFinetti}.

\smallskip

\noindent \textbf{Step 2:} We prove the upper bound 
\begin{equation}
\limsup_{\substack{T\to \infty\\ \lambda T\to1}}-\log\left(\frac{Z_\lambda(T)}{Z_0(T)}\right) \leq -\log(z_r)
\end{equation}
by using a suitable trial state and finite-dimensional semi-classical analysis.

\smallskip

\noindent \textbf{Steps 3--4:} We show that 
\begin{equation}
\liminf_{\substack{T\to \infty\\ \lambda T\to1}}\cH(\Gamma_{\lambda,T},\Gamma_{0,T})\geq \cHcl(\nu,\mu_0)
\label{eq:to_be_shown_1}
\end{equation}
using Theorem~\ref{thm:rel-entropy} and that 
\begin{equation}
\liminf_{\substack{T\to \infty\\ \lambda T\to1}}\frac{\tr_{\cF(\gH)}(\Gamma_{\lambda,T}^{(2)}w)}{T^2}\geq \frac{1}{2}\int_{\gH^{1-p}}\pscal{u\otimes u,wu\otimes u}d\nu(u)
\label{eq:to_be_shown_2}
\end{equation}
using the definition of the de Finetti measure. This is the more difficult step when $p>1$, because of the lack of control of $T^{-2}\Gamma_{\lambda,T}^{(2)}$. The lower estimates~\eqref{eq:to_be_shown_1} and~\eqref{eq:to_be_shown_2} together with the variational principle provide the lower bound
\begin{equation}
\liminf_{\substack{T\to \infty\\ \lambda T\to1}}-\log\left(\frac{Z_\lambda(T)}{Z_0(T)}\right) \geq -\log(z_r),
\end{equation}
and the equality $\nu=\mu$.

\smallskip

\noindent \textbf{Steps 5--6:} We discuss the strong convergence of the density matrices.

\medskip

Now we provide the details of the proofs.

\subsubsection*{\bf Step 1: some uniform bounds}
Before starting the proof~\eqref{eq:to_be_shown}, we establish some uniform bounds on $Z_\lambda(T)/Z_0(T)$ and on the one-particle density matrix $\Gamma_{\lambda,T}^{(1)}$, that will be useful throughout the proof. 

\begin{lemma}[\textbf{A priori bounds}]\label{lem:unif_bounds}\mbox{}\\
Let $h>0$ and $w\geq0$ be two self-adjoint operators on $\gH$ and $\gH\otimes_s\gH$ respectively, such that 
$$\tr_{\gH}(e^{-h/T})+\tr_{\gH\otimes_s\gH}(w\,h^{-1}\otimes h^{-1})<\ii.$$ 
Let $\Gamma_{\lambda,T}$ be the grand-canonical quantum Gibbs state defined in \eqref{eq:Gibb-int} and let $Z_\lambda(T)$ be the corresponding partition function. Then we have
\begin{equation}
e^{-(\lambda T)\tr_{\gH\otimes_s\gH} \big(w h^{-1}\otimes h^{-1}\big)} \leq \frac{Z_\lambda(T)}{Z_0(T)}\leq 1
\label{eq:unif_bd2}
\end{equation}
and
\begin{equation}
\tr_{\gH\otimes_s\gH}\big(w\Gamma_{\lambda,T}^{(2)}\big)\leq T^2\tr_{\gH\otimes_s\gH} \big(w\, h^{-1}\otimes h^{-1}\big).
\label{eq:estim_interaction}
\end{equation}
\end{lemma}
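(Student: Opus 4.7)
The plan is to deduce everything from the relative-entropy formulation~\eqref{eq:formulation_Gibbs}--\eqref{eq:min_Gibbs} of the log-ratio of partition functions, combined with the explicit formulas from Lemma~\ref{lem:quasi-free} for the free Gibbs state. In particular, I will avoid any direct manipulation of operator exponentials.

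First, the upper bound $Z_\lambda(T)/Z_0(T)\leq 1$ is immediate from~\eqref{eq:formulation_Gibbs}: the relative entropy $\cH(\Gamma_{\lambda,T},\Gamma_{0,T})$ is non-negative, and $\tr(w\Gamma_{\lambda,T}^{(2)})\geq 0$ since $w\geq 0$, so $-\log(Z_\lambda(T)/Z_0(T))\geq 0$.

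Next, for the lower bound on $Z_\lambda(T)/Z_0(T)$ and for the interaction estimate~\eqref{eq:estim_interaction}, I will plug the non-interacting Gibbs state $\Gamma_{0,T}$ as a trial state into the variational formulation~\eqref{eq:min_Gibbs}:
\begin{equation*}
-\log\frac{Z_\lambda(T)}{Z_0(T)}\;\leq\;\cH(\Gamma_{0,T},\Gamma_{0,T})+\frac{\lambda}{T}\tr\bigl(w\Gamma_{0,T}^{(2)}\bigr)=\frac{\lambda}{T}\tr\bigl(w\Gamma_{0,T}^{(2)}\bigr).
\end{equation*}
By Lemma~\ref{lem:quasi-free}, $\Gamma_{0,T}^{(2)}=(e^{h/T}-1)^{-1}\otimes (e^{h/T}-1)^{-1}$, and the elementary scalar inequality $e^{x}-1\geq x$ on $[0,\infty)$ gives $(e^{h/T}-1)^{-1}\leq T h^{-1}$ in the operator sense. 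Since both factors are non-negative this implies, on the symmetric tensor product,
\begin{equation*}
\Gamma_{0,T}^{(2)}\;\leq\;T^{2}\,h^{-1}\otimes h^{-1},
\end{equation*}
and hence $\tr(w\Gamma_{0,T}^{(2)})\leq T^{2}\tr(w\,h^{-1}\otimes h^{-1})$, using $w\geq 0$. Combining the two displays yields $-\log(Z_\lambda(T)/Z_0(T))\leq \lambda T\tr(w\,h^{-1}\otimes h^{-1})$, which is exactly the lower bound in~\eqref{eq:unif_bd2}.

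Finally, for~\eqref{eq:estim_interaction}, I insert the same trial-state estimate back into the equality~\eqref{eq:formulation_Gibbs}: since $\cH(\Gamma_{\lambda,T},\Gamma_{0,T})\geq 0$,
\begin{equation*}
\frac{\lambda}{T}\tr\bigl(w\Gamma_{\lambda,T}^{(2)}\bigr)\;\leq\;-\log\frac{Z_\lambda(T)}{Z_0(T)}\;\leq\;\frac{\lambda}{T}\tr\bigl(w\Gamma_{0,T}^{(2)}\bigr)\;\leq\;\lambda T\,\tr\bigl(w\,h^{-1}\otimes h^{-1}\bigr),
\end{equation*}
and dividing by $\lambda/T$ gives the claim (for $\lambda>0$; the case $\lambda=0$ is trivial). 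There is no real obstacle here: the whole argument rests on the non-negativity of the relative entropy and the pointwise bound $1/(e^x-1)\leq 1/x$, so the only thing to be careful about is the operator-monotonicity step $\Gamma_{0,T}^{(2)}\leq T^2 h^{-1}\otimes h^{-1}$, which is standard on symmetric tensor products of commuting factors.
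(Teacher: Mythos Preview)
Your proof is correct and essentially identical to the paper's. The only cosmetic difference is that the paper obtains $Z_\lambda(T)\leq Z_0(T)$ directly from $\mathbb{W}\geq 0$ (monotonicity of the partition function), whereas you read it off from the relative-entropy identity~\eqref{eq:formulation_Gibbs}; the trial-state argument for the lower bound and the interaction estimate are the same in both.
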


\begin{proof}[Proof of Lemma~\ref{lem:unif_bounds}]
First we remark that 
$$Z_\lambda(T)=\tr_{\cF(\gH)}e^{-\frac{\bH+\lambda\mathbb{W}}{T}}\leq \tr_{\cF(\gH)}e^{-\frac{\bH}{T}}=Z_0(T).$$
since $w\geq0$ by assumption. On the other hand, we have by~\eqref{eq:min_Gibbs} with $\Gamma=\Gamma_{0,T}$,
\begin{equation*}
-\log\frac{Z_\lambda(T)}{Z_0(T)}\leq \frac\lambda{T}\tr_{\gH\otimes_s\gH} \big(w\Gamma_{0,T}^{(2)}\big)\leq (\lambda T)\tr_{\gH\otimes_s\gH} \big(w h^{-1}\otimes h^{-1}\big).
\end{equation*}
Here we have used that 
$$\Gamma_{0,T}^{(2)}=\left(\frac{1}{e^{h/T}-1}\right)^{\otimes 2}\leq T^2 h^{-1}\otimes h^{-1}$$
by Lemma~\ref{lem:quasi-free}. Hence we have proved that
\begin{equation}
0\leq -\log\frac{Z_\lambda(T)}{Z_0(T)}\leq (\lambda T)\tr_{\gH\otimes_s\gH} \big(w h^{-1}\otimes h^{-1}\big)
\label{eq:unif_bd}
\end{equation}
which may as well be rewritten as in~\eqref{eq:unif_bd2}. The bound~\eqref{eq:estim_interaction} follows from~\eqref{eq:unif_bd} and the fact that
\begin{align*}
-\log\frac{Z_\lambda(T)}{Z_0(T)}&=\cH(\Gamma_{\lambda,T},\Gamma_{0,T})+(\lambda/T)\tr_{\gH\otimes_s\gH} \big(w\Gamma_{\lambda,T}^{(2)}\big)\\
&\geq (\lambda/T)\tr_{\gH\otimes_s\gH} \big(w\Gamma_{\lambda,T}^{(2)}\big)
\end{align*}
due to the non-negativity of the relative entropy.
\end{proof}

After having considered the partition function, we now prove that essentially $\Gamma_{\lambda,T}^{(1)} \leq C \: \Gamma_{0,T}^{(1)}$ when $\lambda \sim T^{-1}$.

\begin{lemma}[\textbf{Bounds on the one-particle density matrix}]\label{lem:1PDM_bounded}\mbox{}\\
Under the assumptions of Lemma~\ref{lem:unif_bounds}, we have
\begin{equation}
0\leq \Gamma_{\lambda,T}^{(1)}\leq 2T\Big(1+\lambda T\tr_{\gH\otimes_s\gH} (w h^{-1}\otimes h^{-1})\Big)\,h^{-1}.
\label{eq:1PDM_bounded}
\end{equation}
In particular we get
\begin{align} 
\Tr_{\cF(\gH)} \Big[ \dGamma(h^{1-p}) \Gamma \Big] &= \Tr_\gH \Big[ h^{1-p} \Gamma^{(1)} \Big]\nn\\
&\le 2T\Big(1+\lambda T\tr_{\gH\otimes_s\gH} (w h^{-1}\otimes h^{-1})\Big)\tr_\gH(h^{-p}).\label{eq:1PDM-dA}
\end{align}
\end{lemma}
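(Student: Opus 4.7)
The bound \eqref{eq:1PDM-dA} follows immediately from the operator inequality \eqref{eq:1PDM_bounded} by multiplying both sides by $h^{1-p}\geq 0$ and taking the trace (using $\tr_\gH(h^{1-p}\,h^{-1})=\tr_\gH(h^{-p})$), so the plan is to establish \eqref{eq:1PDM_bounded}. Since $\Gamma^{(1)}_{\lambda,T}\geq 0$, this operator inequality is equivalent to proving, for every $g\in\gH$,
$$\langle g,\Gamma^{(1)}_{\lambda,T}g\rangle\leq 2T\bigl(1+\lambda T\,\tr_{\gH\otimes_s\gH}(w\,h^{-1}\otimes h^{-1})\bigr)\langle g,h^{-1}g\rangle.$$
The strategy is to introduce the rank-one perturbed one-particle operator $h_\eta:=h-\eta\,|g\rangle\langle g|$, where I will eventually take $\eta:=1/(2\langle g,h^{-1}g\rangle)$. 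The elementary bound $|g\rangle\langle g|\leq\langle g,h^{-1}g\rangle\,h$ (Cauchy--Schwarz at the operator level) then guarantees that $h_\eta\geq h/2>0$, and hence that the perturbed partition functions $Z^{-\eta}_0(T)$ and $Z^{-\eta}_\lambda(T):=\tr_{\cF(\gH)}\exp(-(\dGamma(h_\eta)+\lambda\mathbb{W})/T)$ are finite under the standing assumptions.

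Applying the Gibbs variational principle to the perturbed Fock-space Hamiltonian $\bH_\lambda-\eta\dGamma(|g\rangle\langle g|)$ with trial state $\Gamma_{\lambda,T}$, and using that $\Gamma_{\lambda,T}$ saturates the variational principle for $\bH_\lambda$ itself, yields
$$\eta\,\langle g,\Gamma^{(1)}_{\lambda,T}g\rangle\;\leq\; T\log\!\bigl(Z^{-\eta}_\lambda(T)/Z_\lambda(T)\bigr).$$
Since $\mathbb{W}\geq 0$ we have $Z^{-\eta}_\lambda(T)\leq Z^{-\eta}_0(T)$, and Lemma~\ref{lem:unif_bounds} already provides $T\log(Z_0(T)/Z_\lambda(T))\leq \lambda T^2\tr(w\,h^{-1}\otimes h^{-1})$. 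It thus remains to bound the purely non-interacting ratio $Z^{-\eta}_0(T)/Z_0(T)$ from above.

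For this I would use Lemma~\ref{lem:quasi-free}, which gives $\log(Z^{-\eta}_0(T)/Z_0(T))=\tr[f(h_\eta)-f(h)]$ with $f(x):=-\log(1-e^{-x/T})$ smooth, decreasing and convex on $(0,\infty)$ and $-f'(x)=1/(T(e^{x/T}-1))$. The fundamental theorem of calculus for the trace of a smooth operator function gives
$$\tr[f(h_\eta)-f(h)]\;=\;\eta\int_0^1\tr\!\left(\frac{|g\rangle\langle g|}{T\bigl(e^{(h-s\eta|g\rangle\langle g|)/T}-1\bigr)}\right)ds.$$
The operator inequality $(e^X-1)^{-1}\leq X^{-1}$ (valid for any self-adjoint $X>0$ by functional calculus, since it holds pointwise on the spectrum), combined with $h-s\eta|g\rangle\langle g|\geq h/2$ for $s\in[0,1]$ and the operator monotonicity of $x\mapsto x^{-1}$, bounds the integrand by $2\langle g,h^{-1}g\rangle$. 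Hence $\log(Z^{-\eta}_0(T)/Z_0(T))\leq 2\eta\langle g,h^{-1}g\rangle$, and chaining all the previous estimates gives
$$\langle g,\Gamma^{(1)}_{\lambda,T}g\rangle\;\leq\; 2T\langle g,h^{-1}g\rangle+\frac{\lambda T^2}{\eta}\,\tr(w\,h^{-1}\otimes h^{-1}).$$
Substituting the chosen value $\eta=1/(2\langle g,h^{-1}g\rangle)$, so that $1/\eta=2\langle g,h^{-1}g\rangle$, produces exactly the desired scalar inequality, hence the operator bound \eqref{eq:1PDM_bounded}.

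The step I expect to be most delicate is the last one, controlling $Z^{-\eta}_0(T)/Z_0(T)$: one must upgrade the scalar-spectrum formula of Lemma~\ref{lem:quasi-free} to an operator-level linearisation of the convex trace $\tr f$ along the rank-one perturbation, and chain two operator-monotonicity arguments---first the Bernoulli-type inequality $(e^X-1)^{-1}\leq X^{-1}$, then the resolvent bound $h_\eta^{-1}\leq 2h^{-1}$ coming from $h_\eta\geq h/2$---to extract the precise factor $2$ that appears in the statement.
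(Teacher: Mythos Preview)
Your proof is correct and follows essentially the same Feynman--Hellmann strategy as the paper: perturb $h$ by a rank-one operator chosen so that the perturbed one-body Hamiltonian stays $\geq h/2$, use the Gibbs variational principle to turn the expectation $\langle g,\Gamma_{\lambda,T}^{(1)}g\rangle$ into a free-energy difference, and then control the non-interacting part via $(e^{X/T}-1)^{-1}\leq T X^{-1}$ together with $(h-A/2)^{-1}\leq 2h^{-1}$. The only cosmetic differences are that the paper parameterizes the perturbation as $A=h^{1/2}|x\rangle\langle x|h^{1/2}$ with $\|x\|=1$ (equivalent to your choice via $g=h^{1/2}x$), and that it bounds $\tr[\log(1-e^{-(h-A/2)/T})-\log(1-e^{-h/T})]$ in one step using Klein's inequality for the convex function $f(x)=-\log(1-e^{-x})$ rather than integrating along the perturbation as you do.
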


\begin{proof}[Proof of Lemma~\ref{lem:1PDM_bounded}]
The proof is reminiscent of a Feynman-Hellmann argument. Let $x\in D(h)$ be a fixed normalized vector and $A:=h^{1/2}P_xh^{1/2}$ with $P_x=|x\>\< x|$. Note then that $0\leq A\leq h$.
Using Lemma~\ref{lem:unif_bounds}, we write
\begin{align*}
&(\lambda T)\tr \big(w h^{-1}\otimes h^{-1}\big)-\frac{1}{2T}\tr A\Gamma_{\lambda,T}^{(1)}\\
&\qquad\qquad\geq
-\log\frac{Z_\lambda(T)}{Z_0(T)}-\frac{1}{2T}\tr A\Gamma_{\lambda,T}^{(1)}\\
&\qquad\qquad=\cH(\Gamma_{\lambda,T},\Gamma_{0,T})+\frac\lambda{T}\tr \big(w\Gamma_{\lambda,T}^{(2)}\big)-\frac{1}{2T}\tr A\Gamma_{\lambda,T}^{(1)}\\
&\qquad\qquad\geq \inf_{\Gamma'}\left\{\cH(\Gamma',\Gamma_{0,T})-\frac{1}{2T}\tr A(\Gamma')^{(1)}\right\}\\
&\qquad\qquad=\frac1T\tr_\gH \Big(\log(1-e^{-(h-A/2)/T})-\log(1-e^{-h/T})\Big).
\end{align*}
The function $f(x)=-\log(1-e^{-x})$ being convex, we have Klein's inequality 
$$\tr \big(f(A)-f(B)\big)\geq \tr f'(B)(A-B)$$
for any self-adjoint operators $A,B$ (see, e.g.,~\cite[Prop. 3.16]{OhyPet-93} and~\cite[Thm 2.5.2]{Ruelle}). Therefore we obtain
\begin{align*}
(\lambda T)\tr \big(w h^{-1}\otimes h^{-1}\big)-\frac1{2T}\tr A\Gamma_{\lambda,T}^{(1)}&\geq -\frac{1}{2T} \tr \left(A\frac{1}{e^{(h-A/2)/T}-1}\right)\\
&\geq -\frac12\tr \left(A (h-A/2)^{-1}\right)\\
&\geq -\tr \left(A h^{-1}\right)=-1.
\end{align*}

The conclusion is that 
$$\pscal{x,h^{1/2}\frac{\Gamma_{\lambda,T}^{(1)}}{2T}h^{1/2}x}\leq 1+(\lambda T)\tr \big(w h^{-1}\otimes h^{-1}\big)$$
for every normalized $x$, which means that
$$h^{1/2}\frac{\Gamma_{\lambda,T}^{(1)}}{2T}h^{1/2}\leq 1+(\lambda T)\tr \big(w h^{-1}\otimes h^{-1}\big)$$
in the sense of quadratic forms. Multiplying by $h^{-1/2}$ on both sides gives~\eqref{eq:1PDM_bounded}.
\end{proof}

\subsubsection*{\bf Step 2: free-energy upper bound.} 

Here we prove the upper bound by means of a trial state argument. We treat the cases $p=1$ and $p>1$ at once.

\begin{lemma}[\textbf{Free-energy upper bound}]\label{lem:upper_bound}\mbox{}\\
Let $h>0$ and $w\geq0$ be two self-adjoint operators on $\gH$ and $\gH\otimes_s\gH$ respectively, such that 
$$\tr_{\gH}(h^{-p})+\tr_{\gH\otimes_s\gH}(w\,h^{-1}\otimes h^{-1})<\ii$$ 
for some $1\leq p<\ii$. Then we have
\begin{equation}
\limsup_{\substack{T\to \infty\\ \lambda T\to1}}-\log\left(\frac{Z_\lambda(T)}{Z_0(T)}\right) \leq -\log(z_r).
\label{eq:upper_bound}
\end{equation}
\end{lemma}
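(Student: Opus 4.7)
The plan is to apply Gibbs' variational principle~\eqref{eq:min_Gibbs}, which reduces the upper bound to exhibiting a trial state $\Gamma^{\rm trial}_{T,K}$ such that
\begin{equation*}
\limsup_{\substack{T\to\infty\\ \lambda T\to1}}\Big\{\cH(\Gamma^{\rm trial}_{T,K},\Gamma_{0,T})+\tfrac{\lambda}{T}\tr\bigl(w(\Gamma^{\rm trial}_{T,K})^{(2)}\bigr)\Big\}\le\cHcl(\mu_{V_K},\mu_{0,V_K})+\tfrac12\int\pscal{u^{\otimes 2},wu^{\otimes 2}}d\mu_{V_K}(u)+\eps_K,
\end{equation*}
with $\eps_K\to 0$ as $K\to\infty$, and then passing $K\to\infty$ using~\eqref{eq:CV_classical_V_k} and the variational characterization~\eqref{eq:min_Gibbs_nonlinear} of $-\log z_r$.

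I would fix a finite-dimensional subspace $V_K=\mathrm{span}(u_1,\ldots,u_K)$ spanned by the first $K$ eigenvectors of $h$, and denote by $f_K$ the Lebesgue density on $V_K$ of the cylindrical projection $\mu_{V_K}$ of $\mu$. Since $\Gamma_{0,T}$ is quasi-free, it factorizes as $\Gamma_{0,T}=\Gamma_{0,T,V_K}\otimes\Gamma_{0,T,V_K^{\perp}}$ along the splitting $\cF(\gH)\simeq\cF(V_K)\otimes\cF(V_K^{\perp})$, and the proposed trial state is
\begin{equation*}
\Gamma^{\rm trial}_{T,K}:=\tilde\Gamma_{T,K}\otimes\Gamma_{0,T,V_K^{\perp}},\qquad \tilde\Gamma_{T,K}:=\int_{V_K}f_K(u)\,|\xi(\sqrt{T}u)\rangle\langle\xi(\sqrt{T}u)|\,du,
\end{equation*}
namely an anti-Wick quantization of $\mu_{V_K}$ on the low modes tensored with the free Gibbs state on the high modes. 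The product rule for the relative entropy then gives $\cH(\Gamma^{\rm trial}_{T,K},\Gamma_{0,T})=\cH(\tilde\Gamma_{T,K},\Gamma_{0,T,V_K})$, reducing the entropy part to a finite-dimensional semiclassical problem on $\cF(V_K)$.

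Writing $\cH(\tilde\Gamma_{T,K},\Gamma_{0,T,V_K})=-S(\tilde\Gamma_{T,K})+\log Z_{0,V_K}(T)+T^{-1}\tr(\dGamma(h|_{V_K})\,\tilde\Gamma_{T,K})$, each piece is handled explicitly. The one-particle density $\tilde\Gamma_{T,K}^{(1)}=T\int|u\rangle\langle u|d\mu_{V_K}(u)$ together with $\langle\xi(\sqrt{T}u),\dGamma(h|_{V_K})\xi(\sqrt{T}u)\rangle=T\pscal{u,hu}$ produces the energy $\int\pscal{u,hu}d\mu_{V_K}(u)$; Lemma~\ref{lem:quasi-free} yields $\log Z_{0,V_K}(T)=K\log T-\sum_{i\le K}\log\lambda_i+o(1)$; and Lieb's Berezin-type inequality (valid because $\tilde\Gamma_{T,K}$ has the non-negative P-representation $f_K\ge 0$) furnishes $S(\tilde\Gamma_{T,K})\ge S^{\rm cl}(\mu_{V_K})+K\log(T/\pi)$ with $S^{\rm cl}(\mu_{V_K})=-\int f_K\log f_K\,du$. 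The divergent $K\log T$ pieces cancel and one obtains
\begin{equation*}
\cH(\tilde\Gamma_{T,K},\Gamma_{0,T,V_K})\le\cHcl(\mu_{V_K},\mu_{0,V_K})+o(1).
\end{equation*}
For the interaction, Lemma~\ref{lem:link_PDM_Husimi} (or~\eqref{eq:DM_coherent} applied inside the integral) gives $\tilde\Gamma_{T,K}^{(2)}=T^2\int|u^{\otimes 2}\rangle\langle u^{\otimes 2}|d\mu_{V_K}(u)+O(T)$, and decomposing $(\Gamma^{\rm trial}_{T,K})^{(2)}$ along the $V_K^{\otimes 2}$, $V_K\otimes_s V_K^{\perp}$ and $(V_K^{\perp})^{\otimes 2}$ blocks isolates the dominant contribution $\tfrac12\int\pscal{u^{\otimes 2},wu^{\otimes 2}}d\mu_{V_K}(u)$, the remaining cross and high-mode blocks forming $\eps_K$.

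Passing $T\to\infty,\lambda T\to1$ at fixed $K$ and then $K\to\infty$ concludes: the entropy bound converges by~\eqref{eq:CV_classical_V_k}, the main interaction piece by monotone convergence, and the remainder $\eps_K$ is controlled by $\tr((P_{V_K}\otimes P_{V_K^{\perp}})w(h^{-1})^{\otimes 2})+\tr(P_{V_K^{\perp}}^{\otimes 2}w(h^{-1})^{\otimes 2})$ (plus symmetric terms), which tends to zero using $w(h^{-1})^{\otimes 2}\in\gS^1$ (or, for $p>1$, the stronger bound $w\le Ch^{1-p'}\otimes h^{1-p'}$) together with $P_{V_K^{\perp}}\to 0$ strongly. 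The main obstacle is the finite-dimensional semiclassical entropy bound $S(\tilde\Gamma_{T,K})\ge S^{\rm cl}(\mu_{V_K})+K\log(T/\pi)+o(1)$, which requires both a careful application of Lieb's P-representation inequality in $\cF(V_K)$ and an integrability check for $f_K\log f_K$ (to ensure $S^{\rm cl}(\mu_{V_K})$ is well-defined and lower semicontinuous in $K$); this step is in the spirit of the finite-dimensional semiclassical analysis of~\cite{Gottlieb-05,Knowles-thesis,Rougerie-cdf}.
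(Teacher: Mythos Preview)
Your proposal is correct and constitutes a genuinely different route from the paper's proof. Both arguments localize to $V_K$, tensor with the free Gibbs state on $V_K^\perp$, and control the cross-block interaction terms in the same way via $\tr\big(w(h^{-1}\otimes P_K^\perp h^{-1})\big)\to 0$. The key difference is the choice of trial state on the low modes and the accompanying semiclassical tool. The paper takes the \emph{finite-dimensional interacting Gibbs state} $\Gamma_{\lambda,T}^{\leq K}$ as the low-mode factor, which reduces everything to the ratio $Z_\lambda^{\leq K}(T)/Z_0^{\leq K}(T)$; a lower bound on $Z_\lambda^{\leq K}(T)$ then follows from the coherent-state resolution of the identity and the elementary Peierls--Bogoliubov inequality $\langle x,e^A x\rangle\geq e^{\langle x,Ax\rangle}$. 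You instead take the \emph{anti-Wick quantization of the classical Gibbs measure} $\mu_{V_K}$, and the semiclassical input is the upper-symbol Berezin--Lieb inequality $S(\tilde\Gamma_{T,K})\geq -\int f_K\log f_K\,du$ (valid since $-x\log x$ is concave). The paper's approach is slightly more economical: the $K\to\infty$ limit is a direct dominated-convergence argument on the single expression $\int e^{-\langle u^{\otimes 2},P_K^{\otimes 2}wP_K^{\otimes 2}u^{\otimes 2}\rangle/2}d\mu_0$ (bounded by~$1$), whereas you must pass to the limit separately in the relative entropy and in $\tr\big[wP_K^{\otimes 2}\gamma^{(2)}P_K^{\otimes 2}\big]$. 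Your approach has the conceptual advantage that the trial state is visibly the quantization of the limiting object.

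Two minor points: (i) the formula $\tilde\Gamma_{T,K}^{(2)}=T^2\int|u^{\otimes 2}\rangle\langle u^{\otimes 2}|\,d\mu_{V_K}$ is \emph{exact} (no $O(T)$ correction), since $[|\xi(v)\rangle\langle\xi(v)|]^{(2)}=|v^{\otimes 2}\rangle\langle v^{\otimes 2}|$; (ii) ``monotone convergence'' is not the right justification for the interaction limit, but the convergence holds: write $\tr[wP_K^{\otimes 2}\gamma^{(2)}P_K^{\otimes 2}]=\|w^{1/2}P_K^{\otimes 2}(\gamma^{(2)})^{1/2}\|_{\gS^2}^2$ and use that $w^{1/2}(h^{-1/2})^{\otimes 2}\in\gS^2$ (by the hypothesis $\tr[w(h^{-1})^{\otimes 2}]<\infty$), $(h^{1/2})^{\otimes 2}(\gamma^{(2)})^{1/2}$ is bounded (since $\gamma^{(2)}\leq Cz_r^{-1}(h^{-1})^{\otimes 2}$), and $P_K^{\otimes 2}$ commutes with $(h^{\pm 1/2})^{\otimes 2}$ and tends to $1$ strongly.
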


\begin{proof}
Let $J$ be a fixed integer and let $V_J={\rm span}(u_1,...,u_J)$ with associated orthogonal projection $P_J$ in $\gH^{1-p}$, where we recall that the $u_j$'s are the eigenvectors of $h$. We recall the isomorphism of Fock spaces 
$$\cF(\gH)\simeq \cF(V_J)\otimes\cF(V_J^\perp)$$ 
which corresponds to the decomposition 
$$\Gamma_{0,T}\simeq \Gamma_{0,T}^{\leq J}\otimes \Gamma_{0,T}^{> J}$$
with
$$\Gamma_{0,T}^{\leq J}=\prod_{j=1}^J(e^{\lambda_j/T}-1)\, e^{-\bH_0^{\leq J}/T},\qquad \Gamma_{0,T}^{>J}=\prod_{j\geq J+1}(e^{\lambda_j/T}-1)\, e^{-\bH_0^{> J}/T}$$
and an obvious similar notation for the operators $\bH_0^{\leq J}$ and $\bH_0^{>J}$ of $\cF(V_J)$ and $\cF(V_J^\perp)$, respectively. In the same vein, we define
$$\Gamma_{\lambda,T}^{\leq J}=\frac{e^{-\bH_\lambda^{\leq J}/T}}{\tr_{\cF(V_J)}e^{-\bH_\lambda^{\leq J}/T}}$$
with $\bH_\lambda^{\leq J}:=\bH^{\leq J}_0+\lambda \mathbb{W}^{\leq J}$ where $\mathbb{W}^{\leq J}$ is the second quantization of the operator $P_J^{\otimes2}wP_J^{\otimes2}$ in the Fock space $\cF(V_J)$. Now, our trial state is simply
\begin{equation}
\boxed{\Gamma=\Gamma_{\lambda,T}^{\leq J}\otimes \Gamma_{0,T}^{> J},}
\label{eq:trial_state}
\end{equation}
that is, we use the truncated interaction in $V_J$ and the free Gibbs state outside of $V_J$.
Due to the variational principle~\eqref{eq:min_Gibbs}, we have
$$-\log\frac{Z_\lambda(T)}{Z_0(T)}\leq \cH(\Gamma,\Gamma_{0,T})+\frac\lambda T \tr \big[w\Gamma^{(2)}\big].$$
We now estimate the terms on the right side.

\medskip

\noindent\emph{Reduction to a finite dimensional estimate.} For the relative entropy, we use that $\cH(A\otimes B,C\otimes B)=\cH(A,C)$ and deduce 
$$\cH(\Gamma,\Gamma_{0,T})=\cH(\Gamma_{\lambda,T}^{\leq J},\Gamma_{0,T}^{\leq J}).$$
For the interaction term, a calculation shows that 
$$ \Gamma^{(2)} = [\Gamma_{\lambda,T}^{\leq J}]^{(2)} + [\Gamma_{0,T}^{>J}]^{(2)} + \frac{1}{2} \Big( [\Gamma_{\lambda,T}^{\leq J}]^{(1)} \otimes [\Gamma_{0,T}^{>J}]^{(1)} + [\Gamma_{0,T}^{>J}]^{(1)}\otimes [\Gamma_{\lambda,T}^{\leq J}]^{(1)}\Big).$$
Using Lemma~\ref{lem:1PDM_bounded} in the space $\cF(V_J)$ and the facts that 
$$[\Gamma_{0,T}^{>J}]^{(2)}=\left(\frac{1}{e^{h^{>J}/T}-1}\right)^{\otimes 2}\leq T^2h^{-1}(P_J)^\perp\otimes h^{-1}(P_J)^\perp,$$
$$[\Gamma_{0,T}^{>J}]^{(1)}=\frac{1}{e^{h^{>J}/T}-1}\leq Th^{-1}(P_J)^\perp,$$
we conclude that
$$\Gamma^{(2)} \leq [\Gamma_{\lambda,T}^{\leq J}]^{(2)} + T^2C\,\big(h^{-1}\otimes h^{-1}\big)\big(P_J^\perp\otimes P_J^\perp +P_J\otimes P_J^\perp+P_J^\perp\otimes P_J\big)$$
and hence, when $\lambda \sim T ^{-1}$,
$$-\log\frac{Z_\lambda(T)}{Z_0(T)}\leq \cH(\Gamma_{\lambda,T}^{\leq J},\Gamma_{0,T}^{\leq J})+\frac\lambda T \tr w[\Gamma_{\lambda,T}^{\leq J}]^{(2)}+C\tr(wh^{-1}\otimes (P_J^\perp h^{-1})).$$
By definition, $\Gamma_{\lambda,T}^{\leq J}$ minimizes the sum of the first two terms, so we arrive at 
\begin{equation}
-\log\frac{Z_\lambda(T)}{Z_0(T)}\leq -\log\frac{Z_\lambda^{\leq J}(T)}{Z_0^{\leq J}(T)}+C\tr(wh^{-1}\otimes (P_J^\perp h^{-1}))
\label{eq:estim_J}
\end{equation}
with 
$$Z_\lambda^{\leq J}(T)=\tr_{\cF(V_J)}\left(e^{-\bH_\lambda^{\leq J}/T}\right).$$
The last term on the right side of~\eqref{eq:estim_J} is independent of $T$ and it converges to 0 when $J\to\ii$ since $\tr(wh^{-1}\otimes h^{-1})$ is finite by assumption. But first we are going to take the limit $T\to\ii$.

\medskip

\noindent\emph{Semi-classics in the projected space.}
Finite-dimensional semi-classical analysis predicts that
\begin{align*}
Z^{\leq J}_\lambda(T)&\underset{T\to\ii}\sim \left(\frac{T}\pi\right)^{J}\int_{V_J}e^{-\pscal{u,hu}-\lambda T\pscal{u\otimes u,w u\otimes u}}\,du,\\ \quad Z_0^{\leq J}(T)&\underset{T\to\ii}\sim \left(\frac{T}\pi\right)^{J}\int_{V_J}e^{-\pscal{u,hu}}\,du.
\end{align*}
For $Z_0^{\leq J}(T)$, this can be justified using a direct computation:
\begin{equation} \label{eq:Free-non-int<=}
\frac{Z_0^{\leq J}(T)}{T^J} = \frac{1}{T^J} \prod_{i=1}^J \frac{1}{1-e^{-\lambda_i/T}} \underset{T\to \infty}{\longrightarrow}\prod_{i=1}^J\frac{1}{\lambda_i} = \pi^{-J} \int_{V_J} e^{-\pscal{u,hu}} d u.
\end{equation}
We only need the lower bound on $Z_\lambda^{\leq J}(T)$, i.e. an upper bound on the finite-dimensional free-energy. This is an exact estimate that does not require to take the limit $T\to\ii$. The proof is well-known~\cite{Lieb-73b,Simon-80,LieSeiYng-05} and we quickly discuss it for completeness.

We recall the resolution of the identity~\eqref{eq:resolution_coherent} in terms of coherent states, from which we conclude that
\begin{equation} \label{eq:Tr-Gibbs-coherent}
\tr_{\cF(V_J)}\left[e^{-\bH_\lambda^{\leq J}/T}\right]= \frac{T^J}{\pi^{J}} \int_{V_J} \left\langle 0 \left| W(\sqrt{T}u)^* e^{-\bH_\lambda^{\leq J}/T} W(\sqrt{T}u)\right|0 \right\rangle du. 
\end{equation}
By the Peierls-Bogoliubov inequality $\pscal{x,e^Ax}\geq e^{\pscal{x,Ax}}$, we obtain
\begin{multline} \label{eq:Tr-Gibbs-coherent2}
\tr_{\cF(V_J)}\left[e^{-\bH_\lambda^{\leq J}/T}\right]\geq \\
\left(\frac{T}{\pi}\right)^J \int_{V_J} \exp\left(-\left\langle 0 \left| W(\sqrt{T}u)^* \frac{\bH_\lambda^{\leq J}}{T}W(\sqrt{T}u)\right|0 \right\rangle\right) du. 
\end{multline}
From the property~\eqref{eq:coherent_eigenfn} that coherent states are eigenfunctions of the annihilation operator (or equivalently from the formula~\eqref{eq:DM_coherent} of its density matrices), one can easily show that
$$\forall u\in V_J,\qquad\left\langle 0 \left| W(\sqrt{T}u)^* \bH_\lambda^{\leq J}W(\sqrt{T}u)\right|0 \right\rangle=T \pscal{u,hu}+\frac{\lambda T ^2}2 \pscal{u\otimes u,wu \otimes u}$$
and the lower bound on $Z_\lambda^{\leq J}(T)$ follows immediately.

\medskip 

\noindent\emph{Passing to the limit: $T\to \infty$, then $J\to \infty$.}
In conclusion, we have proved that
\begin{align*}
&-\log\left(\frac{Z_\lambda(T)}{Z_0(T)}\right)\\
&\quad\leq -\log\left(\frac{(T/\pi)^J\int_{V_J}e^{-\cE(u)}\,du}{Z_0^{\leq J}(T)}\right)+C\tr(wh^{-1}\otimes (P_J^\perp h^{-1}))\\
&\quad\underset{T\to\ii}{\longrightarrow} -\log\left(\int_{V_J}e^{-\pscal{u\otimes u,wu \otimes u}/2}d\mu_{0,J}(u)\right)+C\tr(wh^{-1}\otimes (P_J^\perp h^{-1}))
\end{align*}
where
$$\mu_{0,J}:=\prod_{j = 1}^J  \left( \frac{\lambda_j}{\pi}e^{ - {\lambda} _j|\alpha_j|^2}\,d\alpha_j \right)$$
is the cylindrical projection of $\mu_0$  onto $V_J$. Note that
$$\int_{V_J}e^{-\pscal{u\otimes u,wu \otimes u}/2}d\mu_{0,J}(u)=\int_{\gH^{1-p}}e^{-\pscal{u\otimes u,P_J^{\otimes2}wP_J^{\otimes2}u \otimes u}/2}d\mu_{0}(u)$$
converges to $-\log(z_r)$, by the dominated convergence theorem, and~\eqref{eq:upper_bound} follows by taking the limit $J\to\ii$.
\end{proof}

\subsubsection*{\bf Step 3: lower bound for $p=1$}

Now we explain the proof of the lower bound in the trace-class case $\tr(h^{-1})<\ii$. By Lemma~\ref{lem:1PDM_bounded}, we already know that $\Gamma_{\lambda,T}^{(1)}/T$ is bounded in the trace-class. The following says that the other density matrices are bounded as well.

\begin{lemma}[\textbf{Estimates on higher moments in the trace-class case}]\label{lem:estim_1PDM_high_moments}
Let $h>0$ and $w\geq0$ be two self-adjoint operators on $\gH$ and $\gH\otimes_s\gH$ respectively, such that 
$$\tr_{\gH}(h^{-1})+\tr_{\gH\otimes_s\gH}(w\,h^{-1}\otimes h^{-1})<\ii.$$ 
Then we have
\begin{equation}
\tr_{\cF(\gH)}\big[(\cN/T)^k\Gamma_{\lambda,T}\big]\leq C_ke^{\lambda T\tr(wh^{-1}\otimes h^{-1})}(\tr h^{-1})^k.
\label{eq:bound_PDM}
\end{equation}
In particular, $\Gamma_{\lambda,T}^{(k)}/T^k$ is bounded in the trace-class in the limit where $T\to\ii$ and $\lambda T\to 1$.
\end{lemma}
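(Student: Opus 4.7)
The plan is a two-step reduction: first remove the interaction using $w\geq 0$ together with Lemma~\ref{lem:unif_bounds}, and then compute the moments of the non-interacting Gibbs state explicitly using Lemma~\ref{lem:quasi-free}.

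The first step I would carry out is as follows. Since $\cN^k$ is block-diagonal with respect to the sector decomposition $\cF(\gH)=\bigoplus_n\otimes_s^n\gH$ and both $\bH_\lambda$ and $\bH_0$ preserve this decomposition, one has
\[
\tr_{\cF(\gH)}\bigl[\cN^k e^{-\bH_\lambda/T}\bigr]=\sum_{n\geq 0} n^k\,\tr_{\otimes_s^n\gH}e^{-H_{\lambda,n}/T}\leq\sum_{n\geq 0} n^k\,\tr_{\otimes_s^n\gH}e^{-H_{0,n}/T},
\]
where the inequality uses $H_{\lambda,n}\geq H_{0,n}$ (since $w\geq 0$) and the fact that $\tr e^{-A/T}$ is monotone nonincreasing in $A$ (Weyl's min--max for the eigenvalues arranged in increasing order, together with the monotonicity of $x\mapsto e^{-x}$). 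Dividing by $Z_\lambda(T)$ and inserting the lower bound $Z_\lambda(T)/Z_0(T)\geq e^{-\lambda T\tr(wh^{-1}\otimes h^{-1})}$ from Lemma~\ref{lem:unif_bounds} then yields
\[
\tr\bigl[(\cN/T)^k\Gamma_{\lambda,T}\bigr]\leq e^{\lambda T\tr(wh^{-1}\otimes h^{-1})}\,\tr\bigl[(\cN/T)^k\Gamma_{0,T}\bigr],
\]
so it only remains to bound the free moment $\tr[(\cN/T)^k\Gamma_{0,T}]$ uniformly in $T$.

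For that second step, I would expand $\cN^k$ in falling factorials through the Stirling numbers of the second kind, $\cN^k=\sum_{j=0}^k S(k,j)\,\cN(\cN-1)\cdots(\cN-j+1)$, and use $\tr[\cN(\cN-1)\cdots(\cN-j+1)\Gamma]=j!\,\tr\Gamma^{(j)}$, which is the content of \eqref{eq:estim_DM_N_k}. Combined with the formula \eqref{eq:DM_quasi_free} of Lemma~\ref{lem:quasi-free} and the trace inequality \eqref{eq:trace_power}, this gives
\[
\tr\Gamma_{0,T}^{(j)}=\tr_{\otimes_s^j\gH}\Bigl(\tfrac{1}{e^{h/T}-1}\Bigr)^{\otimes j}\leq\Bigl(\tr\tfrac{1}{e^{h/T}-1}\Bigr)^{j}\leq T^j\,(\tr h^{-1})^j,
\]
and hence $\tr[(\cN/T)^k\Gamma_{0,T}]\leq\sum_{j=0}^k S(k,j)\,j!\,T^{j-k}(\tr h^{-1})^j$. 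In the regime $T\to\infty$ of interest, this sum is dominated by the $j=k$ term $k!(\tr h^{-1})^k$, which yields the announced bound with an appropriate constant $C_k$. The ``in particular'' statement on trace-class boundedness of $\Gamma_{\lambda,T}^{(k)}/T^k$ then follows at once from the upper estimate in \eqref{eq:estim_DM_N_k}.

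I do not expect any genuine analytical obstacle: the interaction is eliminated at the outset through sectorwise trace monotonicity, and the free computation is fully explicit. The only mildly technical point is making sure the constant $C_k$ can be chosen independently of $T$; this is not an issue because $\sum_j S(k,j)j! T^{j-k}(\tr h^{-1})^j$ depends continuously on $T$ on $(0,\infty)$ and, as $T\to 0^+$, the free Gibbs state concentrates on the vacuum so that $\tr[(\cN/T)^k\Gamma_{0,T}]\to 0$, ruling out any blow-up at small $T$.
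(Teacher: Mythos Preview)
Your proof is correct and follows essentially the same route as the paper's: reduce to the free state via the sectorwise inequality $\tr e^{-H_{\lambda,n}/T}\leq \tr e^{-H_{0,n}/T}$ together with the partition-function bound~\eqref{eq:unif_bd2}, then control $\tr[(\cN/T)^k\Gamma_{0,T}]$ using the explicit density matrices~\eqref{eq:DM_quasi_free} and $1/(T(e^{h/T}-1))\leq h^{-1}$. Your Stirling-number expansion is just a more explicit version of what the paper summarizes as ``the estimates on the density matrices of $\Gamma_{0,T}$ in Section~\ref{sec:non-interacting}''.
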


\begin{proof}
Since $\cN$ commutes with $\bH_\lambda$ and $\mathbb{W}\geq0$, we have
\begin{align*}
\tr_{\cF(\gH)}(\cN/T)^k\Gamma_{\lambda,T}&=Z_\lambda(T) ^{-1} T^{-k}\tr_{\cF(\gH)}\left(\cN^ke^{-\bH_0/T-\lambda\mathbb{W}/T}\right)\\
&\leq Z_\lambda(T) ^{-1} T^{-k}\tr_{\cF(\gH)}\left(\cN^ke^{-\bH_0/T}\right)\\
&= \frac{Z_0(T)}{Z_\lambda(T)}T^{-k}\tr_{\cF(\gH)}\left(\cN^k\Gamma_{0,T}\right)
\end{align*}
and the rest follows from~\eqref{eq:unif_bd2} and the estimates on the density matrices of $\Gamma_{0,T}$ in Section~\ref{sec:non-interacting}.
\end{proof}

We are now able to prove the lower bound on the relative partition function, and the fact that the density matrices of $\Gamma_{\lambda,T}$ all converge to the expected average involving the nonlinear Gibbs measure. Up to extraction of a (not relabeled) subsequence, we may assume from Theorem~\ref{thm:deFinetti} that the sequence $\{\Gamma_{\lambda,T}\}$ admits $\nu$ as de Finetti measure, and that 
$$k! T^{-k}\Gamma_{\lambda,T}^{(k)}\wto_* \int_{\gH}|u^{\otimes k}\>\< u^{\otimes k}|\,d\nu(u)$$
weakly-$\ast$ in the trace class, for all $k\geq1$. Note that since we have the uniform upper bound $\Gamma_{\lambda,T}^{(1)}/T\leq Ch^{-1}$ by Lemma~\ref{lem:1PDM_bounded}, the convergence must indeed be strong in $\gS^1$ for the first density matrix, by the dominated convergence theorem in the trace-class~\cite[Thm 2.16]{Simon-79}.
We recall that
$$\liminf_{\substack{T\to\ii\\ \lambda T\to1}}\left(-\log\frac{Z_\lambda(T)}{Z_0(T)}\right)=\liminf_{T\to\ii}\left(\cH(\Gamma_{\lambda,T},\Gamma_{0,T})+\frac{\lambda T}{T^2}\tr(w\Gamma_{\lambda,T}^{(2)})\right).$$
Since $w\geq 0$ we have, by Fatou's lemma for operators and the de Finetti representation formula for the weak limit of $\Gamma_{\lambda,T}^{(2)}$, 
\begin{align*}
\liminf_{T\to\ii}\frac{1}{T ^2}\tr(w\Gamma_{\lambda,T}^{(2)})&\geq \frac12\tr\left(w\int_{\gH}|u^{\otimes 2}\>\<u^{\otimes 2}|d\nu(u)\right)\\
&=\frac12\int_{\gH}\pscal{u\otimes u,wu\otimes u}d\nu(u).
\end{align*}
Using then Theorem~\ref{thm:rel-entropy} for the relative entropy, we get
$$\liminf_{\substack{T\to\ii\\ \lambda T\to1}}\left(-\log\frac{Z_\lambda(T)}{Z_0(T)}\right)\geq\cHcl(\nu,\mu_0)+\frac12\int_{\gH}\pscal{u\otimes u,wu\otimes u}d\nu(u).$$
From the variational principle~\eqref{eq:min_Gibbs_nonlinear} the right side is bounded from below by $-\log(z_r)$. Since we have already proved the upper bound in the previous section and since $\mu$ is the only minimizer, we conclude that $\nu=\mu$ and that
$$\lim_{\substack{T\to\ii\\ \lambda T\to1}}\left(-\log\frac{Z_\lambda(T)}{Z_0(T)}\right)=-\log(z_r).$$
In particular, we have the weak convergence of all the density matrices to the desired integral representation, and the strong convergence (in the trace-class) for $k=1$. 

Except for the proof of the strong convergence for $k\geq2$ (which we postpone for the moment), this ends the proof of Theorem~\ref{thm:main}.

\subsubsection*{\bf Step 4: lower bound for $p>1$}

In the case $p>1$, the main difficulty is the lack of control on the higher density matrices. We know from~\eqref{eq:estim_interaction} that $\tr(w\Gamma_{\lambda,T}^{(2)})$ is bounded by $CT^2$, but we do not have a bound on $\Gamma_{\lambda,T}^{(2)}$ itself. 

The beginning of the argument is the same as for $p=1$. By Lemma~\ref{lem:1PDM_bounded}, we already know that $\Gamma_{\lambda,T}^{(1)}/T$ is bounded in $\gS^p$ and that $\Gamma_{\lambda,T}^{(1)}\leq Th^{-1}$. After extraction of a subsequence and by the dominated convergence theorem, we may assume 
$$T ^{-1}\Gamma_{\lambda,T}^{(1)}\underset{T\to\ii}{\longrightarrow}\gamma^{(1)}$$
strongly in $\gS^p$, for some limiting operator $\gamma^{(1)}\in\gS^p$. By Theorem~\ref{thm:deFinetti}, we may also assume (after extraction of a subsequence), that $\{\Gamma_{\lambda,T}\}$ admits $\nu$ as de Finetti measure on $\gH^{1-p}$. Hence the lower symbols 
$$\mu_{V} ^T:= \mu_{V,\Gamma_{\lambda,T}}^{1/T}$$
defined as in Section~\ref{sec:deFinetti_coherent} converge to $\nu_V$, for every fixed finite-dimensional subspace $V$.
Now, in order to mimic the argument used for $p=1$, we need the following 

\begin{proposition}[\textbf{The interaction is weakly lower semi-continuous}]\label{prop:w_wlsc}
We have
\begin{equation}
\liminf_{T\to\ii}\frac{\tr[w\Gamma_{\lambda,T}^{(2)}]}{T^2}\geq \frac12\int_{\gH^{1-p}} \pscal{u^{\otimes 2},wu^{\otimes 2}}\,d\nu(u).
\label{eq:w_wlsc}
\end{equation}
\end{proposition}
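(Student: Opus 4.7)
The plan is to pass from the quantum trace $\tr[w\Gamma^{(2)}_{\lambda,T}]$ to the integral $\int\<u^{\otimes 2},w\,u^{\otimes 2}\>\,d\nu(u)$ via a finite-dimensional truncation, the Husimi--density-matrix identity (Lemma~\ref{lem:link_PDM_Husimi}), and weak convergence of Husimi measures to the cylindrical projections of $\nu$. Fix $J\ge 1$, set $V_J=\mathrm{span}(u_1,\dots,u_J)$ with projector $P_J$, and let $\mu^T_{V_J}$ be the Husimi function of $\Gamma_{\lambda,T}$ on $V_J$ at scale $1/T$. Applying Lemma~\ref{lem:link_PDM_Husimi} with $k=2$ and $\eps=1/T$, and pairing with the bounded non-negative operator $w_J:=P_J^{\otimes 2}wP_J^{\otimes 2}$ gives
$$\frac{2}{T^2}\,\tr\bigl[w_J\,\Gamma^{(2)}_{\lambda,T}\bigr] = \int_{V_J}\<u^{\otimes 2},w\,u^{\otimes 2}\>\,d\mu^T_{V_J}(u) - \delta_{J,T},$$
where $\delta_{J,T}$ collects the corrections involving $\Gamma^{(1)}_{V_J,T}$ and $\mathbf 1_{V_J}$. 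Because $\Gamma^{(1)}_{\lambda,T}\le CTh^{-1}$ by Lemma~\ref{lem:1PDM_bounded}, one has $\delta_{J,T}=O_J(1/T)$ as $T\to\infty$ for fixed $J$.

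\textbf{Limits $T\to\infty$ then $J\to\infty$.} Sending $T\to\infty$ at fixed $J$, the Husimi measure converges weakly $\mu^T_{V_J}\rightharpoonup\nu_{V_J}$ to the cylindrical projection of $\nu$ constructed in the proof of Theorem~\ref{thm:deFinetti}. A standard truncation $\|u\|_{\gH^{1-p}}\le R$ combined with the uniform moment bound \eqref{eq:estim_moments_mu_Husimi} and Fatou's lemma for the non-negative continuous integrand yield
$$\liminf_{T\to\infty}\frac{\tr[w_J\,\Gamma^{(2)}_{\lambda,T}]}{T^2} \;\ge\; \frac{1}{2}\int_{\gH^{1-p}}\<(P_Ju)^{\otimes 2},\,w\,(P_Ju)^{\otimes 2}\>\,d\nu(u).$$
As $J\to\infty$ the right-hand side converges to $\frac12\int\<u^{\otimes 2},w\,u^{\otimes 2}\>\,d\nu(u)$: indeed $(P_Ju)^{\otimes 2}\to u^{\otimes 2}$ strongly in $\gH^{1-p}\otimes_s\gH^{1-p}$, the hypothesis $0\le w\le Ch^{1-p'}\otimes h^{1-p'}$ with $p'>p$ makes $w$ bounded from this space to its dual with $\<u^{\otimes 2},w\,u^{\otimes 2}\>\le C\|u\|_{\gH^{1-p'}}^4\le C\|u\|_{\gH^{1-p}}^4$, and the fourth moment is $\nu$-integrable thanks to higher-moment bounds on $\Gamma_{0,T}$ (which can be computed explicitly using Lemma~\ref{lem:quasi-free}) transferred to $\Gamma_{\lambda,T}$ via $Z_\lambda/Z_0\ge e^{-C}$ (Lemma~\ref{lem:unif_bounds}) and then to $\nu$ through Theorem~\ref{thm:deFinetti}.

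\textbf{Main obstacle: replacing $w_J$ by $w$ on the quantum side.} The delicate point is to compare $\tr[w_J\,\Gamma^{(2)}_{\lambda,T}]$ with $\tr[w\,\Gamma^{(2)}_{\lambda,T}]$, which cannot be done by a direct operator bound since $w_J\le w$ is false in general, and we lack trace-class control of $\Gamma^{(2)}_{\lambda,T}/T^2$. The plan is to diagonalize $w$ spectrally. The bound $0\le w\le Ch^{1-p'}\otimes h^{1-p'}$ with $h^{-(p'-1)}$ compact forces $w$ compact, hence $w=\sum_k\mu_k|\phi_k\>\<\phi_k|$ with $\mu_k\ge 0$. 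Writing $\mu_k\phi_k=RKR\phi_k$ with $R:=h^{(1-p')/2}\otimes h^{(1-p')/2}$ and bounded $K\ge 0$, and inverting $R$ on its range, shows $\phi_k\in\gH^{p'-1}\otimes_s\gH^{p'-1}\subset\gH^{p-1}\otimes_s\gH^{p-1}$. Then Fatou's lemma on the non-negative series
$$\frac{\tr[w\,\Gamma^{(2)}_{\lambda,T}]}{T^2} = \sum_k\mu_k\,\frac{\<\phi_k,\Gamma^{(2)}_{\lambda,T}\phi_k\>}{T^2}$$
reduces matters to the rank-one estimate $\liminf_T\<\phi_k,\Gamma^{(2)}_{\lambda,T}\phi_k\>/T^2\ge\frac12\int|\<u^{\otimes 2},\phi_k\>|^2\,d\nu(u)$ for each $k$. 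This in turn is proved by running the finite-dimensional argument above with test vector $\phi_k^{(J)}=P_J^{\otimes 2}\phi_k$ (observing that $\phi_k^{(J)}\to\phi_k$ in $\gH^{p'-1}\otimes_s\gH^{p'-1}$), and then letting $J\to\infty$: the approximation error on the LHS, $\<\phi_k-\phi_k^{(J)},\Gamma^{(2)}_{\lambda,T}(\phi_k-\phi_k^{(J)})\>/T^2$, is controlled uniformly in $T$ by the operator bound $|\phi_k-\phi_k^{(J)}\>\<\phi_k-\phi_k^{(J)}|\le \mu_k^{-1}(I-P_J^{\otimes 2})w(I-P_J^{\otimes 2})$, combined with the a priori bound $\tr[w\,\Gamma^{(2)}_{\lambda,T}]/T^2\le C$ from Lemma~\ref{lem:unif_bounds} and the fact that $(I-P_J^{\otimes 2})w(I-P_J^{\otimes 2})\to 0$ in operator norm (by compactness of $w$), which is the quantitatively delicate step to make rigorous and the actual heart of the proof.
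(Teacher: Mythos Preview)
Your overall architecture (Husimi identity on $V_J$, weak convergence of Husimi measures to the cylindrical projections $\nu_{V_J}$, Fatou for the non-negative integrand) coincides with the paper's. The serious gap is in the ``Main obstacle'' paragraph, where you try to pass from $w_J=P_J^{\otimes2}wP_J^{\otimes2}$ back to $w$ on the quantum side.

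Your proposed control of the remainder is to bound
\[
\big\langle \phi_k-\phi_k^{(J)},\,\Gamma^{(2)}_{\lambda,T}(\phi_k-\phi_k^{(J)})\big\rangle
\;\le\;\mu_k^{-1}\,\tr\!\big[\Pi\,w\,\Pi\,\Gamma^{(2)}_{\lambda,T}\big],\qquad \Pi:=1-P_J^{\otimes2},
\]
and then combine the a~priori estimate $\tr[w\Gamma^{(2)}_{\lambda,T}]\le CT^2$ with the operator-norm fact $\|\Pi w\Pi\|\to 0$. This does not work: $\Pi w\Pi$ is \emph{not} dominated by $w$ as an operator (already for rank-one $w$), so the a~priori bound on $\tr[w\Gamma^{(2)}_{\lambda,T}]$ gives no control on $\tr[\Pi w\Pi\,\Gamma^{(2)}_{\lambda,T}]$; and operator-norm smallness of $\Pi w\Pi$ would only help if $\Gamma^{(2)}_{\lambda,T}/T^2$ were uniformly trace-class, which is precisely what fails when $p>1$ (indeed $\tr[\Gamma^{(2)}_{\lambda,T}]/T^2\sim T^{2(p-1)}\to\infty$). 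In the sequential order you choose (first $T\to\infty$ at fixed $J$, then $J\to\infty$), the error term actually diverges. A related secondary issue: your appeal to $\int\|u\|_{\gH^{1-p}}^4\,d\nu(u)<\infty$ via ``higher-moment bounds on $\Gamma_{0,T}$ transferred to $\Gamma_{\lambda,T}$'' is not justified---that transfer uses $[\cN,\bH_\lambda]=0$, but $\dGamma(h^{1-p})$ does not commute with $\bH_\lambda$, and the paper explicitly leaves $\tr[(\dGamma(h^{1-p}))^2\Gamma_{\lambda,T}]\le CT^2$ as a conjecture.

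What the paper does instead is to exploit the structural hypothesis $w\le C\,h^{1-p'}\otimes h^{1-p'}$ one step further: since $Qh^{1-p'}\le\lambda_{J+1}^{1-p'}$ and $h^{1-p'}\le C\,h^{1-p}$, one gets the \emph{operator} inequality
\[
\Pi\,w\,\Pi\;\le\;C\,\lambda_{J+1}^{1-p'}\big(h^{1-p}\otimes 1+1\otimes h^{1-p}\big),
\]
so that $\tr[\Pi w\Pi\,\Gamma^{(2)}_{\lambda,T}]\le C\,\lambda_{J+1}^{1-p'}\,\tr[\dGamma(h^{1-p})\cN\,\Gamma_{\lambda,T}]$. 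The missing ingredient is then a \emph{mixed} moment estimate,
\[
\tr\!\big[\dGamma(h^{1-p})\,\cN\,\Gamma_{\lambda,T}\big]\;\le\;C_\eps\,T^{\,p+1+\eps},
\]
obtained by interpolating between $\tr[\dGamma(h^{1-p})\Gamma_{\lambda,T}]\le CT$ (Lemma~\ref{lem:1PDM_bounded}) and $\tr[\cN^s\Gamma_{\lambda,T}]\le C_sT^{ps}$. This yields an error of order $\lambda_{J+1}^{1-p'}T^{\,p-1+\eps}$, which is killed by letting $J=J(T)\to\infty$ with $\lambda_{J+1}\sim T^{\alpha}$ for some $\alpha>(p-1)/(p'-1)$; one checks that this choice also keeps the other error terms (coming from the lower-order pieces in the Husimi identity) small. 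The limit is therefore taken \emph{diagonally} in $(J,T)$, not sequentially, and the cylindrical consistency of the Husimi measures (your Lemma~\ref{lem:cylindrical}) ensures that $\mu^T_{V_{J(T)}}$ still converges weakly to $\nu$.
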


The purpose of the rest of this step is to prove Proposition~\ref{prop:w_wlsc}. The following summarizes some estimates that we have on the higher density matrices.

\begin{lemma}[\textbf{Moment estimates in the general case}]\label{lem:estim_1PDM_high_moments_p}\mbox{}\\
Let $h>0$ and $w\geq0$ be two self-adjoint operators on $\gH$ and $\gH\otimes_s\gH$ respectively, such that 
$$\tr_{\gH}(h^{-p})+\tr_{\gH\otimes_s\gH}(w\,h^{-1}\otimes h^{-1})<\ii$$ 
for some $1\leq p<\ii$. Then we have, for all $s\geq1$ and all $\lambda T\leq C$,
\begin{equation}
\tr_{\cF(\gH)}\big[\cN^s\Gamma_{\lambda,T}\big]\leq C_s\, T^{ps}
\label{eq:estim_N_s_non-trace-class}
\end{equation}
and for all $\epsilon>0$
\begin{equation}
\tr_{\cF(\gH)}\big[\cN\dGamma(h^{1-p})\Gamma_{\lambda,T}\big]\leq C_\epsilon\, T^{p+1+\epsilon}.
\label{eq:estim_mixed}
\end{equation}
\end{lemma}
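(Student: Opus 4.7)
The plan is to prove the two estimates independently, with quite different strategies: the first by a direct reduction to the non-interacting state, the second by a level-set cut-off on the number operator $\cN$.

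For~\eqref{eq:estim_N_s_non-trace-class} the main observation is that $\cN$ commutes with both $\bH_0$ and $\mathbb{W}$, so the projections onto the $n$-particle sectors commute with $\bH_\lambda$. Writing the trace as a sum over sectors and using $H_{\lambda,n}\geq H_{0,n}$ together with Klein's inequality in each sector, I would obtain
\[
\tr_{\cF(\gH)}\big[\cN^s\, e^{-\bH_\lambda/T}\big] \leq \tr_{\cF(\gH)}\big[\cN^s\, e^{-\bH_0/T}\big].
\]
Dividing by $Z_\lambda(T)$ and invoking the lower bound $Z_\lambda(T)\geq c\, Z_0(T)$ of Lemma~\ref{lem:unif_bounds} (which holds as soon as $\lambda T \leq C$), this reduces the problem to controlling the $\cN$-moments of the free Gibbs state. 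Those follow from the identity $\tr[\binom{\cN}{k}\Gamma_{0,T}] = \tr_{\bigotimes_s^k\gH}\Gamma_{0,T}^{(k)} \leq \big(\tr_\gH(e^{h/T}-1)^{-1}\big)^k$ (cf.~\eqref{eq:trace_power}), the pointwise bound $(e^x-1)^{-1}\leq C_p x^{-p}$ valid for all $x>0$ (from a case analysis at $x\leq 1$ and $x\geq 1$ using $p\geq 1$), and the assumption $\tr h^{-p}<\infty$, which together give $\leq C_p^k T^{pk}$. Expanding $\cN^s$ in falling factorials $\binom{\cN}{k}$ via Stirling numbers then yields $\tr[\cN^s\Gamma_{0,T}]\leq C_s T^{ps}$.

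For~\eqref{eq:estim_mixed}, the plan is a cut-off on $\cN$ at the scale $K=T^{p+\epsilon}$. Since $h\geq \lambda_1>0$ and $1-p\leq 0$, one has $h^{1-p}\leq \lambda_1^{1-p}\,\mathds{1}_\gH$, hence the operator inequality $\dGamma(h^{1-p})\leq \lambda_1^{1-p}\cN$; the operators $\cN$, $\dGamma(h^{1-p})$ and $\mathds{1}_{\{\cN\leq K\}}$ moreover all commute. Splitting
\[
\cN\,\dGamma(h^{1-p}) \;=\; \cN\,\dGamma(h^{1-p})\,\mathds{1}_{\cN\leq K} \;+\; \cN\,\dGamma(h^{1-p})\,\mathds{1}_{\cN> K},
\]
the first summand is bounded by $K\,\dGamma(h^{1-p})$, whose trace against $\Gamma_{\lambda,T}$ is $K\,\tr[h^{1-p}\Gamma_{\lambda,T}^{(1)}]\leq C K T$ by Lemma~\ref{lem:1PDM_bounded}, and the second summand is bounded by $\lambda_1^{1-p}\cN^2\mathds{1}_{\cN>K}\leq \lambda_1^{1-p} K^{2-s}\cN^s$ for any $s\geq 2$, whose trace is $\leq C K^{2-s} T^{ps}$ by~\eqref{eq:estim_N_s_non-trace-class}. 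With $K=T^{p+\epsilon}$, the first contribution is $C T^{p+1+\epsilon}$, and picking $s\geq (p-1+\epsilon)/\epsilon$ makes the exponent $(2-s)(p+\epsilon)+ps$ of the second contribution at most $p+1+\epsilon$ as well.

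The main obstacle is~\eqref{eq:estim_mixed}. A naive estimate using only $\dGamma(h^{1-p})\leq C\cN$ produces the far-too-weak bound $C T^{2p}$, and a direct Cauchy--Schwarz fails because the fluctuations of $\cN$ under $\Gamma_{\lambda,T}$ are of the same order $T^p$ as its mean when $p>1$, so no concentration is available. The cut-off argument recovers the near-optimal exponent $p+1$ up to an unavoidable $\epsilon$-loss by using the \emph{linear} bound $\tr[\dGamma(h^{1-p})\Gamma_{\lambda,T}]\leq CT$ of Lemma~\ref{lem:1PDM_bounded} on the bulk $\{\cN\leq T^{p+\epsilon}\}$ and paying only a small polynomial price on the tail through the high moments of~\eqref{eq:estim_N_s_non-trace-class}.
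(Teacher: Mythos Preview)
Your argument for~\eqref{eq:estim_N_s_non-trace-class} is essentially the paper's: reduce to the free state via $\cN$ commuting with $\bH_\lambda$ and $\mathbb{W}\geq0$, then control the $\cN$-moments of $\Gamma_{0,T}$ through $\tr\Gamma_{0,T}^{(k)}\leq\big(\tr(e^{h/T}-1)^{-1}\big)^k\leq C_p^kT^{pk}(\tr h^{-p})^k$. One small point: your Stirling-number expansion only handles integer $s$, whereas the statement is for all real $s\geq1$; the paper closes this by complex interpolation, which you should mention (though only integer $s$ is needed downstream).

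For~\eqref{eq:estim_mixed} your approach is correct but genuinely different from the paper's. The paper uses a single H\"older step: writing $X=\dGamma(h^{1-p})$ and $Y=\cN$ (which commute), one has for any $\theta\in(0,1)$
\[
\tr[XY\,\Gamma_{\lambda,T}]\leq \big(\tr[X\,\Gamma_{\lambda,T}]\big)^{1-\theta}\big(\tr[XY^{1/\theta}\,\Gamma_{\lambda,T}]\big)^{\theta}
\leq C_\theta\,T^{1-\theta}\big(\tr[\cN^{(1+\theta)/\theta}\Gamma_{\lambda,T}]\big)^\theta\leq C_\theta\,T^{1-\theta+p(1+\theta)},
\]
using $\dGamma(h^{1-p})\leq C\cN$ inside the second factor and Lemma~\ref{lem:1PDM_bounded} for the first; setting $\epsilon=\theta(p-1)$ gives the result. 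Your level-set decomposition at $K=T^{p+\epsilon}$ reaches the same exponent by a more elementary route: bound $\cN$ by $K$ on the bulk and use the sharp estimate $\tr[\dGamma(h^{1-p})\Gamma_{\lambda,T}]\leq CT$, then kill the tail $\{\cN>K\}$ with a high moment $\cN^s$ for $s\geq(p-1+\epsilon)/\epsilon$. Both arguments exploit the same two ingredients (the linear-in-$T$ bound on $\tr[\dGamma(h^{1-p})\Gamma_{\lambda,T}]$ and the $T^{ps}$ moment bounds), but the H\"older version is a one-liner once you spot the right split $XY=X^{1-\theta}\cdot(XY^{1/\theta})^\theta$, whereas your cut-off is more transparent and avoids any interpolation-type inequality. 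Either way, the $\epsilon$-loss is intrinsic to combining a first-moment bound on $\dGamma(h^{1-p})$ with polynomial tail control on $\cN$; the paper indeed conjectures (but does not prove) the sharper $\tr[(\dGamma(h^{1-p}))^2\Gamma_{\lambda,T}]\leq CT^2$.
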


\begin{proof}
The same argument as in Lemma~\ref{lem:estim_1PDM_high_moments} gives for $s$ an integer
\begin{align*}
\tr_{\cF(\gH)}\big[\cN^s\Gamma_{\lambda,T}\big]&\leq C\tr_{\cF(\gH)}\big[\cN^s\Gamma_{0,T}\big]\\
&\leq C\big[\tr_\gH(e^{h/T}-1)^{-1}\big]^s\\
&\leq C\big[\tr_\gH(h/T)^{-p}\big]^s=CT^{ps}\big[\tr_\gH h^{-p}]^{s}.
\end{align*}
The argument follows for all $s\geq1$ by interpolation. For the other estimate, we use Hölder's inequality (and the fact that $\cN$ commutes with $d\Gamma(h^{1-p})$) to obtain
\begin{align*}
\Tr \Big[ \dGamma (h^{1-p}) \cN \Gamma _{\lambda,T}\Big] & \le \Big(  \Tr \big[ \dGamma (h^{1-p}) \Gamma_{\lambda,T} \big] \Big)^{1-\theta} \Big(  \Tr \big[ \dGamma (h^{1-p}) \cN^{1/\theta} \Gamma_{\lambda,T} \big] \Big)^{\theta} \\
& \le C_\theta\Big(  \Tr \big[ \dGamma (h^{1-p}) \Gamma_{\lambda,T} \big] \Big)^{1-\theta} \Big(  \Tr \big[ \cN^{(1+\theta)/\theta} \Gamma_{\lambda,T} \big] \Big)^{\theta} \\
& \le C_\theta T^{1-\theta} T^{p(1+\theta)} = C_\eps T^{p+1+\eps} 
\end{align*}
for all $\theta\in (0,1)$, where $\eps=\theta(p-1)>0$. In the last line we have used~\eqref{eq:1PDM_bounded}.
\end{proof}

We conjecture that the bound $\tr[(\dGamma(h^{1-p}))^2\Gamma_{\lambda,T}]\leq CT^2$ holds true. 

Now, the idea of the proof of Proposition~\ref{prop:w_wlsc} is to localize the problem to the finite dimensional space $V_J={\rm span}(u_1,...,u_J)$ with a $J=J(T)$ that grows at a convenient speed in order to control the errors in the localization procedure, which can be estimated using the bounds of Lemma~\ref{lem:estim_1PDM_high_moments_p}. 

\begin{lemma}[\textbf{Localization of the interaction}]\mbox{}\\
Let $h>0$ and $w\geq0$ be two self-adjoint operators on $\gH$ and $\gH\otimes_s\gH$ respectively, such that $\tr_{\gH}(h^{-p})<\ii$ and $w\leq Ch^{1-p'}\otimes h^{1-p'}$ for some $p'>p\geq 1$. Then we have 
\begin{equation}
\tr\big[w\Gamma_{\lambda,T}^{(2)}\big]\geq \tr\big[(P_J)^{\otimes 2}w(P_J)^{\otimes 2}\Gamma_{\lambda,T}^{(2)}\big]- C_\eps \frac{  T^{p+1+\epsilon}}{(\lambda_{J+1})^{p'-1}}-C_\eps\frac{T^{\frac{p+3+\eps}{2}}}{(\lambda_{J+1})^{\frac{p'-1}{2}}}
\label{eq:estim_error_localization}
\end{equation}
for $\lambda T\leq C$, $T\geq 1/C$ and any $\eps >0$.
\end{lemma}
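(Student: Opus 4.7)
The plan is to split the interaction operator using the resolution $1 = P_J + P_J^\perp$ on the one-particle space, then to control the error terms by combining the operator bound $w \le C\,h^{1-p'}\otimes h^{1-p'}$ with the a priori moment estimates from Lemma~\ref{lem:estim_1PDM_high_moments_p}. Setting $\Pi := (P_J)^{\otimes 2}$, $\Pi^\perp := 1-\Pi$, and $Q := P_J^\perp$, I would start from the algebraic identity
$$\tr\bigl[w\,\Gamma_{\lambda,T}^{(2)}\bigr] = \tr\bigl[\Pi w\Pi\,\Gamma_{\lambda,T}^{(2)}\bigr] + 2\,\mathrm{Re}\,\tr\bigl[\Pi w\Pi^\perp\,\Gamma_{\lambda,T}^{(2)}\bigr] + \tr\bigl[\Pi^\perp w\Pi^\perp\,\Gamma_{\lambda,T}^{(2)}\bigr].$$
The last summand is already non-negative (since $\Pi^\perp w\Pi^\perp \ge 0$ and $\Gamma^{(2)}\ge 0$) so it can be discarded, and the cross term I would bound by the trace Cauchy--Schwarz inequality
$$\bigl|\tr\bigl[\Pi w\Pi^\perp\,\Gamma^{(2)}\bigr]\bigr|\le \tr\bigl[\Pi w\Pi\,\Gamma^{(2)}\bigr]^{1/2}\,\tr\bigl[\Pi^\perp w\Pi^\perp\,\Gamma^{(2)}\bigr]^{1/2},$$
obtained by applying the Hilbert--Schmidt Cauchy--Schwarz to $X := (\Gamma^{(2)})^{1/2}\Pi w^{1/2}$ and $Y := w^{1/2}\Pi^\perp(\Gamma^{(2)})^{1/2}$.

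The heart of the argument is the estimate of $\tr[\Pi^\perp w\Pi^\perp\,\Gamma_{\lambda,T}^{(2)}]$. Using $w \le C\,h^{1-p'}\otimes h^{1-p'}$ together with the fact that $\Pi^\perp$ commutes with $h^{1-p'}\otimes h^{1-p'}$ (everything is a function of $h$), and the identity $\Pi^\perp = 1\otimes Q + Q\otimes 1 - Q\otimes Q$, I would obtain -- after dropping the negative $Q\otimes Q$ term and using the spectral bound $h^{1-p'}Q \le \lambda_{J+1}^{-(p'-1)}\mathbf{1}$ (valid because $1-p'<0$ and $h\ge \lambda_{J+1}$ on the range of $Q$) -- the operator inequality
$$\Pi^\perp w\Pi^\perp \le C\,\lambda_{J+1}^{-(p'-1)}\bigl(h^{1-p'}\otimes 1 + 1\otimes h^{1-p'}\bigr).$$
Testing against $\Gamma_{\lambda,T}^{(2)}$ and invoking the bosonic identity
$$\tr\bigl[(A\otimes 1 + 1\otimes A)\,\Gamma^{(2)}\bigr] = \tr\bigl[(\cN-1)\,\dGamma(A)\,\Gamma\bigr],$$
which follows by comparing the $n$-sector expansions on both sides, together with $h^{1-p'}\le \lambda_1^{p-p'}\,h^{1-p}$ (from $p'>p$ and $h \ge \lambda_1 > 0$), reduces the problem to the mixed moment estimate~\eqref{eq:estim_mixed} and yields
$$\tr\bigl[\Pi^\perp w\Pi^\perp\,\Gamma_{\lambda,T}^{(2)}\bigr] \le C_\eps\, T^{p+1+\eps}/\lambda_{J+1}^{p'-1}.$$

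To handle the other factor $\tr[\Pi w\Pi\,\Gamma_{\lambda,T}^{(2)}]$ appearing in the Cauchy--Schwarz bound, I would re-inject the cross-term inequality into the decomposition above and apply Young's inequality $2\sqrt{ab}\le \tfrac12 a + 2b$ to absorb $\tr[\Pi w\Pi\,\Gamma^{(2)}]$, obtaining
$$\tr\bigl[\Pi w\Pi\,\Gamma_{\lambda,T}^{(2)}\bigr] \le 2\tr\bigl[w\,\Gamma_{\lambda,T}^{(2)}\bigr] + 4\tr\bigl[\Pi^\perp w\Pi^\perp\,\Gamma_{\lambda,T}^{(2)}\bigr] \le C T^2 + C_\eps\,T^{p+1+\eps}/\lambda_{J+1}^{p'-1},$$
using $\tr[w\,\Gamma_{\lambda,T}^{(2)}] \le C T^2$ from~\eqref{eq:estim_interaction}. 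Plugging these two estimates into the Cauchy--Schwarz bound on the cross term and applying $\sqrt{a+b}\le \sqrt a + \sqrt b$ produces exactly the two announced error terms: the $T^2$ piece contributes $C_\eps T^{(p+3+\eps)/2}/\lambda_{J+1}^{(p'-1)/2}$ and the $T^{p+1+\eps}/\lambda_{J+1}^{p'-1}$ piece contributes (the square root of itself times itself, hence) $C_\eps T^{p+1+\eps}/\lambda_{J+1}^{p'-1}$. The main obstacle is the absence of a direct trace-class bound on $\Gamma_{\lambda,T}^{(2)}$ when $p > 1$: the entire estimate must therefore be routed through the Fock-space expectation $\tr[\cN\,\dGamma(h^{1-p})\,\Gamma_{\lambda,T}]$, whose control requires the Hölder interpolation~\eqref{eq:estim_mixed}, which is itself responsible for the slightly suboptimal $\eps$-loss in the final error.
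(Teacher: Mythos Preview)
Your proof is correct and follows essentially the same route as the paper's: the same decomposition $w = \Pi w\Pi + \text{cross terms} + \Pi^\perp w\Pi^\perp$, the same operator bound $\Pi^\perp w\Pi^\perp \le C\lambda_{J+1}^{1-p'}(h^{1-p'}\otimes 1 + 1\otimes h^{1-p'})$, and the same reduction to the mixed moment estimate~\eqref{eq:estim_mixed} and the interaction bound~\eqref{eq:estim_interaction}. The only cosmetic difference is that the paper handles the cross term via the parametric operator inequality $(1+\eta)w - P^{\otimes 2}wP^{\otimes 2} + (1+\eta^{-1})\Pi^\perp w\Pi^\perp \ge 0$ and then optimizes over $\eta$, whereas you use Cauchy--Schwarz plus Young's inequality; these are two packagings of the same estimate and yield the same error terms after optimization.
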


\begin{proof}
For simplicity we denote $P=P_J$, $Q=1-P_J$ and 
$$\Pi=1-(P_J)^{\otimes 2}=Q\otimes P+P\otimes Q+Q\otimes Q.$$
As in the proof of~\cite[Lemma 3.6]{LewNamRou-14c} we write
$$w=P^{\otimes 2}wP^{\otimes 2}+\Pi w P^{\otimes 2}+P^{\otimes 2} w \Pi +\Pi w\Pi$$
and note that, since $w\geq 0$, 
$$ \eta P^{\otimes 2}wP^{\otimes 2} +  \Pi w P^{\otimes 2}+P^{\otimes 2} w \Pi + \eta ^{-1} \Pi w \Pi \ge 0 $$
for any $\eta >0$. Applying the latter estimate with $\eta$ replaced by $\eta/(1+\eta)$ we find that 
\begin{align*}
&(1+\eta)w-P^{\otimes 2} w P^{\otimes 2} + (1+\eta^{-1}) \Pi w \Pi  \\
=&\eta P^{\otimes 2} w P^{\otimes 2} + (1+\eta) ( \Pi w P^{\otimes 2}+P^{\otimes 2} w \Pi ) +  (1+\eta)^2 \eta^{-1} \Pi w \Pi \ge 0
\end{align*}
for any $\eta >0$. By assumption $0\leq w\leq C h^{1-p'}\otimes h^{1-p'}$, we deduce
\begin{align*}
w-P^{\otimes 2}w P^{\otimes 2}&\geq -  \eta w- (1+\eta^{-1}) \Pi w \Pi\\
&\geq -\eta w- C(1+\eta^{-1}) (\lambda_{J+1})^{1-p'}\big(h^{1-p}\otimes \1+\1\otimes h^{1-p}\big),
\end{align*}
where we have used $Q h^{1-p'}\le (\lambda_{J+1})^{1-p'}$ and $h^{1-p'}\le C h^{1-p}$ in the last estimate. Consequently, for any state $\Gamma$ and for any $\eta>0$ 
\begin{multline*}
\Tr \Big[ (w-P^{\otimes 2}w P^{\otimes 2}) \Gamma^{(2)} \Big] \ge -\eta \Tr \Big[ w \Gamma^{(2)} \Big]\\  - C (1+\eta^{-1}) (\lambda_{J+1})^{1-p'}  \Tr \Big[ \dGamma (h^{1-p}) \cN \Gamma \Big].
\end{multline*}
Applying this inequality to $\Gamma_{\lambda,T}$ and inserting the bounds~\eqref{eq:estim_interaction} and~\eqref{eq:estim_mixed} on its density matrices, we get
\begin{multline*}
\Tr \Big[ (w-P^{\otimes 2}w P^{\otimes 2}) \Gamma^{(2)} \Big] \ge -C\eta T^2 - C_\eps (1+\eta^{-1}) (\lambda_{J+1})^{1-p'}  T^{p+1+\epsilon}
\end{multline*}
and the result follows after optimizing with respect to~$\eta$.
\end{proof}

We are finally able to provide the

\begin{proof}[Proof of Proposition~\ref{prop:w_wlsc}]
Let $\Gamma^{\leq J}_{\lambda,T}$ be the localized state of $\Gamma_{\lambda,T}$ with respect to the projection $P_J$ as recalled in Section~\ref{sec:Fock2} and used several times before. By definition, we have
$
[\Gamma^{\leq J}_{\lambda,T}]^{(2)}=  (P_J)^{\otimes 2} \Gamma_{\lambda,T}^{(2)} (P_J)^{\otimes 2}.
$
Let $\mu_{V_J}^T$ be the corresponding Husimi function defined as in \eqref{eq:Husimi}. By Lemma~\ref{lem:link_PDM_Husimi}, we have 
\begin{multline*}
\frac{T^2}2\int_{V_J}|u^{\otimes 2}\>\<u^{\otimes 2}|\;d\mu_{V_J}^T(u)=(P_J)^{\otimes 2}\Gamma_{\lambda,T}^{(2)}(P_J)^{\otimes 2}+(P_J)^{\otimes 2}\\
+P_J\otimes(P_J\Gamma_{\lambda,T}^{(1)}P_J)+(P_J\Gamma_{\lambda,T}^{(1)}P_J)\otimes P_J
\end{multline*}
and hence
\begin{multline*}
\tr\big[(P_J)^{\otimes 2}w(P_J)^{\otimes 2}\Gamma_{\lambda,T}^{(2)}\big]\geq \frac{T^2}2\int_{V_J} \pscal{u^{\otimes 2},wu^{\otimes 2}}\,d\mu_{V_J}^T(u)\\-\tr\big[w(P_J)^{\otimes 2}\big]-2\tr\big[(P_J)^{\otimes 2}w(P_J)^{\otimes 2}\1\otimes \Gamma_{\lambda,T}^{(1)}\big].
\end{multline*}
Now we use the assumption that 
\begin{equation}\label{eq:recall bound w}
w\leq C h^{1-p'}\otimes h^{1-p'} 
\end{equation}
and the bound $\Gamma_{\lambda,T}^{(1)}\leq C T h^{-1}$ from Lemma~\ref{lem:1PDM_bounded} and we get 
\begin{align*}
\frac1{T ^2}\tr\big[(P_J)^{\otimes 2}w(P_J)^{\otimes 2}\Gamma_{\lambda,T}^{(2)}\big]&\geq \frac{1}2\int_{V_J} \pscal{u^{\otimes 2},wu^{\otimes 2}}\,d\mu_{V_J}^T(u)\\
&\qquad-\frac{\Big(\tr\big[h^{1-p'}P_J\big] \Big)^2}{T^2}-C\frac{\tr\big[h^{1-p'}P_J\big]}{T}\\
&\geq \frac{1}2\int_{V_J} \pscal{u^{\otimes 2},wu^{\otimes 2}}\,d\mu_{V_J}^T(u)\\
&\qquad-C\frac{(\lambda_{J+1})^{2(1+p-p')}}{T^2}-C\frac{(\lambda_{J+1})^{1+p-p'}}{T}.
\end{align*}
If we combine this with~\eqref{eq:estim_error_localization}, we find 
\begin{align*}
T^{-2}\tr\big[w\Gamma_{\lambda,T}^{(2)}\big] \geq& \frac{1}2\int_{V_J} \pscal{u^{\otimes 2},wu^{\otimes 2}}\,d\mu_{V_J}^T(u)\\
&-C\frac{(\lambda_{J+1})^{2(1+p-p')}}{T^2}-C\frac{(\lambda_{J+1})^{1+p-p'}}{T}\\
&-C_\eps \frac{  T^{p-1+\epsilon}}{(\lambda_{J+1})^{p'-1}}-C_\eps\frac{T^{\frac{p-1+\eps}{2}}}{(\lambda_{J+1})^{\frac{p'-1}{2}}}.
\end{align*}
To make all four error terms small, we choose $\lambda_{J+1}\sim T^\alpha$, which amounts to picking a suitably large $J \to \infty$ when $T\to \infty$, with $\alpha>0$ satisfying
$$
\alpha(1+p-p')<1 \quad \text{and} \quad p-1+\eps<\alpha(p'-1).
$$
This can be done when $\eps>0$ is small enough since
$$
\frac{p'-1} {p-1} > 1+p-p' 
$$
where we have used $p'>p \ge 1$.  

Now it only remains to prove that
$$\liminf_{n\to\ii}\int_{V_{J_n}} \pscal{u^{\otimes 2},wu^{\otimes 2}}\,d\mu_{V_J}^T(u)\geq \int_{\gH^{1-p}} \pscal{u^{\otimes 2},wu^{\otimes 2}}\,d\nu(u).$$
The main observation is that 
$$\mu_{V_J}^T\wto \nu$$
in the sense that each cylindrical projection to a \emph{fixed} finite dimensional subspace $V'$ converges in the sense of measures on $V'$. The reason is that, due to Lemma~\ref{lem:cylindrical}, the cylindrical projection of $\mu_{V_J}^T$ onto $V'$ is $\mu^{T}_{V'}$, which converges to $\nu_{V'}$ by definition of $\nu$ in the proof of Theorem~\ref{thm:deFinetti}. Therefore, by Fatou's lemma we obtain
$$\liminf_{n\to\ii} \int_{V_{J_n}} \pscal{u^{\otimes 2},wu^{\otimes 2}}\,d\mu_{V_J}^T(u)\geq \int_{\gH^{1-p}} \pscal{u^{\otimes 2},wu^{\otimes 2}}\,d\nu(u)$$
and this concludes the proof of Proposition~\ref{prop:w_wlsc}.
\end{proof}

To summarize, in the case $p>1$ we have proved that $\mu$ is the de Finetti measure of $\{\Gamma_{\lambda,T}\}$ as well as the convergence of the relative partition function. We also know that $\Gamma_{\lambda,T}^{(1)}/T$ converges to a limit $\gamma^{(1)}$ strongly in $\gS^p(\gH)$, but we still have to prove that
\begin{equation}
\gamma^{(1)}=\int_{\gH^{1-p}}|u\>\<u|\,d\mu(u),
\label{eq:identify_limit_1PDM}
\end{equation}
which will conclude the proof of Theorem~\ref{thm:main2}. Unfortunately, we do not know that $\|u\|_{\gH^{1-p}}^2d\mu^{1/T}_{V_{J_n}}$ is tight since we do not have any control on a higher moment. Therefore, we cannot pass to the weak limit in the one-particle density matrix so easily. We need another argument which will be provided in the next step.

\subsubsection*{\bf Step 5: strong convergence for $p>1$ and $k=1$}

Here we prove~\eqref{eq:identify_limit_1PDM} by using the Feynman-Hellmann principle. The rationale is that we may perturb $h$ by a large class of one-body operators and estimate the free energy in essentially the same way. Differentiating with respect to the perturbation gives access to the one-body density matrix. We cannot apply the same idea to higher density matrices since we require strong assumptions, in particular positivity, of the two-body interaction.

Let $A=|x\>\<x|$ with $x\in V_J$ for some fixed $J$ and $\norm{x}_{\gH}=1$. For 
$$|\eta|<1/\|h^{-1}\|$$
we consider the Gibbs state
$$\Gamma_{\lambda,T,\eta}:=Z_{\lambda,\eta}(T)^{-1}e^{-\tfrac{\dGamma(h+\eta A)+\lambda\mathbb{W}}{T}},\quad Z_{\lambda,\eta}(T)=\tr_{\cF\gH)}\left[e^{-\tfrac{\dGamma(h+\eta A)+\lambda\mathbb{W}}{T}}\right].$$
The condition $|\eta|<1/\|h^{-1}\|$ ensures that $h+\eta A$ is invertible and, since
$(h+\eta A)^{-1}=h^{-1}\big(1+ \eta Ah^{-1}\big)^{-1},$
we conclude that $(h+\eta A)^{-1}\in\gS^p(\gH)$, that is, $\tr(h+\eta A)^{-p}<\ii$. We can therefore apply our results with $h$ replaced by $h+\eta A$ throughout.
We start by writing 
\begin{align}
\frac\eta{T}\tr_\gH\big[A\Gamma_{\lambda,T}^{(1)}\big]&=\frac1T\tr\big[\dGamma(h+\eta A)\Gamma_{\lambda,T}\big]+\frac\lambda{T}\tr\big[\mathbb{W}\Gamma_{\lambda,T}\big]\nn\\
&\qquad +\tr\big[\Gamma_{\lambda,T}\log\Gamma_{\lambda,T}\big]+\log\tr Z_\lambda(T)\nn\\
&\geq -\log\frac{Z_{\lambda,\eta}(T)}{Z_\lambda(T)}\nn\\
&=-\log\frac{Z_{\lambda,\eta}(T)}{Z_{0,\eta}(T)}-\log \frac{Z_{0}(T)}{Z_{\lambda}(T)}-\log\frac{Z_{0,\eta}(T)}{Z_{0}(T)}.\label{eq:lower_bd_FH}
\end{align}
We remark that $h+\eta A=h$ on $V_J^\perp$ and, therefore, by~\eqref{eq:partition_quasi_free} we have
\begin{multline*}
-\log\frac{Z_{0,\eta}(T)}{Z_{0}(T)}=\log\frac{\prod_{j=1}^J(1-e^{-\lambda_j(\eta)/T})}{\prod_{j=1}^J(1-e^{-\lambda_j/T})}\\
\underset{T\to\ii}{\longrightarrow} -\log\left(\int_{\gH^{1-p}}e^{-\eta|\<x,u\>|^2}\,d\mu_0(u)\right).
\end{multline*}
Passing to the limit $T\to\ii$ in~\eqref{eq:lower_bd_FH} and using the convergence of the relative partition function as well as $\Gamma_{\lambda,T}^{(1)}/T\to\gamma^{(1)}$, we deduce that
\begin{equation*}
\eta\tr_\gH\big[A\gamma^{(1)}\big]\geq -\log\frac{\displaystyle\int_{\gH^{1-p}}e^{-\eta|\<x,u\>|^2-\pscal{u\otimes u,w u\otimes u}/2}\,d\mu_0(u)}{\displaystyle\int_{\gH^{1-p}}e^{-\pscal{u\otimes u,w u\otimes u}/2}\,d\mu_0(u)}.
\end{equation*}
Now, taking $\eta\to0^+$ and then $\eta\to0^-$, we conclude that 
\begin{align*}
\tr_\gH\big[A\gamma^{(1)}\big]&=-\lim_{\eta\to0} \eta^{-1}\log\frac{\displaystyle\int_{\gH^{1-p}}e^{-\eta|\<x,u\>|^2-\pscal{u\otimes u,w u\otimes u}/2}\,d\mu_0(u)}{\displaystyle\int_{\gH^{1-p}}e^{-\pscal{u\otimes u,w u\otimes u}/2}\,d\mu_0(u)}\\
&=\frac{\displaystyle\int_{\gH^{1-p}}|\<x,u\>|^2e^{-\pscal{u\otimes u,w u\otimes u}/2}\,d\mu_0(u)}{\displaystyle\int_{\gH^{1-p}}e^{-\pscal{u\otimes u,w u\otimes u}/2}\,d\mu_0(u)}=\int_{\gH^{1-p}}\pscal{u,Au}\,d\mu(u).
\end{align*}
Since $x\in V_J$ is arbitrary, this proves that
$$P_J\gamma^{(1)}P_J=P_J\left(\int_{\gH^{1-p}}|u\>\<u|\,d\mu(u)\right)P_J.$$
Passing to the limit $J\to\ii$, we obtain~\eqref{eq:identify_limit_1PDM} and this concludes the proof of Theorem~\ref{thm:main2}.

\subsubsection*{\bf Step 6: strong convergence for $p=1$ and $k\geq2$}
Finally, we prove the strong convergence
\begin{align*} 
 \frac{k!}{T^k} \Gamma_{{\lambda},T}^{(k)} \underset{\substack{T\to\ii\\ T\lambda\to1}}{\longrightarrow} \int_{\gH} |u^{\otimes k}\rangle \langle u^{\otimes k}| d\mu(u)
\end{align*}
in the trace class for every $k\geq2$, when $p=1$. We recall that the uniform bound in~\eqref{eq:1PDM_bounded} already gave us the result for $k=1$, and that we know the \emph{weak-$\ast$} convergence for all $k\geq1$.We therefore need to show that
\begin{equation*} 
\limsup_{\substack{T\to\ii\\ T\lambda\to1}} \frac{k!}{T^k} \Tr_{\gH^k} \Big( \Gamma_{{\lambda},T}^{(k)} \Big) \le \int_{\gH} \|u\|_\gH^{2k} \,d\mu(u)
\end{equation*} 
and, since
\begin{equation*} 
\frac{k!}{T^k} \Tr_{\gH^k} \Big(\Gamma_{\lambda,T}^{(k)} \Big) = \frac{1}{T^k}\tr_{\F(\gH)} \Big( \cN(\cN-1)...(\cN-k+1) \Gamma_{\lambda,T}\Big),
\end{equation*}
it suffices to prove that
\begin{align} \label{eq:upper-bound-Tr-Gammak}
\limsup_{\substack{T\to\ii\\ T\lambda\to1}} \Tr_{\F(\gH)} \left( \frac{\cN^k}{T^k} \Gamma_{\lambda,T} \right) \le \int_{\gH} \|u\|_\gH^{2k} \,d\mu(u)
\end{align} 
for every $k\in \mathbb{N}$. We write
\begin{equation*}
\Tr \left( \frac{\cN^k}{T^k} \Gamma_{\lambda,T} \right)=\frac{\Tr \left( \frac{\cN^k}{T^k} e^{-\bH_\lambda/T} \right)}{\tr\left( \frac{\cN^k}{T^k} e^{-\bH_0/T} \right)}\cdot \frac{\Tr\left( \frac{\cN^k}{T^k} e^{-\bH_0/T} \right)}{\Tr\left(e^{-\bH_0/T} \right)}\cdot \frac{\Tr\left( e^{-\bH_0/T} \right)}{\Tr\left(e^{-\bH_\lambda/T} \right)}
\end{equation*}
and we recall that
$$\lim_{\substack{T\to\ii\\ T\lambda\to1}}\frac{\Tr\left( e^{-\bH_0/T} \right)}{\Tr\left(e^{-\bH_\lambda/T} \right)}=\frac{1}{z_r}$$
by Theorem~\ref{thm:main} and that
$$\lim_{T\to\ii}\frac{\Tr\left( \frac{\cN^k}{T^k} e^{-\bH_0/T} \right)}{\Tr\left(e^{-\bH_0/T} \right)}=\int_{\gH}\|u\|^{2k}_\gH\,d\mu_0(u)$$
by the strong convergence in the trace-class of $\Gamma^{(k)}_{0,T}$ in Lemma~\ref{lem:CV_free}.
Therefore, it remains to prove that
\begin{equation}
\limsup_{\substack{T\to\ii\\ T\lambda\to1}}\frac{\Tr \left( \frac{\cN^k}{T^k} e^{-\bH_\lambda/T} \right)}{\tr\left( \frac{\cN^k}{T^k} e^{-\bH_0/T} \right)}\leq \int_{\gH}e^{-\pscal{u\otimes u,w u\otimes u}}\,d\tilde\mu_0(u)
\label{eq:to_be_shown_PDM_strong}
\end{equation}
where
$$d\tilde\mu_0(u):=\frac{\|u\|^{2k}_{\gH}d\mu_0(u)}{\displaystyle\int_{\gH}\|u\|^{2k}_{\gH}d\mu_0(u)}.$$

The rest of the argument is now exactly the same as in Step 3 and we only sketch the proof. We have
\begin{equation*}
-\log\frac{\Tr \left( \frac{\cN^k}{T^k} e^{-\bH_\lambda/T} \right)}{\tr\left( \frac{\cN^k}{T^k} e^{-\bH_0/T} \right)}=\cH(\tilde\Gamma_{\lambda,T},\tilde\Gamma_{0,T})+\frac{1}{T^2}\tr[w\tilde\Gamma_{\lambda,T}^{(2)}]
\end{equation*}
with
$$
\widetilde{\Gamma}_{\lambda,T}:= \frac{(\cN/T)^k e^{-\bH_\lambda/T}}{ \tr_{\F(\gH)} \big[ (\cN/T)^k e^{-\bH_\lambda/T}\big]}
$$
(recall that $\cN$ commutes with $\bH_\lambda$).
We remark that by the weak convergence of $\Gamma_{\lambda,T}^{(k)}$ we have
$$\liminf_{\substack{T\to\ii\\ T\lambda\to1}}\frac{\tr_{\F(\gH)} \big[ (\cN/T)^k e^{-\bH_\lambda/T}\big]}{\tr_{\F(\gH)} \big[e^{-\bH_\lambda/T}\big]}\geq \int_{\gH}\norm{u}_\gH^{2k}d\mu(u)>0.$$
Then, the bounds in Lemma~\ref{lem:estim_1PDM_high_moments} tell us that 
$$\tr\big[(\cN/T)^{\ell}\tilde\Gamma_{\lambda,T}\big]\leq C_\ell$$
for all $\ell\geq1$. By Theorem~\ref{thm:deFinetti}, we can therefore consider a sequence $T\to\ii$ and $\lambda\sim1/T$ for which $\tilde\Gamma_{T,\lambda}$ admits $\tilde\nu$ as de Finetti measure. 

\begin{lemma}[\textbf{Higher moments and de Finetti measure}]\label{lem:deFinetti_higher_moments}\mbox{}\\
The sequence of states
$$
\widetilde{\Gamma}_{0,T}:= \frac{(\cN/T)^k e^{-\bH_0/T}}{ \tr_{\F(\gH)} \big[ (\cN/T)^k e^{-\bH_0/T}\big]}
$$
admits $\tilde\mu_0$ as de Finetti measure at scale $1/T$.
\end{lemma}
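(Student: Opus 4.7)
The plan is a Feynman--Hellmann / generating-function argument, obtained by perturbing the one-body Hamiltonian. For $\eta<\lambda_1:=\inf\sigma(h)$ introduce the family of free Gibbs states
\[
\Gamma_{0,T,\eta}:=\frac{e^{-\dGamma(h-\eta)/T}}{\tr_{\cF(\gH)}[e^{-\dGamma(h-\eta)/T}]}
\]
obtained by the shift $h\mapsto h-\eta$, so that $Z_0(T)\,e^{\eta\cN/T}\Gamma_{0,T}=Z_\eta(T)\Gamma_{0,T,\eta}$ with $Z_\eta(T):=\tr[e^{-\dGamma(h-\eta)/T}]$. Since $\tr h^{-1}<\ii$ implies $\tr(h-\eta)^{-1}<\ii$ for $\eta$ in a neighborhood of $0$, Corollary~\ref{cor:deF non int} applied to $(\Gamma_{0,T,\eta})$ gives that its de Finetti measure at scale $1/T$ is the Gaussian measure $\mu_{0,\eta}$ built on $h-\eta$. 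A direct computation of Gaussian integrals shows that $d\mu_{0,\eta}=Z_{\rm cl}(\eta)^{-1}e^{\eta\|u\|^2}d\mu_0$ with $Z_{\rm cl}(\eta):=\int e^{\eta\|u\|^2}d\mu_0(u)=\prod_j\lambda_j/(\lambda_j-\eta)$.

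For any finite-dimensional $V\subset\gH$ and $b\in C_b^0(V)$ I set $Z_T(\eta):=Z_\eta(T)/Z_0(T)=\tr[e^{\eta\cN/T}\Gamma_{0,T}]$ and
\[
f_T(\eta):=\tr\big[\mathbb{B}_{1/T}\,e^{\eta\cN/T}\Gamma_{0,T}\big]=Z_T(\eta)\,\tr\big[\mathbb{B}_{1/T}\Gamma_{0,T,\eta}\big].
\]
Lemma~\ref{lem:quasi-free} gives $\log Z_T(\eta)=\tr[\log(1-e^{-h/T})-\log(1-e^{-(h-\eta)/T})]$; the mean-value theorem combined with $e^y-1\ge y$ yields a spectral domination by $|\eta|(h-\eta)^{-1}$, whose trace is finite for $\eta$ small, so dominated convergence produces $Z_T(\eta)\to Z_{\rm cl}(\eta)$. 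Combining this with Corollary~\ref{cor:deF non int} applied to $\Gamma_{0,T,\eta}$ yields the pointwise limit
\[
f_T(\eta)\underset{T\to\ii}\longrightarrow f_\ii(\eta):=\int_\gH b(P_Vu)\,e^{\eta\|u\|^2}\,d\mu_0(u)
\]
for each real $\eta$ near $0$.

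The last step is to upgrade this to convergence of the $k$-th $\eta$-derivative at $\eta=0$, which directly yields the lemma because
\[
f_T^{(k)}(0)=\tr\big[\mathbb{B}_{1/T}(\cN/T)^k\Gamma_{0,T}\big],\qquad f_\ii^{(k)}(0)=\int b(P_Vu)\,\|u\|^{2k}\,d\mu_0(u).
\]
Each $f_T$ extends holomorphically to the half-plane $\{\Re\eta<\lambda_1\}$, and using the commutativity of $\cN$ with $\Gamma_{0,T}$ together with $|\tr[AB]|\le\|A\|\,\tr|B|$ one gets the uniform-in-$T$ bound $|f_T(\eta)|\le\|b\|_{L^\ii(V)}\,Z_T(\Re\eta)$, itself uniformly bounded on compact subsets by the previous step. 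Vitali's convergence theorem then yields locally uniform convergence $f_T\to f_\ii$, hence convergence of all derivatives at $\eta=0$. Applying the same argument to $Z_T(\eta)$ itself gives $C_T=\tr[(\cN/T)^k\Gamma_{0,T}]=Z_T^{(k)}(0)\to Z_{\rm cl}^{(k)}(0)=\int\|u\|^{2k}d\mu_0=:Z_\mu$, so finally
\[
\tr\big[\mathbb{B}_{1/T}\tilde\Gamma_{0,T}\big]=\frac{f_T^{(k)}(0)}{C_T}\longrightarrow\frac{f_\ii^{(k)}(0)}{Z_\mu}=\int b\,d\tilde\mu_0,
\]
which is precisely the defining property of $\tilde\mu_0$ as the de Finetti measure of $(\tilde\Gamma_{0,T})$ at scale $1/T$. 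The main obstacle is securing the uniform-in-$T$ bound on $|f_T|$ on a complex neighborhood of $0$: this is where the trace-class hypothesis $\tr h^{-1}<\ii$ enters decisively, through its control of $Z_T(\Re\eta)$.
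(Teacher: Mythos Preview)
Your proof is correct and takes a genuinely different route from the paper's. The paper proceeds by direct combinatorial computation: it expands $\cN^k=\big(\sum_j a_j^\dagger a_j\big)^k$ as a multinomial sum, uses the tensor factorization of $\Gamma_{0,T}$ over the eigenmodes of $h$ to evaluate each term $\tr\big[\prod_j(a_j^\dagger a_j/T)^{n_j+s_j}\Gamma_{0,T}\big]$ explicitly, and then matches the resulting product $\prod_j(n_j+s_j)!/\lambda_j^{n_j+s_j}$ with the corresponding Gaussian moment of $\mu_0$. Your generating-function approach is conceptually cleaner: by recognising that $(\cN/T)^k e^{-\bH_0/T}$ is the $k$-th $\eta$-derivative at $\eta=0$ of the one-parameter family $e^{\eta\cN/T}e^{-\bH_0/T}\propto\Gamma_{0,T,\eta}$, you reduce the statement to Corollary~\ref{cor:deF non int} (already proved, now applied to $h-\eta$) plus a Vitali argument. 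This avoids all the multinomial bookkeeping and makes transparent why the trace-class hypothesis $\tr h^{-1}<\ii$ is exactly what is needed: it controls $Z_T(\Re\eta)$ uniformly in $T$ on a complex neighbourhood of $0$, which is precisely the input to Vitali. The paper's computation, by contrast, is more self-contained (no complex analysis) and produces explicit formulas for all the moments, which may be of independent use; your argument is more easily adaptable to other weights than $\cN^k$.
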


The proof of the lemma is provided in Appendix~\ref{sec:proof_lemma_N_k}.
Using Lemma~\ref{lem:deFinetti_higher_moments}, we get as before
\begin{align*}
\liminf_{\substack{T\to\ii\\ T\lambda\to1}}-\log\frac{\Tr \left( \frac{\cN^k}{T^k} e^{-\bH_\lambda/T} \right)}{\tr\left( \frac{\cN^k}{T^k} e^{-\bH_0/T} \right)}&\geq \cHcl(\tilde\nu,\tilde\mu_0)+\frac12\int_{\gH}\pscal{u\otimes u ,w u\otimes u}d\tilde\nu(u)\\
&\geq -\log\left(\int_{\gH}e^{-\pscal{u\otimes u,w u\otimes u}}\,d\tilde\mu_0(u)\right)
\end{align*}
which concludes the proof of~\eqref{eq:to_be_shown_PDM_strong}, hence of the strong convergence of the $k$-particle density matrices and that of Theorem~\ref{thm:main}.\qed

\appendix

\section{Free Gibbs states}\label{sec:proof_quasi_free}

Lemma~\ref{lem:quasi-free} follows from a well-known computation (Wick's theorem). It is convenient to write $a_i:=a(u_i)$ and to use some algebraic properties of quasi free states. First we write $\gH=\bigoplus_{i\geq1}(\C u_i)$ and, as we have recalled above, this implies $\cF(\gH)\simeq \bigotimes_{i\geq1}\cF(\C u_i)$. Next we use that $\dGamma(h)$ can be expressed using creation and annihilation operators as
$$\dGamma(h)=0 \oplus \bigoplus_{m=1}^\infty \left(\sum_{i=1}^m h_i\right)=\sum_{i\geq1}\lambda_i a_i^\dagger a_i.$$
We deduce that $\exp(-\dGamma(h)/T)\simeq\bigotimes_{i\geq1}\exp(-\lambda_i a_i^\dagger a_i/T)$. In the Fock space $\cF(\C u_i)$, we have simply $a_i^\dagger a_i=0\oplus 1\oplus2\oplus\cdots$, the number operator. In particular, we find that
$$\tr_{\cF(\C u_i)}\left[\exp(-\lambda_i a_i^\dagger a_i/T)\right]=\sum_{n\geq0}e^{-\lambda_in /T}=\frac{1}{1-e^{-\lambda_i /T}},$$
and this gives 
$$
\tr_{\cF(\gH)}\left[\exp\left(-\frac{\dGamma(h)}{T}\right)\right]=\prod_{i\geq1} \tr_{\cF(\C u_i)}\left[\exp\left(-\frac{\lambda_i a_i^\dagger a_i}{T}\right)\right]=\prod_{i\geq1}\frac{1}{1-e^{-\lambda_i /T}}.
$$
We also obtain
$$\Gamma_{0,T}:=\frac{e^{-\dGamma(h)/T}}{\tr_{\cF(\gH)}\big(e^{-\dGamma(h)/T}\big)}=\bigotimes_{i\geq1}(1-e^{-\lambda_i /T})\exp\left(-\frac{\lambda_i a_i^\dagger a_i}{T}\right).$$
Now we compute the density matrices of $\Gamma_{0,T}$, using that
\begin{align*}
\tr_{\cF(\C u_i)}\left[(a_i^\dagger)^m(a_i)^{m'}\exp(-\lambda_i a_i^\dagger a_i/T)\right]&=\delta_{mm'}\sum_{n\geq m}e^{-\lambda_in/T}\norm{(a_i)^m(u_i)^{\otimes n}}^2\\
&=\delta_{mm'}\sum_{n\geq m}e^{-\lambda_in/T}\frac{n!}{(n-m)!}\\
&=\frac{\delta_{mm'}\, m!}{\big(e^{\lambda_i/T}-1\big)^{m}\big(1-e^{-\lambda_i/T}\big)}.
\end{align*}
By definition of $\Gamma^{(k)}_{0,T}$, we deduce that 
\begin{equation*}
\frac{\pscal{u_{j_1}\otimes_s \cdots \otimes_s u_{j_k},\Gamma_{0,T}^{(k)}u_{i_1}\otimes_s \cdots \otimes_s u_{i_k}}}{ \norm{u_{i_1}\otimes_s \cdots \otimes_s u_{i_k}}^2}= \prod_{\ell=1}^k\frac{\delta_{i_\ell j_\ell}}{e^{\lambda_{i_\ell}/T}-1}.
\end{equation*}
Therefore we have shown that
\begin{align*}
\Gamma_{0,T}^{(k)}&=\sum_{i_1\leq i_2\leq\cdots\leq i_k}\left(\prod_{\ell=1}^k\frac{1}{e^{\lambda_{i_\ell}/T}-1}\right)\frac{|u_{i_1}\otimes_s \cdots \otimes_s u_{i_k}\rangle\langle u_{i_1}\otimes_s \cdots \otimes_s u_{i_k}|}{\norm{u_{i_1}\otimes_s \cdots \otimes_s u_{i_k}}^2}\nn\\
&=\left(\frac{1}{e^{h/T}-1}\right)^{\otimes k}
\end{align*}
which concludes the proof of Lemma~\ref{lem:quasi-free}.
\qed

\section{de Finetti measure for higher moments}\label{sec:proof_lemma_N_k}

Here we prove Lemma~\ref{lem:deFinetti_higher_moments}. Since we already know from Lemma~\ref{lem:CV_free} that 
\begin{align} \label{eq:strong-cv-Gamma0T}
\lim_{T\to \infty}  \frac{\Tr \left( (\cN/T)^k e^{-\bH_0/T} \right)}{\Tr e^{-\bH_0/T}} = \lim_{T\to \infty} \frac{k!}{T^k} \Tr \Big( \Gamma_{0,T}^{(k)}\Big)  =\int_{\gH} \|u\|^{2k} d\mu_0(u)
\end{align}
for every $k,\ell\in \mathbb{N}$ and since $\tilde\Gamma_{0,T}$ commutes with $\cN$, it suffices to prove that 
\begin{equation}
\lim_{T\to\ii} \Tr \left[ \prod_{i=1}^\ii \left(\frac{a^\dagger_{i}a_{i}}{T}\right)^{n_i}\Big( \frac{\cN}{T} \Big)^k \Gamma_{0,T} \right]\\
=\int_\gH \prod_{i=1}^\ii|u_{i}|^{2n_i}\norm{u}^{2k}\,d\mu_0(u)
\label{eq:strong-cv-expectation}
\end{equation}
for any set of integers $n_i$, with $N:=\sum_i n_i<\ii$ (hence with only finitely many non-zero). Note that the density matrices require to have all the creation operators on the left, but using the canonical commutation relations and the bounds on $\tr[(\cN/T)^{k+\ell}\Gamma_{0,T}]$, we see that the error terms obtained by commuting them are small in the limit $T\to\ii$.

Now we use the factorization properties of quasi-free states and compute, as in the previous section, 
\begin{align*}
\Tr \left[  \prod_{i=1}^\ii \left(\frac{a^\dagger_{i}a_{i}}{T}\right)^{n_i}\Big( \frac{\cN}{T} \Big)^k \Gamma_{0,T} \right]
&= \sum_{\substack{s_j\ge 0\\ \sum s_j =k}}\Tr \left[ \prod_j\left(\frac{a^\dagger_{j}a_j}{T}\right)^{n_j+s_j}\Gamma_{0,T} \right]\\
&= \sum_{\substack{s_j\ge 0\\ \sum s_j =k}}\prod_j\frac{(n_j+s_j)!}{(T(e^{\lambda_j/T}-1))^{n_j+s_j}} .
\end{align*}
Here in the first line we have used
\begin{align} \label{eq:Nk-expansion}
\cN^k =\left(\sum_{i=1}^\infty a_i^\dagger a_i \right)^k = \sum_{\substack{s_j\ge 0\\ \sum s_j =k}} \prod_i(a_i^\dagger a_i)^{s_i}.
\end{align}
Using $T(e^{\lambda_i/T}-1)\to \lambda_i$, we find that
\begin{equation}
\lim_{T\to\ii}\Tr \left[ \prod_{i=1}^\ii \left(\frac{a^\dagger_{i}a_{i}}{T}\right)^{n_i}\Big( \frac{\cN}{T} \Big)^k \Gamma_{0,T} \right]
= \sum_{\substack{s_j\ge 0\\ \sum s_j =k}} \left( \prod_{i=1}^\infty  \frac{(n_i+s_i)!}{\lambda_i^{n_i+s_i}} \right).
\label{eq:cv-nTn-new-Gamma}
\end{equation}
On the other hand, we can use that $\mu_0$ is a Gaussian measure to deduce 
\begin{align} \label{eq:cv-ngamman-new-Gamma}
\int_\gH \prod_{i=1}^\ii|u_{i}|^{2n_i}\norm{u}^{2k}\,d\mu_0(u) &=  \sum_{s_j\ge 0, \sum s_j=k}\int_{\gH} \left( \prod_{i=1}^\infty |u_i|^{2 (n_i+s_i) }\right) \,d\mu_0(u)   \nn\\
&=\sum_{s_j\ge 0, \sum s_j=k} \left( \prod_{i=1}^\infty \frac{(n_i+s_i)!}{\lambda_i^{n_i+s_i}} \right).
\end{align}
Here we have used
$$
|u|^{2k} = \left( \sum_{i=1}^\infty |u_i|^2\right)^k = \sum_{\substack{s_j\ge 0\\ \sum s_j =k}} |u_j|^{2s_j} 
$$
which is analogous to (\ref{eq:Nk-expansion}). This finishes the proof of Lemma~\ref{lem:deFinetti_higher_moments}.\qed


\begin{thebibliography}{10}

\bibitem{AlbHoe-74}
{\sc S.~Albeverio and R.~H{\o}egh-Krohn}, {\em The {W}ightman axioms and the
  mass gap for strong interactions of exponential type in two-dimensional
  space-time}, J. Functional Analysis, 16 (1974), pp.~39--82.

\bibitem{Ammari-HDR}
{\sc Z.~Ammari}, {\em Systèmes hamiltoniens en théorie quantique des champs :
  dynamique asymptotique et limite classique.}
\newblock Habilitation {\`a} Diriger des Recherches, University of Rennes I,
  February 2013.

\bibitem{AmmNie-08}
{\sc Z.~Ammari and F.~Nier}, {\em Mean field limit for bosons and infinite
  dimensional phase-space analysis}, Ann. Henri Poincar\'e, 9 (2008),
  pp.~1503--1574.

\bibitem{AmmNie-09}
{\sc Z.~Ammari and F.~Nier}, {\em Mean field limit for bosons and propagation
  of {W}igner measures}, J. Math. Phys., 50 (2009), p.~042107.

\bibitem{AmmNie-11}
{\sc Z.~Ammari and F.~Nier}, {\em {Mean field propagation of Wigner measures
  and BBGKY hierarchies for general bosonic states}}, J. Math. Pures Appl., 95
  (2011), pp.~585--626.

\bibitem{BacLieSol-94}
{\sc V.~Bach, E.~H. Lieb, and J.~P. Solovej}, {\em Generalized {H}artree-{F}ock
  theory and the {H}ubbard model}, J. Statist. Phys., 76 (1994), pp.~3--89.

\bibitem{BenLie-83}
{\sc R.~{Benguria} and E.~H. {Lieb}}, {\em {Proof of the Stability of Highly
  Negative Ions in the Absence of the Pauli Principle}}, Phys. Rev. Lett., 50
  (1983), pp.~1771--1774.

\bibitem{Berezin-72}
{\sc F.~A. Berezin}, {\em Convex functions of operators}, Mat. Sb. (N.S.),
  88(130) (1972), pp.~268--276.

\bibitem{Bogachev}
{\sc V.~I. Bogachev}, {\em Gaussian measures}, no.~62, American Mathematical
  Soc., 1998.

\bibitem{Bourgain-94}
{\sc J.~Bourgain}, {\em Periodic nonlinear {S}chr\"odinger equation and
  invariant measures}, Comm. Math. Phys., 166 (1994), pp.~1--26.

\bibitem{Bourgain-96}
\leavevmode\vrule height 2pt depth -1.6pt width 23pt, {\em Invariant measures
  for the 2d-defocusing nonlinear {S}chr\"odinger equation}, Comm. Math. Phys.,
  176 (1996), pp.~421--445.

\bibitem{Bourgain-97}
\leavevmode\vrule height 2pt depth -1.6pt width 23pt, {\em {Invariant measures
  for the Gross-Piatevskii equation}}, Journal de Mathématiques Pures et
  Appliquées, 76 (1997), pp.~649--02.

\bibitem{Bourgain-00}
\leavevmode\vrule height 2pt depth -1.6pt width 23pt, {\em Invariant measures
  for nls in infinite volume}, Communications in Mathematical Physics, 210
  (2000), pp.~605--620.

\bibitem{BurThoTzv-10}
{\sc N.~{Burq}, L.~{Thomann}, and N.~{Tzvetkov}}, {\em {Long time dynamics for
  the one dimensional non linear Schr$\backslash$''odinger equation}}, ArXiv
  e-prints,  (2010).

\bibitem{BurTzv-08}
{\sc N.~Burq and N.~Tzvetkov}, {\em Random data {C}auchy theory for
  supercritical wave equations. {I}. {L}ocal theory}, Invent. Math., 173
  (2008), pp.~449--475.

\bibitem{Carlen-10}
{\sc E.~Carlen}, {\em Trace inequalities and quantum entropy: an introductory
  course}, in Entropy and the Quantum, R.~Sims and D.~Ueltschi, eds., vol.~529
  of Contemporary Mathematics, American Mathematical Society, 2010,
  pp.~73--140.
\newblock Arizona School of Analysis with Applications, March 16-20, 2009,
  University of Arizona.

\bibitem{CarFroLeb-14}
{\sc E.~Carlen, J.~Fr\"ohlich, and J.~L. Lebowitz}, {\em Exponential relaxation
  to equilibrium for a one-dimensional focusing non-linear {S}chr\"odinger
  equation with noise}, preprint arXiv,  (2014).

\bibitem{Chiribella-11}
{\sc G.~Chiribella}, {\em On quantum estimation, quantum cloning and finite
  quantum de {F}inetti theorems}, in Theory of Quantum Computation,
  Communication, and Cryptography, vol.~6519 of Lecture Notes in Computer
  Science, Springer, 2011.

\bibitem{ChrKonMitRen-07}
{\sc M.~Christandl, R.~K{\"o}nig, G.~Mitchison, and R.~Renner}, {\em
  One-and-a-half quantum de {F}inetti theorems}, Comm. Math. Phys., 273 (2007),
  pp.~473--498.

\bibitem{ColOh-12}
{\sc J.~Colliander and T.~Oh}, {\em Almost sure well-posedness of the cubic
  nonlinear schr\"odinger equation below $l^2(\mathbb{T})$}, Duke Math. J., 161
  (2012), pp.~367--414.

\bibitem{ComRob-12}
{\sc M.~Combescure and D.~Robert}, {\em Coherent states and applications in
  mathematical physics}, Theoretical and Mathematical Physics, Springer,
  Dordrecht, 2012.

\bibitem{Suzzoni-11}
{\sc A.-S. {de Suzzoni}}, {\em Invariant measure for the cubic wave equation on
  the unit ball of $\mathbb{R}^3$}, Dyn. Partial Differ. Equ., 8 (2011),
  pp.~127--147.

\bibitem{DolFelLosPat-06}
{\sc J.~Dolbeault, P.~Felmer, M.~Loss, and E.~Paturel}, {\em Lieb-{T}hirring
  type inequalities and {G}agliardo-{N}irenberg inequalities for systems}, J.
  Funct. Anal., 238 (2006), pp.~193--220.

\bibitem{GliJaf-87}
{\sc J.~Glimm and A.~Jaffe}, {\em Quantum Physics: A Functional Integral Point
  of View}, Springer-Verlag, 1987.

\bibitem{GotSch-09}
{\sc A.~Gottlieb and T.~Schumm}, {\em Quantum noise thermometry for bosonic
  {J}osephson junctions in the mean-field regime}, Phys. Rev. A, 79 (2009),
  p.~063601.

\bibitem{Gottlieb-05}
{\sc A.~D. Gottlieb}, {\em Examples of bosonic de {F}inetti states over finite
  dimensional {H}ilbert spaces}, J. Stat. Phys., 121 (2005), pp.~497--509.

\bibitem{GreSei-13}
{\sc P.~Grech and R.~Seiringer}, {\em The excitation spectrum for weakly
  interacting bosons in a trap}, Comm. Math. Phys., 322 (2013), pp.~559--591.

\bibitem{GueRosSim-75}
{\sc F.~Guerra, L.~Rosen, and B.~Simon}, {\em The {${\bf P}(\phi)_{2}$}
  {E}uclidean quantum field theory as classical statistical mechanics. {I},
  {II}}, Ann. of Math. (2), 101 (1975), pp.~111--189; ibid. (2) 101\ (1975),
  191--259.

\bibitem{HaiLewSol_2-09}
{\sc C.~Hainzl, M.~Lewin, and J.~P. Solovej}, {\em The thermodynamic limit of
  quantum {C}oulomb systems. {P}art {II}. {A}pplications}, Advances in Math.,
  221 (2009), pp.~488--546.

\bibitem{Harrow-13}
{\sc A.~Harrow}, {\em The church of the symmetric subspace}, preprint arXiv,
  (2013).

\bibitem{HudMoo-75}
{\sc R.~L. Hudson and G.~R. Moody}, {\em Locally normal symmetric states and an
  analogue of de {F}inetti's theorem}, Z. Wahrscheinlichkeitstheor. und Verw.
  Gebiete, 33 (1975/76), pp.~343--351.

\bibitem{JulGotMarPol-13}
{\sc B.~Juli\'a-D\'iaz, A.~Gottlieb, J.~Martorell, and A.~Polls}, {\em Quantum
  and thermal fluctuations in bosonic {J}osephson junctions}, Phys. Rev. A, 88
  (2013), p.~033601.

\bibitem{Knowles-thesis}
{\sc A.~Knowles}, {\em Limiting dynamics in large quantum systems.}
\newblock Doctoral thesis, ETH Z\"urich.

\bibitem{LebRosSpe-88}
{\sc J.~L. Lebowitz, H.~A. Rose, and E.~R. Speer}, {\em Statistical mechanics
  of the nonlinear {S}chr\"odinger equation}, J. Statist. Phys., 50 (1988),
  pp.~657--687.

\bibitem{Lewin-11}
{\sc M.~Lewin}, {\em Geometric methods for nonlinear many-body quantum
  systems}, J. Funct. Anal., 260 (2011), pp.~3535--3595.

\bibitem{LewNamRou-14}
{\sc M.~Lewin, P.~T. Nam, and N.~Rougerie}, {\em Derivation of {H}artree's
  theory for generic mean-field {B}ose gases}, Adv. Math., 254 (2014),
  pp.~570--621.

\bibitem{LewNamRou-14c}
\leavevmode\vrule height 2pt depth -1.6pt width 23pt, {\em The mean-field
  approximation and the non-linear {S}chr\"odinger functional for trapped
  {B}ose gases}, Trans. Amer. Math. Soc, in press (2014).

\bibitem{LewNamRou-14b}
\leavevmode\vrule height 2pt depth -1.6pt width 23pt, {\em Remarks on the
  quantum de {F}inetti theorem for bosonic systems}, Appl. Math. Res. Express
  (AMRX), in press (2014).

\bibitem{LewSab-14}
{\sc M.~Lewin and J.~Sabin}, {\em A family of monotone quantum relative
  entropies}, Lett. Math. Phys., 104 (2014), pp.~691--705.

\bibitem{Lieb-73b}
{\sc E.~H. Lieb}, {\em The classical limit of quantum spin systems}, Comm.
  Math. Phys., 31 (1973), pp.~327--340.

\bibitem{Lieb-73}
{\sc E.~H. Lieb}, {\em Convex trace functions and the {W}igner-{Y}anase-{D}yson
  conjecture}, Advances in Math., 11 (1973), pp.~267--288.

\bibitem{LieRus-73a}
{\sc E.~H. Lieb and M.~B. Ruskai}, {\em A fundamental property of
  quantum-mechanical entropy}, Phys. Rev. Lett., 30 (1973), pp.~434--436.

\bibitem{LieRus-73b}
\leavevmode\vrule height 2pt depth -1.6pt width 23pt, {\em Proof of the strong
  subadditivity of quantum-mechanical entropy}, J. Math. Phys., 14 (1973),
  pp.~1938--1941.
\newblock With an appendix by B. Simon.

\bibitem{LieSeiSolYng-05}
{\sc E.~H. Lieb, R.~Seiringer, J.~P. Solovej, and J.~Yngvason}, {\em The
  mathematics of the {B}ose gas and its condensation}, Oberwolfach {S}eminars,
  Birkh{\"a}user, 2005.

\bibitem{LieSeiYng-05}
{\sc E.~H. Lieb, R.~Seiringer, and J.~Yngvason}, {\em {Justification of
  $c$-Number Substitutions in Bosonic Hamiltonians}}, Phys. Rev. Lett., 94
  (2005), p.~080401.

\bibitem{LieYau-87}
{\sc E.~H. Lieb and H.-T. Yau}, {\em The {C}handrasekhar theory of stellar
  collapse as the limit of quantum mechanics}, Commun. Math. Phys., 112 (1987),
  pp.~147--174.

\bibitem{LorHirBet-11}
{\sc J.~L{\"o}rinczi, F.~Hiroshima, and V.~Betz}, {\em {Feynman-Kac-Type
  Theorems and Gibbs Measures on Path Space: With Applications to Rigorous
  Quantum Field Theory}}, Gruyter - de Gruyter Studies in Mathematics, Walter
  de Gruyter GmbH \& Company KG, 2011.

\bibitem{Nelson-73}
{\sc E.~Nelson}, {\em Construction of quantum fields from {M}arkoff fields}, J.
  Funct. Anal., 12 (1973), pp.~97 -- 112.

\bibitem{Nelson-73b}
{\sc E.~Nelson}, {\em The free {M}arkoff field}, J. Functional Analysis, 12
  (1973), pp.~211--227.

\bibitem{OhQua-13}
{\sc T.~Oh and J.~Quastel}, {\em {On invariant Gibbs measures conditioned on
  mass and momentum}}, Journal of the Mathematical Society of Japan, 65 (2013),
  pp.~13--35.

\bibitem{OhyPet-93}
{\sc M.~Ohya and D.~Petz}, {\em Quantum entropy and its use}, Texts and
  Monographs in Physics, Springer-Verlag, Berlin, 1993.

\bibitem{Petz-03}
{\sc D.~Petz}, {\em Monotonicity of quantum relative entropy revisited}, Rev.
  Math. Phys., 15 (2003), pp.~79--91.

\bibitem{Rougerie-cdf}
{\sc N.~Rougerie}, {\em Théorèmes de de {F}inetti, limites de champ moyen et
  condensation de {B}ose-{E}instein}.
\newblock Lecture notes, 2014.

\bibitem{Ruelle}
{\sc D.~Ruelle}, {\em {Statistical mechanics. Rigorous results}}, {Singapore:
  World Scientific. London: Imperial College Press }, 1999.

\bibitem{Simon-74}
{\sc B.~Simon}, {\em The {$P(\phi )_{2}$} {E}uclidean (quantum) field theory},
  Princeton University Press, Princeton, N.J., 1974.
\newblock Princeton Series in Physics.

\bibitem{Simon-79}
{\sc B.~Simon}, {\em Trace ideals and their applications}, vol.~35 of London
  Mathematical Society Lecture Note Series, Cambridge University Press,
  Cambridge, 1979.

\bibitem{Simon-80}
{\sc B.~Simon}, {\em The classical limit of quantum partition functions}, Comm.
  Math. Phys., 71 (1980), pp.~247--276.

\bibitem{Skorokhod-74}
{\sc A.~Skorokhod}, {\em Integration in {H}ilbert space}, Ergebnisse der
  Mathematik und ihrer Grenzgebiete, Springer-Verlag, 1974.

\bibitem{Stinespring-55}
{\sc W.~F. Stinespring}, {\em Positive functions on $c^*$-algebras}, Proc.
  Amer. Math. Soc., 6 (1955), pp.~pp. 211--216.

\bibitem{Stormer-69}
{\sc E.~St{\o}rmer}, {\em Symmetric states of infinite tensor products of
  {$C^{\ast} $}-algebras}, J. Functional Analysis, 3 (1969), pp.~48--68.

\bibitem{Summers-12}
{\sc S.~J. {Summers}}, {\em {A Perspective on Constructive Quantum Field
  Theory}}, ArXiv e-prints,  (2012).

\bibitem{ThoTzv-10}
{\sc L.~Thomann and N.~Tzvetkov}, {\em Gibbs measure for the periodic
  derivative nonlinear schrödinger equation}, Nonlinearity, 23 (2010), p.~2771.

\bibitem{Tzvetkov-08}
{\sc N.~Tzvetkov}, {\em Invariant measures for the defocusing nonlinear
  {S}chr\"odinger equation}, Ann. Inst. Fourier (Grenoble), 58 (2008),
  pp.~2543--2604.

\bibitem{VelWig-73}
{\sc G.~Velo and A.~Wightman}, eds., {\em {Constructive quantum field theory:
  The 1973 Ettore Majorana international school of mathematical physics}},
  Lecture notes in physics, Springer-Verlag, 1973.

\bibitem{Wehrl-78}
{\sc A.~Wehrl}, {\em General properties of entropy}, Rev. Modern Phys., 50
  (1978), pp.~221--260.

\end{thebibliography}

\end{document}